\def\doi{9(3:6)2013}
\newcommand\mto{\rightarrowtail}
\newcommand{\ra}{\rightarrow}
\newcommand{\Set}{\mbox{Set}}
\newcommand{\Fam}{\mbox{Fam}}
\newcommand{\hash}{\dagger}
\newcommand{\mcal}[1]{\mathcal{#1}}
\newcommand{\C}{\mcal{C}}
\newcommand{\cech}[1]{\check{#1}}
\newcommand{\proofrule}[2]{\ensuremath{
  \begin{array}[t]{@{}c@{}}
    \begin{array}[t]{lllll} 
      #1
    \end{array}\\
  \hline 
     #2
  \end{array}}}
\newcommand\pbc[1][dr]{\save*!/#1-1.8pc/#1:(-1,1)@^{|-}\restore}
\newcommand\lpbc[1][dr]{\save*!/#1-1.2pc/#1:(-1,1)@^{|-}\restore}
\newcommand\E{{\mathcal E}}
\newcommand\B{{\mathcal B}}
\newcommand\A{{\mathcal A}}
\newcommand\D{{\mathcal D}}
\newcommand\Ps{\mathscr{P}}
\newcommand\Pfi{\Ps\!\!{}_{\mathit{fin}}}
\newcommand\ascl{\mathrm{ASub}(\cl)}
\newcommand\Lam{\mathit{Lam}}
\newcommand\Fin{\mathit{Fin}}
\newcommand\Nat{\mathit{Nat}}
\newcommand\cl{\mathit{CL}}
\newcommand\alg{\mathit{Alg}}
\newcommand\dalg{\mbox{-}\!\alg}
\newcommand\coalg{\mathit{CoAlg}}
\newcommand\dcoalg{\mbox{-}\!\coalg}
\newcommand\ti{\!\times\!}
\theoremstyle{plain}
\newtheorem{theorem}{Theorem}[section]
\newtheorem{corollary}[theorem]{Corollary}
\newtheorem{lemma}[theorem]{Lemma}
\theoremstyle{remark}
\theoremstyle{definition}
\newtheorem{definition}[theorem]{Definition}
\newtheorem{example}[theorem]{Example}
\title{Indexed Induction and Coinduction,
  Fibrationally\vspace*{-0.15in}} 
\author[N.~Ghani]{Neil Ghani\rsuper a} \address{{\lsuper{a,c}}University of Strathclyde,
  Glasgow G1 1XH, UK} \email{\{neil.ghani, clement.fumex\}@strath.ac.uk}
\author[P.~Johann]{Patricia Johann\rsuper b} \address{{\lsuper b}Appalachian State University, Boone,
  NC 28608, USA} \email{johannp@cs.appstate.edu}
\author[C.~Fumex]{Cl\'ement Fumex\rsuper c} 
 \address{\vskip-6 pt}
\begin{document}

\begin{abstract}
  This paper extends the fibrational approach to induction and
  coinduction pioneered by Hermida and Jacobs, and developed by the
  current authors, in two key directions. First, we present a dual to
  the sound induction rule for inductive types that we developed
  previously. That is, we present a sound coinduction rule for any
  data type arising as the carrier of the final coalgebra of a
  functor, thus relaxing Hermida and Jacobs' restriction to polynomial
  functors. To achieve this we introduce the notion of a
  \emph{quotient category with equality} (QCE) that i) abstracts the
  standard notion of a fibration of relations constructed from a given
  fibration; and ii) plays a role in the theory of coinduction dual to
  that played by a comprehension category with unit (CCU) in the
  theory of induction. Secondly, we show that inductive and
  coinductive indexed types also admit sound induction and coinduction
  rules. Indexed data types often arise as carriers of initial
  algebras and final coalgebras of functors on slice categories, so we
  give sufficient conditions under which we can construct, from a CCU
  (QCE) $U:\E \ra \B$, a fibration with base $\B/I$ that models
  indexing by $I$ and is also a CCU (resp., QCE). We finish the paper
  by considering the more general case of sound induction and
  coinduction rules for indexed data types when the indexing is
  itself given by a fibration.
\end{abstract}

\keywords{induction, coinduction, fibrations}
\subjclass{D.3.1, F.3.2}
\ACMCCS{[{\bf Theory of computation}]: Semantics of reasoning---Program semantics---Categorical semantics}

\maketitle

\section{Introduction}

Iteration operators provide a uniform way to express common and
naturally occurring patterns of recursion over inductive 
types. Expressing recursion via iteration operators makes code easier
to read, write, and understand; facilitates code reuse; guarantees
properties of programs such as totality and termination; and supports
optimising program transformations such as fold fusion and short cut
fusion.  Categorically, iteration operators arise from the initial
algebra semantics of data types: the constructors of an inductive 
type are modelled as a functor $F$, the data type itself is modelled
as the carrier $\mu F$ of the initial $F$-algebra $\mathit{in} : F(\mu
F) \ra \mu F$, and the iteration operator $\mathit{fold} : (F A \ra A)
\ra \mu F \ra A$ for $\mu F$ is the map sending each $F$-algebra $h :
FA \ra A$ to the unique $F$-algebra morphism from $\mathit{in}$ to
$h$.

Initial algebra semantics therefore provides a comprehensive theory of
iteration that is i) {\em principled}, in that it ensures that
programs have rigorous mathematical foundations that can be used to
give them meaning; ii) {\em expressive}, in that it is applicable to
{\em all} inductive types --- i.e., all types that are carriers of
initial algebras --- rather than just to syntactically defined classes
of data types such as polynomial ones; and iii) {\em sound}, in that
it is valid in any model --- set-theoretic, domain-theoretic,
realisability, etc. --- in which data types are interpreted as the
carriers of initial algebras.

Final coalgebra semantics gives an equally comprehensive understanding
of coinductive types. The destructors of a coinductive type are
modelled as a functor $F$, the data type itself is modelled as the
carrier $\nu F$ of the final $F$-coalgebra $\mathit{out} : \nu F \ra
F(\nu F)$, and the coiteration operator $\mathit{unfold} : (A \ra F A)
\ra A \ra \nu F$ for $\nu F$ is the map sending each $F$-coalgebra
$k:A \ra FA$ to the unique $F$-coalgebra morphism from $k$ to
$\mathit{out}$.  Final coalgebra semantics thus provides a theory of
coiteration that is as principled, expressive, and sound as that for
induction.

Since induction and iteration are closely linked, we might expect
initial algebra semantics to give a principled, expressive, and sound
theory of induction as well. But most theories of induction for a data
type $\mu F$, where $F:\B \ra \B$, are sound only under significant
restrictions on the category $\B$, the functor $F$, or the form and
nature of the property to be established. Recently, however, a
conceptual breakthrough in the theory of induction was made by Hermida
and Jacobs~\cite{hj98}. They first showed how to lift an arbitrary
functor $F$ on a base category $\B$ of types to a functor $\hat{F}$ on
a category of properties over those types. Then, taking the premises
of an induction rule for $\mu F$ to be an $\hat{F}$-algebra, their
main theorem shows that such a rule is sound if the lifting $\hat{F}$
preserves truth predicates. Hermida and Jacobs work in a fibrational,
and hence axiomatic, setting and treat {\em any} notion of property
that can be suitably fibred over $\B$. Moreover, they place no
stringent requirements on $\B$. Thus, they overcome two of the
aforementioned limitations. But since they give sound induction rules
only for polynomial data types, the limitation on the functors treated
remains in their work. The current authors~\cite{gjf10} subsequently
removed this final restriction to give sound induction rules for all
inductive types under conditions commensurate with those
in~\cite{hj98}.

In this paper, we extend the existing body of work in three key
directions. First, Hermida and Jacobs developed a fibrational theory
of coinduction to complement their theory of induction. But this
theory, too, is sound only for polynomial data types, and so does not
apply to final coalgebras of some key functors, such as the finite
powerset functor. In this paper, we derive a sound fibrational
coinduction rule for {\em every} coinductive type. Secondly, data
types arising as initial algebras of functors are fairly simple. More
sophisticated data types --- e.g., untyped lambda terms and red-black
trees --- are often modelled as inductive indexed types arising as
initial algebras of functors on slice categories, presheaf categories,
and similar structures. In this paper, we derive sound induction rules
for such inductive indexed types.  We do this by considering first the
special case of indexing via slice categories, and then the general
case where indexing is itself given by a suitable fibration.  Finally,
since we can derive sound induction rules for inductive types and
inductive indexed types, and sound coinduction rules for coinductive
types, we might expect to be able to derive sound coinduction rules
for coinductive indexed types, too. In this paper, we confirm that
this is the case and, again, consider first the special case of
indexing via slice categories and then the general situation.

We now describe the structure of the rest of this paper. After
describing the results in each section, we give a concrete example of
a widely-used data type and a corresponding logic for which the
results of that section can derive a sound induction or coinduction
rule, as appropriate, but for which such a rule cannot be derived from
previously known techniques of comparable generality. We thus show
that our framework not only facilitates an abstract conceptualisation
that reveals the essence of induction and coinduction, but also
significantly advances the state-of-the-art by being instantiable to a
larger class of data types and logics than ever before.  The rest of
this paper is structured as follows.

\begin{iteMize}{$\bullet$}

\item In Section 2, we recall the fibrational approach to induction
  pioneered in~\cite{hj98} and extended in~\cite{gjf10}. We also
  present a number of fibrations, each of which captures a different
  logic of interest. Finally, we recall conditions under which the
  fibrational induction rule we derive in~\cite{gjf10} can be
  instantiated to give a sound concrete induction rule for {\em any}
  inductive data type with respect to {\em any} such logic.

\item In Section~\ref{sec:coind} we extend the fibrational approach to
  coinduction from~\cite{hj98} to derive a coinduction rule that can
  be instantiated to give a sound concrete coinduction rule for {\em
    any} coinductive data type. We illustrate this by deriving a sound
  coinduction rule for the coinductive data type determined by the
  finite powerset functor. This functor is fundamental in the theory
  of bisimulation and labelled transition systems, but it is not a
  polynomial functor and so cannot be handled using the techniques of
  Hermida and Jacobs.

\item In Section~\ref{sec:indind} we use slice categories to model
  indexing of data types, and thus to give sound concrete induction
  rules for {\em all} inductive indexed data types. We apply this
  result to derive a sound induction rule for inductive type
  determined by indexed containers with respect to the families
  fibration, and then further specialise this rule to the inductive
  indexed data type of untyped lambda terms. The data type of untyped
  lambda terms is not determined by a polynomial functor, so the sound
  induction rule we derive for it is not simply an instantiation of
  Hermida and Jacobs' results.

\item In Section~\ref{sec:indcoind} we use slice categories again,
  this time to give sound concrete coinduction rules for {\em all}
  coinductive data types. We apply our results to derive sound
  coinduction (i.e., bisimulation) rules for coinductive types
  determined by indexed containers. These coinductive types are
  equivalent to Hancock and Hyvernat's interaction
  structures~\cite{hh06}. However, since they are not determined by
  polynomial functors, the coinduction rules we derive for them are
  not simply instantiations of Hermida and Jacobs' results.

\item In Section~\ref{sec:fibind} we study fibrational indexed
  induction by generalising the indexing of data types from slice
  categories to fibrations. We derive an induction rule that extends
  the one in Section~\ref{sec:indind} and show how it can be
  instantiated to give sound induction rules for set-indexed data
  types. Set-indexing occurs, for example, in mutually recursive
  definitions of data types.

\item In Section~\ref{sec:fibcoind} we similarly study fibrational
  indexed coinduction, derive a coinduction rule that extends the one
  in Section~\ref{sec:indcoind}, and point out that this rule can be
  instantiated to give sound coinduction rules for set-indexed data
  types.

\item In Section~\ref{sec:conc} we summarise our conclusions and
  discuss related work and possibilities for future research.
\end{iteMize}

\noindent This paper is a revised and expanded version of the conference
paper~\cite{fgj11}. Whereas the conference paper covers only indexing
modelled by slice categories, this paper also treats general
indexing. Accordingly, the material in Sections~\ref{sec:fibind}
and~\ref{sec:fibcoind} is entirely new.

\section{Induction in a Fibrational Setting}\label{sec:ind}

Fibrations support a uniform axiomatic approach to induction and
coinduction that is widely applicable and abstracts over the specific
choices of the category in which types are interpreted, the functor on
that category giving rise to the data type whose rules are to be
constructed, and the predicate those rules may be used to establish.
This is advantageous because i) the semantics of data types in
languages involving recursion and other effects usually involves
categories other than $\Set$; ii) in such circumstances, the standard
set-based interpretations of predicates are no longer germane; iii) in
any setting, there can be more than one reasonable notion of
predicate; and iv) fibrations allow induction and coinduction rules
for many classes of data types to be obtained by the instantiation of
a single generic theory, rather than developed on an {\em ad hoc}
basis. The genericness supported by fibrations provides a predictive
power that is the hallmark of any good scientific theory.

\subsection{Fibrations in a Nutshell}\label{sec:nutshell} 

We begin with fibrations. More details can be found in,
e.g.,~\cite{jac99,pav90}.
\begin{definition}
  Let $U : \E \ra \B$ be a functor. A morphism $g : Q \ra P$ in $\E$
  is {\em cartesian} above a morphism $f : X \ra Y$ in $\B$ if\, $Ug =
  f$ and, for every $g' : Q' \ra P$ in $\E$ with $Ug' = f v$ for
  some $v : UQ' \ra X$, there exists a unique $h : Q' \ra Q$ in $\E$
  such that $Uh = v$ and $g h = g'$.
\end{definition}

\noindent
A fibration is simply a functor $U:\E \ra \B$ that guarantees a large
supply of cartesian morphisms. The exact definition is as follows:

\begin{definition}
  Let $U : \E \ra \B$ be a functor. Then $U$ is a {\em fibration} if
  for every object $P$ of $\E$ and every morphism $f : X \ra UP$ in
  $\B$, there is a cartesian morphism above $f$ with codomain $P$. 
\end{definition}
If $U : \E \to \B$ is a fibration, we call $\B$ the {\em base
  category} of $U$ and $\E$ its {\em total category}. Objects of $\E$
are thought of as properties, objects of $\B$ are thought of as types,
and $U$ is thought to map each property $P$ in $\E$ to the type $UP$
about which it is a property. An object $P$ in $\E$ is said to be
\emph{above} its image $UP$ under $U$, and similarly for
morphisms. For any object $X$ of $\B$, we write $\E_X$ for the {\em
  fibre above $X$}, i.e., for the subcategory of $\E$ comprising
objects above $X$ and morphisms above the identity morphism $id_X$ on
$X$. Morphisms within a fibre are said to be {\em vertical}.

If $U:\E \ra \B$ is a fibration, $P$ is an object of $\E$, and $f : X
\ra UP$, we write $f^\S_P$ for the cartesian morphism above $f$ with
codomain $P$.  We omit the subscript $P$ when it can be inferred from
context. As with all entities defined via universal properties,
$f^\S_P$ is defined up to isomorphism; we write $f^*P$ for the domain
of $f^\S_P$. If $f : X \ra Y$ is a morphism in the base of a
fibration, then the function mapping each object $P$ of $\E_Y$ to
$f^*P$ extends to a functor $f^* : \E_Y \to \E_X$ called the {\em
  reindexing functor induced by $f$}. If we think of $f$ as performing
type-level substitution, then $f^*$ can be thought of as lifting $f$
to perform substitution of types into predicates.

\begin{example}\label{ex:famfib}
  The category $\Fam(\Set)$ has as objects pairs $(X, P)$ with $X$ a
  set and $P : X \ra \Set$. We call $X$ the {\em domain} of $(X,P)$
  and write $P$ for $(X,P)$ when convenient. A morphism from $P : X
  \ra \Set$ to $P' : X' \ra \Set$ is a pair $(f,f^\sim)$ of functions
  $f:X \rightarrow X'$ and $f^\sim:\forall x : X.\, P \,x \rightarrow
  P' (f \,x)$. The functor $U:\Fam(\Set) \ra \Set$ mapping $(X, P)$ to
  $X$ is called the {\em families fibration}. Here, the cartesian
  morphism associated with the object $P:Y \ra \Set$ in $\Fam(\Set)$
  and the morphism $f:X \ra Y$ in $\Set$ is the morphism $(f, id)$
  in $\Fam(\Set)$ from $P f : X \ra \Set$ to $P$.

\end{example}
\begin{example}\label{ex:cod}
  The {\em arrow category} of $\B$, denoted $\B^\to$, has morphisms of
  $\B$ as its objects. A morphism from $f:X\to Y$ to $f':X'\to Y'$ in
  $\B^\to$ is a pair $(\alpha_1, \alpha_2)$ of morphisms in $\B$ such
  that the following diagram commutes:
\[ \xymatrix{ X \;\; \ar[r]^{\alpha_1} \ar[d]_{f} &
    X' \ar[d]^{f'} \\
    Y \ar[r]_{\alpha_2} & Y'}
\]
\noindent
  The codomain functor $\mathit{cod} : \B^{\ra} \ra \B$ maps an object
  $f:X \ra Y$ of $\B^{\ra}$ to the object $Y$ of $\B$. If $\B$ has
  pullbacks, then $\mathit{cod}$ is a fibration, called the {\em
    codomain fibration over $\B$}. Indeed, given an object $f : X \to
  Y$ in the fibre above $Y$ and a morphism $f':X' \ra Y$ in $\B$, the
  pullback of $f$ along $f'$ gives the cartesian morphism above $f'$.
  Similarly, the domain functor $\mathit{dom} : \B^{\ra} \ra \B$ is a
  fibration, called the {\em domain fibration over $\B$}. No
  conditions on $\B$ are required.
\end{example}

A useful restriction of the previous fibration considers (equivalence
classes of) monic maps only:

\begin{example} 
\label{ex:subfib}
Let $\B$ be a category with pullbacks. Let $Sub(\B)$ be the category
of subobjects of $\B$, i.e., let the objects of $Sub(\B)$ be
equivalence classes of monos (where $m : X \rightarrow I$ and $n : X
\rightarrow I$ are equivalent iff they are isomorphic in the slice
category $\B/I$). The {\em subobject fibration over $\B$} is the
fibration $U : Sub(\B) \rightarrow \B$ that sends an equivalence class
$[m]$ to the codomain of $m$. Reindexing is well-defined because the
pullback of a mono along any morphism is again a mono. Note that every
fibre $Sub(\B)_I$ is a preorder, and thus that $U$ is a fibred
preorder. Fibred preorders can be thought of logically as modelling
just provability, rather than proofs themselves.
\end{example}

The following fibration appears as Example 4.8.7 (iii) in~\cite{jac99}:

\begin{example}\label{ex:CL}
  Let $CL$ be the category of complete lattices with functions
  preserving all joins between them. If $X$ is a complete lattice,
  then a subset $A \subseteq X$ is {\em admissible} if $A$ is closed
  under joins in $X$. We write $ASub(CL)$ for the category whose
  objects are pairs $(X,A)$, where $X$ is a complete lattice and $A$
  is an admissible subset of $X$, and whose morphisms from $(X,A)$ to
  $(Y,B)$ are morphisms $f : X \rightarrow Y$ in $CL$ such that $x \in
  A$ implies $f(x) \in B$.  Admissible subsets of complete lattices
  form a fibration $U : ASub(CL) \rightarrow CL$. Indeed, if $(Y,B)$
  is an object in $ASub(CL)$, if $f : X \to Y$ in $CL$, and if we
  define $f^*(Y,B) = (X, \{x \in X \mid f(x) \in B\})$, then
  $f^*(Y,B)$ is actually an object of $ASub(CL)$ since $f$ preserves
  joins. Moreover, a cartesian morphism $f^\S : f^*(Y,B) \rightarrow
  (Y,B)$ is given by $f$ itself.
\end{example}

\subsubsection{Bifibrations}

We will later need the generalisation of the notion of a fibration to
that of a bifibration. Since bifibrations are defined in terms of
opfibrations, we begin by defining these. Abstractly, $U : \E \to \B$
is an opfibration if $U^{op}:\E^{op} \ra \B^{op}$ is a fibration. This
characterisation has the merit of allowing us to use duality to
establish properties of opfibrations from properties of fibrations,
but a more concrete definition can be obtained by unwinding the
characterisation above.

\begin{definition}
  Let $U : \E \ra \B$ be a functor. A morphism $g : P \ra Q$ in $\E$
  is {\em opcartesian} above a morphism $f : X \ra Y$ in $\B$ if $Ug =
  f$ and, for every $g' : P \ra Q'$ in $\E$ with $Ug' = v f$ for
  some $v : Y \ra UQ'$, there exists a unique $h : Q \ra Q'$ in $\E$
  such that $Uh = v$ and $h g = g'$.
\end{definition}
\noindent
Just as a fibration is simply a functor that has a plentiful supply
of cartesian morphisms in its domain, so an opfibration is a functor
that has a plentiful supply of opcartesian morphisms in its
domain. We have:

\begin{definition}
  If $U : \E \ra \B$ is a functor, then $U$ is an {\em opfibration} if
  for every object $P$ of $\E$ and every morphism $f : UP \ra Y$ in
  $\B$ there is an opcartesian morphism in $\E$ above $f$ with domain
  $P$.  A functor $U$ is a {\em bifibration} if it is simultaneously a
  fibration and an opfibration.
\end{definition}
If $U$ is an opfibration, $P$ is an object of $\E$ and $f : UP \ra Y$
is a morphism of $\B$, then we denote the opcartesian morphism above
$f$ with domain $P$ by $f_\S^P$ and note that, as with cartesian
morphisms, this is defined up to isomorphism. We write $\Sigma_fP$ for
the codomain of $f_\S^P$ and omit the superscript $P$ when it can be
inferred from context.  If $f:X \ra Y$ is a morphism in the base of an
opfibration, then the function mapping each object $P$ of $\E_X$ to
$\Sigma_f P$ extends to a functor $\Sigma_f : \E_X \to \E_Y$ called
the {\em opreindexing functor induced by $f$}. The following useful
result is from~\cite{jac93}:
\begin{lemma}\label{lem:bifdef}
  Let $U: \E \ra \B$ be a fibration. Then $U$ is a bifibration iff,
  for every morphism $f:X \ra Y$ in $\B$, $f^*$ has a left adjoint
  $\Sigma_f$.
\end{lemma}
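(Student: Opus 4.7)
The statement is a classical characterization, and the plan is to prove the two implications separately, with the universal factorization of morphisms in a fibration as the main tool: every morphism $g : P \to Q$ above $f : X \to Y$ splits uniquely as $g = f^\S_Q \circ \tilde g$ with $\tilde g : P \to f^* Q$ vertical, and dually through any opcartesian lift.

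For the forward direction, assume $U$ is a bifibration and fix $f:X\ra Y$. I would define $\Sigma_f : \E_X \to \E_Y$ on objects by sending $P$ to the codomain of the opcartesian lift $f_\S^P$, and on vertical morphisms by factoring opcartesian composites through opcartesian lifts (functoriality is then immediate from uniqueness). The adjunction bijection $\E_Y(\Sigma_f P,Q)\cong \E_X(P, f^*Q)$ is constructed as follows: a vertical $\alpha:\Sigma_f P \to Q$ composes with $f_\S^P$ to yield a morphism above $f$, which factors uniquely through $f^\S_Q$ to give a vertical $\beta:P\to f^*Q$; conversely, a vertical $\beta$ composes with $f^\S_Q$ to give a morphism above $f$, which factors uniquely through the opcartesian $f_\S^P$ to give a vertical $\alpha$. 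These assignments are mutually inverse and natural in $P$ and $Q$ by the uniqueness clauses of the cartesian/opcartesian universal properties.

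For the backward direction, assume every reindexing $f^*:\E_Y \to \E_X$ has a left adjoint $\Sigma_f:\E_X\to\E_Y$; since both are fibre-to-fibre functors, the unit $\eta_P : P \to f^*\Sigma_f P$ is vertical. I would propose as the opcartesian lift of $f:X\ra Y$ at $P$ the composite $\kappa_P := f^\S_{\Sigma_f P} \circ \eta_P$, which lies above $f$. To verify the opcartesian property, given $g':P\to Q'$ with $Ug'=vf$, I first factor $g'$ through the cartesian $v^\S_{Q'}$ to obtain $g'':P\to v^*Q'$ above $f$, then factor $g''$ through $f^\S_{v^*Q'}$ to obtain a vertical $\tilde g:P\to f^*(v^*Q')$; transposing under the adjunction $\Sigma_f\dashv f^*$ yields a vertical $\hat g:\Sigma_f P\to v^*Q'$, and I set $h := v^\S_{Q'}\circ\hat g$, so $Uh=v$. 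The equation $h\circ\kappa_P = g'$ unwinds from the unit-counit identity $\tilde g = f^*\hat g\circ\eta_P$ together with naturality of $f^\S_{-}$. Uniqueness of $h$ is obtained by reversing these factorizations: any $h$ satisfying the requirements determines $\hat g$ uniquely as a vertical factor through $v^\S_{Q'}$, hence $\tilde g$ uniquely via the adjunction transpose, hence $g'$.

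The main obstacle is the diagram chase in the backward direction, specifically aligning the adjunction transpose with the two successive cartesian factorizations; once the bookkeeping is set up, both existence and uniqueness reduce to standard universal-property arguments, and no extra hypotheses on $\B$ or $\E$ are needed.
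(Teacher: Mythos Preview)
The paper does not actually prove this lemma: it is stated as a ``useful result from~\cite{jac93}'' and no argument is given. Your proposal is a correct and complete proof; it is precisely the standard argument one finds in the cited reference and in Jacobs' later textbook~\cite{jac99}. Both directions are handled correctly, including the slightly delicate backward direction where the candidate opcartesian lift $\kappa_P = f^\S_{\Sigma_f P}\circ\eta_P$ is verified via the interplay between the adjunction transpose and the action of $f^*$ on morphisms (your use of $f^\S_{v^*Q'}\circ f^*\hat g = \hat g\circ f^\S_{\Sigma_f P}$ is exactly the defining property of $f^*$ on vertical arrows). There is nothing to compare against in the paper itself, and nothing to correct in your argument.
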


Both the families fibration and the codomain fibration are
opfibrations, and thus bifibrations. In the families fibration, if
$f:X \ra Y$ is a function, and $P:X \ra \Set$ is an object of
$\Fam(\Set)$ above $X$, then the associated opcartesian morphism has
as codomain the function from $Y$ to $\Set$ that maps $y \in Y$ to the
disjoint union $\biguplus_{\{x \in X | f x = y\}} P x$. The first
component of the opcartesian morphism is $f$, and its second component
maps $x \in X$ and $p \in Px$ to the pair $(x,p)$.  In the codomain
fibration, if $f:X \ra Y$ is a morphism in the base category and $g:Z
\ra X$ is above $X$, then we can construct the opcartesian morphism
consisting of the pair of morphisms $(id_Z, f)$ from $g$ to $fg$. In
general, the subobject fibration over $\B$ is not an opfibration, and
hence not a bifibration. However, as shown in Lemma~4.4.6
of~\cite{jac99}, if $\B$ is a regular category then it is.




\subsubsection{Beck-Chevalley Conditions and Fibred Adjunctions}    

Beck-Chevalley conditions are used to guarantee that reindexing
satisfies desirable commutativity properties. See~\cite{jac99} for an
expanded treatment of the following discussion.

\begin{definition}\label{def:beckcc}
  Let $U:\E\to\B$ be a bifibration. We say that $U$ \emph{satisfies
    the Beck-Chevalley condition (for opreindexing)} if for any
  pullback square
  \[ \xymatrix{ A \ar[r]^t \ar[d]_s \pbc& B \ar[d]^f \\ C \ar[r]_g &
    D} \] in $\B$, the canonical natural transformation $\Sigma_s t^*
  \xrightarrow{.} g^*\Sigma_f$ defined as \[\Sigma_s t^*
  \xrightarrow{\Sigma_st^* \eta^f} \Sigma_s t^* f^*\Sigma_f
  \xrightarrow{\cong} \Sigma_s s^* g^* \Sigma_f
  \xrightarrow{\epsilon^s g^*\Sigma_f} g^* \Sigma_f\] is an
  isomorphism. Here, $\eta^f$ is the unit of the adjunction
  $\Sigma_f\dashv f^*$ and $\epsilon^s$ is the counit of the
  adjunction $\Sigma_s\dashv s^*$.
\end{definition}
\noindent 
It is easy to check that the families fibration, the codomain
fibration, and the fibration of admissible subsets of complete
lattices satisfy the Beck-Chevalley condition. In addition, the
subobject fibration over $\B$ satisfies the Beck-Chevalley condition
if $\B$ is regular. In addition, we have:

\begin{lemma}\label{lem:bcmono}
  Let $U:\E\to\B$ be a bifibration that satisfies the Beck-Chevalley
  condition. 
  Then for any mono $f:X\to Y$ in $\B$,
  \begin{enumerate}[\em(1)]
  \item the functor $\Sigma_f:\E_X\to\E_Y$ is full and faithful, and
  \item any opcartesian morphism above $f$ is also cartesian.
  \end{enumerate}
\end{lemma}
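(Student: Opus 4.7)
The plan is to deduce both parts from a single observation: Beck-Chevalley applied to a degenerate pullback square forces the unit of the adjunction $\Sigma_f \dashv f^*$, which exists by Lemma~\ref{lem:bifdef}, to be an isomorphism.

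First I would exploit the fact that $f:X \to Y$ is a mono: the commutative square with $id_X$ on its top and left edges and $f$ on its bottom and right edges is a pullback. Applying the Beck-Chevalley condition of Definition~\ref{def:beckcc} to this pullback and simplifying using $\Sigma_{id_X} \cong Id_{\E_X}$ and $id_X^* \cong Id_{\E_X}$, the canonical natural transformation reduces, up to these canonical isomorphisms, to the unit $\eta^f : Id_{\E_X} \to f^* \Sigma_f$, which is therefore forced to be a natural isomorphism.

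Claim~(1) is then immediate from the standard categorical fact that a left adjoint is full and faithful precisely when its unit is a natural isomorphism. For claim~(2), let $g:P \ra Q$ be opcartesian above $f$. By the universal property of opcartesian morphisms, $Q$ is isomorphic to $\Sigma_f P$ via a unique vertical isomorphism identifying $g$ with the canonical opcartesian morphism $f_\S^P$, which factors as $f^\S_{\Sigma_f P} \circ \eta^f_P$, with $f^\S_{\Sigma_f P}$ cartesian above $f$ and $\eta^f_P$ vertical. Since $\eta^f_P$ is a vertical isomorphism, and hence cartesian above $id_X$, and since composites of cartesian morphisms are cartesian, $f_\S^P$, and hence $g$, is cartesian.

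The only delicate point is setting up the Beck-Chevalley instance with the trivial pullback and carefully chasing through the canonical isomorphisms to identify the resulting natural transformation with $\eta^f$; once this bookkeeping is done, both parts follow almost immediately from formal properties of adjunctions and cartesian morphisms.
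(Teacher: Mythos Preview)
Your argument is correct. The paper states this lemma without proof, so there is no ``paper's own proof'' to compare against; your proposal supplies exactly the standard argument one would expect. The key step---applying Beck--Chevalley to the trivial pullback square witnessing that $f$ is monic, thereby forcing the unit $\eta^f$ of $\Sigma_f \dashv f^*$ to be invertible---is the right idea, and your derivations of (1) and (2) from it are sound. In particular, the factorisation $f_\S^P = f^\S_{\Sigma_f P} \circ \eta^f_P$ you invoke is precisely how the unit of $\Sigma_f \dashv f^*$ is constructed in a bifibration, so no further justification is needed there. The bookkeeping you flag---checking that the Beck--Chevalley transformation for this degenerate square really is $\eta^f$ up to the canonical identifications $\Sigma_{id_X} \cong \mathit{Id}_{\E_X} \cong id_X^*$---is routine once one traces through Definition~\ref{def:beckcc} with $s = t = id_X$ and $g = f$: the outer maps $\Sigma_{id_X} id_X^* \eta^f$ and $\epsilon^{id_X} f^*\Sigma_f$ become $\eta^f$ and the identity respectively, and the middle coherence isomorphism is trivial.
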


Given that fibrations are the fundamental structures used in this
paper, it is natural to ask what morphisms between such structures
might be. In general, we can consider this question in a setting where
the fibrations can have different base categories. However, for our
purposes we only need consider the special case where the base
categories of the fibrations involved are the same. In this situation
we have the following definition:

\begin{definition}
  Let $\B$ be a category. Given two fibrations $U:\E\to\B$ and
  $U':\E'\to\B$ with base category $\B$, a {\em fibred functor} from
  $U$ to $U'$ above $\B$ is a functor $H:\E\to\E'$ such that $H$
  preserves cartesian morphisms and the following diagram commutes:
  \[\xymatrix{\E \ar[rd]_U \ar[rr]^H && \E' \ar[ld]^{U'}\\
    & \B}\]
\end{definition}

One of the key points about fibred functors is that they allow us to
define fibred adjunctions, and thus to lift standard categorical
structures to the fibred setting. In the special case when the base
categories of the fibrations are the same, a fibred adjunction is
defined as follows:

\begin{definition}
  Let $\B$ be a category and $U:\E \ra \B$ and $U':\E' \ra \B$ be
  fibrations. Given two fibred functors $G:U\to U'$ and $F:U'\to U$
  above $\B$, we say that $G$ is a {\em fibred right adjoint} of $F$
  above $\B$ iff $G$ is right adjoint to $F$ and the unit (or,
  equivalently, counit) of the adjunction $F\dashv G$ is vertical.  We
  say that the adjunction $F \dashv G$ is a {\em fibred adjunction}
  above $\B$.
\end{definition}
\noindent
Henceforth, we speak only of fibred functors and fibred adjunctions,
and leave implicit the fact they are above a particular category. 

The definition of a fibred adjunction can be given an alternative form
in terms of a collection of adjunctions between corresponding fibres
of fibrations and a coherence property linking these adjunctions
together. To see this, we first introduce the following helpful
notation.  Given fibrations $U:\E \ra \B$ and $U':\E' \ra \B$, a
fibred functor $F:U \ra U'$, and an object $X$ of $\B$, we denote by
$F_X:\E_X \ra \E'_X$ the restriction of $F$ to the fibre $\E_X$. We
know the image of $F_X$ lies within $\E'_X$ because $F$ is fibred. We
have:

\begin{lemma}\label{lem:fibadj}
  Let $\B$ be a category, let $U:\E\to\B$ and $U':\E'\to\B$ be
  fibrations, and let $G:U \ra U'$ be a fibred functor. Then $G$ has a
  fibred left adjoint iff the following two conditions hold:
\begin{enumerate}[\em(1)]
\item for any $X$ in $\B$, $G_X$ has a left adjoint $F_X$, and
\item 
  for every morphism $f:X\to Y$ in $\B$ with associated reindexing
  functors $f^*$ and $f^{\dagger}$ with respect to $U$and $U'$,
  respectively, the canonical natural transformation from
  $F_Xf^{\dagger}$ to $f^* F_Y$ obtained as the transpose\footnote{If
    $F \dashv G: \C \rightarrow \D$, then the transpose of a morphism
    $f : FX \rightarrow Y$ is $(Gf)\eta_X$ and the transpose of a
    morphism $g : X \ra GY$ is $\epsilon_Y (Fg)$, where $\eta :
    \mathit{Id}_\C \ra GF$ and $\epsilon : FG \ra \mathit{Id}_\D$ are
    the unit and counit, respectively, of the adjunction $F \dashv
    G$.} of $\rho (f^{\dagger}\eta)$ is an isomorphism. Here, $\rho :
  f^{\dagger}G_YF_Y \rightarrow G_Xf^*F_Y$ arises from the fact that
  $G$, and hence $G_X$, preserves cartesian morphisms.
\end{enumerate}
\end{lemma}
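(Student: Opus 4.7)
The plan is to prove both directions of the equivalence. The forward direction is essentially a direct unpacking of the definition of a fibred adjunction, while the backward direction requires gluing the fibrewise adjunctions together into a single adjoint functor on the total categories, with condition (2) serving as the coherence glue.

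For the forward direction, suppose $F : U' \ra U$ is a fibred left adjoint of $G$. Since, by the definition of a fibred adjunction, the unit $\eta : \mathit{Id}_{\E'} \ra GF$ and counit $\epsilon : FG \ra \mathit{Id}_{\E}$ are vertical, their components over any object $X$ of $\B$ lie in $\E'_X$ and $\E_X$ respectively. Restricting gives natural transformations $\eta_X : \mathit{Id}_{\E'_X} \ra G_X F_X$ and $\epsilon_X : F_X G_X \ra \mathit{Id}_{\E_X}$ that inherit the triangle identities from the global ones, so $F_X \dashv G_X$ and condition (1) holds. For (2), the hypothesis that $F$ is fibred means $F$ preserves cartesian morphisms; applied to the cartesian morphism $f^\S_{P} : f^{\dagger}P \ra P$ above $f$ with codomain an object $P \in \E'_Y$, this shows that $F(f^\S_P)$ is cartesian in $\E$ above $f$ with codomain $F_Y P$, which is precisely the statement that the canonical comparison $F_X f^{\dagger} \ra f^* F_Y$ is an isomorphism.

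For the backward direction, I assemble $F:\E' \ra \E$ from the family $\{F_X\}$ as follows. On objects, put $FP = F_{U'P}P$. To define $F$ on a morphism $g:P \ra Q$ above $f:X \ra Y$, I use the canonical factorisation of $g$ in the fibration $U'$ as a vertical morphism $v:P \ra f^{\dagger}Q$ followed by the cartesian morphism $f^\S_Q:f^{\dagger}Q \ra Q$, and then set $Fg$ to be the composite of $F_X v : F_X P \ra F_X f^{\dagger}Q$, the isomorphism $F_X f^{\dagger} Q \cong f^* F_Y Q$ supplied by (2), and the cartesian morphism $f^\S_{F_Y Q}:f^*F_Y Q \ra F_Y Q$ in $\E$. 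Functoriality of $F$ and the fact that $F$ preserves cartesian morphisms (so $F$ is fibred) both reduce to coherence calculations that the isomorphism of (2) makes routine. The fibrewise units and counits assemble into global vertical natural transformations $\eta$ and $\epsilon$ because naturality of $\eta$ (and $\epsilon$) against a non-vertical morphism $g = f^\S_Q \circ v$ again factors through condition (2); the triangle identities then follow fibrewise.

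The main obstacle is the morphism-level part of the backward direction: one must verify that the composite defining $Fg$ is independent of any choices and is functorial, and that the fibrewise $\eta_X$ and $\epsilon_X$ really do patch together into natural transformations on all of $\E'$ and $\E$. Both verifications hinge on exactly the same coherence, namely that $F$ commutes with reindexing up to the isomorphism of (2); so the real content of the lemma is recognising that this single Beck-Chevalley-style condition is what promotes a fibrewise adjunction to a genuine fibred one.
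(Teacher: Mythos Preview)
The paper does not actually supply a proof of this lemma; it is stated without proof and followed only by a one-sentence remark about how the global unit restricts to the fibrewise units. So there is no proof in the paper to compare against.

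Your argument is the standard one and is correct in outline. The forward direction is exactly right: verticality of the unit and counit gives the fibrewise adjunctions, and fibredness of $F$ (preservation of cartesian morphisms) is precisely what makes the comparison $F_Xf^\dagger \to f^*F_Y$ an isomorphism. For the backward direction, your construction of $F$ on morphisms via the vertical/cartesian factorisation is the right idea, and you correctly identify condition~(2) as the Beck--Chevalley-style coherence that makes everything glue. The only caution is that the phrases ``routine'' and ``reduce to coherence calculations'' hide genuine work: functoriality of $F$ on a composite $g'g$ requires comparing the factorisation of $g'g$ with the composite of the factorisations of $g'$ and $g$, and this uses both naturality of the isomorphism in~(2) and its compatibility with composition of reindexing (the pseudofunctoriality isomorphisms $(fh)^* \cong h^*f^*$). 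None of this is deep, but if you were writing this out in full you would need to chase those diagrams explicitly.
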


\noindent
Suppose $X$ is in $\B$ and $P$ is in $\E'_X$ in the setting of
Lemma~\ref{lem:fibadj}. Then $FP = F_XP$, and if $\eta$ is the unit of
$F \dashv G$ and $\eta^X$ is the unit of $F_X \dashv G_X$, then
$\eta_P = (\eta^X)_P$.

We conclude this section with a lemma about (non-fibred) adjunctions
and the preservation of cartesian and opcartesian morphisms.
\begin{lemma}\label{lem:adjandcart}
  Let $U:\E\to\B$ and $U':\E'\to\B$ be fibrations. Further, let
  $F:\E\to\E'$ and $G:\E'\to\E$ be adjoint functors $F \dashv G$ with
  vertical unit (or equivalently, counit) such that $U = U' F$
  and $U' = U G$. Then the functor $F$ preserves opcartesian
  morphisms and the functor $G$ preserves cartesian morphisms.
\end{lemma}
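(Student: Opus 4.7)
My plan is to first record the observation that the parenthetical ``or equivalently'' in the hypothesis is genuine: if the unit $\eta:\mathit{Id}_\E \to GF$ is vertical, then the triangle identities $G\epsilon_P \circ \eta_{GP}=\mathit{id}_{GP}$ combined with $UG = U'$ and $U\eta = \mathit{id}$ force $U'\epsilon_P = \mathit{id}$ for every $P\in\E'$, and symmetrically. So I can freely use that both $\eta$ and $\epsilon$ are vertical.

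Next I would prove that $G$ preserves cartesian morphisms. Let $g:Q\to P$ be cartesian in $\E'$ above $f:X\to Y$; I want to show that $Gg:GQ\to GP$ is cartesian in $\E$ above $f$ (note $UGg = U'g = f$). Given any $g':Q'\to GP$ in $\E$ with $Ug' = fv$ for some $v:UQ'\to X$, the idea is to transpose across the adjunction: consider $\epsilon_P\circ Fg':FQ'\to P$ in $\E'$. Using $U = U'F$ and verticality of $\epsilon$, this morphism lies above $fv$, so cartesianness of $g$ provides a unique $h':FQ'\to Q$ with $U'h' = v$ and $g\circ h' = \epsilon_P\circ Fg'$. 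I then set $h := Gh'\circ \eta_{Q'}:Q'\to GQ$ and verify that $Uh = v$ (immediate from verticality of $\eta$) and that $(Gg)\circ h = g'$; the latter follows from naturality of $\eta$ rewriting $GFg'\circ\eta_{Q'}=\eta_{GP}\circ g'$, then the triangle identity $G\epsilon_P\circ \eta_{GP}=\mathit{id}_{GP}$. Uniqueness of $h$ is obtained by transposing any candidate back to an $h'$ with $U'h' = v$ and $g\circ h' = \epsilon_P\circ Fg'$ and invoking the uniqueness clause of the cartesianness of $g$.

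The statement that $F$ preserves opcartesian morphisms follows by a formal dual argument, swapping $F\leftrightarrow G$, $\eta\leftrightarrow\epsilon$, cartesian $\leftrightarrow$ opcartesian, and using the opposite-category characterisation of opfibrations recalled earlier in the excerpt. I would probably just assert this duality rather than repeat the symmetric calculation.

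I do not expect any serious obstacle: the main care is bookkeeping in the commuting diagram $U = U'F$, $U' = UG$, and making sure the $\eta$ and $\epsilon$ components really are identities when $U$ or $U'$ is applied — which is exactly the vertical-unit hypothesis. The only conceptual point worth flagging is that verticality of the unit is what makes the transposed morphism live above the correct morphism $v$ in the base; without it, the cartesianness of $g$ could not be invoked.
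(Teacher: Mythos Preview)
Your proposal is correct and follows essentially the same route as the paper: prove that $G$ preserves cartesian morphisms by transposing the test arrow across the adjunction (using verticality of $\epsilon$ to stay above the right base map), invoke cartesianness in $\E'$, and transpose back (using verticality of $\eta$), then obtain the opcartesian statement for $F$ by duality. The only difference is that you spell out the ``or equivalently'' equivalence and the verification via the triangle identities more explicitly than the paper does.
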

\begin{proof}
  We prove only that $G$ preserves cartesian morphisms; the second
  result is then obtained by dualising. Let $f:X\to Y$ be a morphism
  in $\B$ and let $u:Q\to P$ be the cartesian morphism above $f$ in
  $\E'$. We will prove that $G u$ is cartesian above $f$ in $\E$. To
  do this, let $l:R\to G P$ be a morphism in $\E$ above $f g$ for some
  $g$ in $\B$. Then the transpose $ \epsilon_P (Fl)$ of $l$ is above $
  f g$ in $\E'$ because the counit $\epsilon$ of the adjunction $F
  \dashv G$ is vertical. We then have a unique morphism $v:FR\to Q$ in
  $\E'$ above $g$ such that $uv = \epsilon_P (F l)$ since $u$ is
  cartesian. Because $\eta$ is vertical, this gives us a unique
  morphism $ (Gv) \eta_R:R\to GQ$ in $\E$ above $g$ such that $(Gu)
  (Gv\eta_R) = l$.
\end{proof}

\subsection{Fibrational Induction in Another Nutshell}

At the heart of Hermida and Jacobs' approach to induction is the
observation that if $U:\E \ra \B$ is a fibration and $F:\B \ra \B$ is
a functor, then $F$ can be lifted to a functor $\hat{F}:\E \ra \E$ and
the premises of the induction rule for $\mu F$ can be taken to be an
$\hat{F}$-algebra. Hermida and Jacobs observed that, crucially, this
lifting must be truth-preserving. We define these terms now.

\begin{definition}\label{def:multi}
  Let $U : \E \rightarrow \B$ be a fibration and $F : \B \to \B$ be a
  functor.  A {\em lifting} of $F$ with respect to $U$ is a functor
  $\hat{F}:\E \rightarrow \E$ such that $U \hat{F} = F U$.  If each
  fibre $\E_X$ has a terminal object, and if reindexing preserves
  terminal objects, then we say that $U$ {\em has fibred terminal
    objects}. In this case, the map assigning to every $X$ in $\B$ the
  terminal object in $\E_X$ defines a full and faithful functor $K_U$
  that is called the {\em truth functor} for $U$ and is right adjoint
  to $U$. We omit the subscript on $K_U$ when this can be inferred. A
  lifting $\hat{F}$ of $F$ is said to be {\em truth-preserving} if $K
  F \cong \hat{F} K$.
\end{definition}

The families fibration has fibred terminal objects: the terminal
object in the fibre above $X$ is the function mapping each $x \in X$
to the one-element set. The codomain fibration $\mathit{cod}$ also has
fibred terminal objects: the terminal object in the fibre above $X$ is
$\mathit{id}_X$. The subobject fibration has fibred terminal objects:
the terminal object in the fibre above $X$ is the equivalence class of
$\mathit{id}_X$.  A truth-preserving lifting $F^{\to}$ of $F$ with
respect to $\mathit{cod}$ is given by the action of $F$ on morphisms.
Truth-preserving liftings of functors with respect to the families
fibration and the subobject fibration over a regular category can be
obtained from the results of this section.

As mentioned in the introduction, in the fibrational approach to
induction the premises of an induction rule for a data type $\mu F$
are taken to be an $\hat{F}$-algebra $\alpha:\hat{F}P \ra P$. But what
about the conclusion of such an induction rule? Since its premises are
an $\hat{F}$-algebra, it is reasonable to expect its conclusion to be
the unique mediating morphism from the initial $\hat{F}$-algebra to
$\alpha$.  But this expectation is thwarted because an initial
$\hat{F}$-algebra is not, in general, guaranteed to exist. We
therefore seek conditions ensuring that, for every functor $F$ on the
base category of a fibration $U$, its lifting $\hat{F}$ has an initial
algebra. Moreover, our examples below suggest that the carrier of this
initial $\hat{F}$-algebra should be $K(\mu F)$, where $K$ is the truth
functor for $U$.  Fortunately, we already know that any
truth-preserving lifting $\hat{F}$ of $F$ defines a functor $K\dalg_F
:\alg_F\to\alg_{\hat F}$ mapping an $F$-algebra $\alpha : FX
\rightarrow X$ to the $\hat F$-algebra $K\alpha : KFX \cong \hat F KX
\ra KX$.  Soundness of the induction rule thus turns out to be
equivalent to requiring that applying $K\dalg_F$ to the initial
$F$-algebra gives the initial $\hat{F}$-algebra. We capture this
discussion formally as follows:

\begin{definition}\label{def:ind}
  Let $U:\E\to\B$ be a fibration with truth functor $K:\B\to\E$ and
  let $F:\B\to\B$ be a functor whose initial algebra has carrier $\mu
  F$. We say that a truth-preserving lifting $\hat F$ of $F$
  \emph{defines a sound induction rule for $\mu F$ in $U$} if the
  functor $K\dalg_F :\alg_F\to\alg_{\hat F}$ preserves initial objects.
\end{definition}
\noindent
We will omit explicit reference to $U$ when it is clear from context.
In the situation of Definition~\ref{def:ind}, the generic fibrational
induction rule is given by
\[\mathit{ind}_F : (\forall P: \E_X).\, (\hat{F} P \ra P) \ra K(\mu F) \ra P \]
and its soundness ensures that if $\alpha:\hat{F} P \ra P$ is above
$f$, then $\mathit{ind}_F \,P\, \alpha$ is above $\mathit{fold} \, f$.

To see how the above categorical definition of an induction rule
corresponds to our intuitive understanding, we look at an example
before returning to the general discussion of fibrational induction.

\begin{example}
  The data type $\mathit{Nat}$ of natural numbers is $\mu N$, where
  $N$ is the functor on $\Set$ defined by $N\, X = 1 + X$. A lifting
  $\hat{N}$ of $N$ from $\Set$ to $\Fam(\Set)$ is given by
\[\begin{array}{lll}
\hat{N} P\, (\mathit{inl} \,*) & \;=\; & 1 \\
\hat{N} P \,(\mathit{inr} \,n) & \;=\; & P \,n  
\end{array}\]
An $\hat{N}$-algebra with carrier $P:\mathit{Nat} \rightarrow \Set$ can
be given by $\mathit{in} : 1 + \mathit{Nat} \rightarrow \mathit{Nat}$ and $in^\sim
\; : \; \forall t : 1 + \mathit{Nat}. \;\hat{N} P \,t \ra P (\mathit{in}
\,t)$.  Since $\mathit{in} \,(\mathit{inl} \, *) = 0$ and $\mathit{in}
\,(\mathit{inr} \, n) = n + 1$, we see that $in^\sim$ consists of an
element $h_1 : P \,0$ and a function $h_2 : \forall n :
\mathit{Nat}. \; P\, n \ra P\,(n+1)$.  These are exactly the premises
of the standard induction rule we learn on the playground. As for the
conclusion of the induction rule, we first note that $\mathit{fold} \,
in = id$, so that the induction rule has as its conclusion a morphism
of predicates from $K \mathit{Nat}$ to $P$ whose first component is
$\mathit{id}$. The second component will be a function with type
$\forall n : \mathit{Nat}. \,1 \ra P n$, i.e., a function that gives,
for $n \in \mathit{Nat}$, a proof in $P n$. This is exactly as expected.
\end{example}

Definition~\ref{def:ind} naturally leads us to ask for conditions on a
fibration $U$ guaranteeing that a truth-preserving lifting of a
functor $F$ defines a sound induction rule for $\mu F$. Hermida
and Jacobs' key theorem states that a sufficient condition is that $U$
be a comprehension category with unit. 

\begin{definition}
  A {\em comprehension category with unit} (CCU) is a fibration $U:\E
  \ra \B$ with a truth functor $K_U$ that has a right adjoint
  $\{-\}_U$. In this case, $\{-\}_U$ is called the {\em comprehension
    functor} for $U$.
\end{definition}
\noindent
We omit the subscript on $\{-\}_U$ when this can be inferred from
context. 

The families fibration is a CCU: the comprehension functor maps a
predicate $P: X \ra \Set$ to the set $\Sigma x\!:\!X. \; Px$.  The
fibration $\mathit{cod}$ is the canonical CCU: the comprehension
functor is the domain functor $\mathit{dom} : \B^\to \to \B$ mapping
$f :X \to Y$ in $\B^\to$ to $X$. The subobject fibration over a
category $\B$ is a CCU: the comprehension functor maps an equivalence
class to the domain of a (chosen) representative. As shown
in~\cite{hj98}, truth-preserving liftings for CCUs define sound
induction rules. That is,

\begin{theorem}\label{thm:ind}
  Let $U:\E\to\B$ be a CCU and $F:\B\to\B$ be a functor whose initial
  algebra has carrier $\mu F$. Then every truth-preserving lifting
  $\hat F$ of $F$ with respect to $U$ defines a sound induction rule
  for $\mu F$.
\end{theorem}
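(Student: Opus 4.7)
The plan is to exploit the adjunction $K\dashv\{-\}$ afforded by the CCU structure on $U$, together with the truth-preservation isomorphism $\sigma:KF\cong\hat{F}K$, in order to lift this adjunction to an adjunction $K\dalg_F\dashv\{-\}\dalg_{\hat F}$ on categories of algebras. Once this is in hand the conclusion is immediate: $K\dalg_F$, being a left adjoint, preserves all colimits and in particular preserves the initial object, so it maps the initial $F$-algebra $\mathit{in}:F(\mu F)\to\mu F$ to an initial $\hat F$-algebra, which is exactly what Definition~\ref{def:ind} demands.

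The real work is in constructing the right adjoint $\{-\}\dalg_{\hat F}:\alg_{\hat F}\to\alg_F$. Given an $\hat F$-algebra $\alpha:\hat F P\to P$, I would equip $\{P\}$ with an $F$-algebra structure by taking the mate of the truth-preservation isomorphism under $K\dashv\{-\}$: the morphism $\hat F\,\epsilon_P:\hat F K\{P\}\to\hat F P$ precomposed with $\sigma^{-1}_{\{P\}}$ transposes along $K\dashv\{-\}$ to a map $\gamma_P:F\{P\}\to\{\hat F P\}$, and the $F$-algebra structure on $\{P\}$ is then $\{\alpha\}\circ\gamma_P$. For the lifted adjunction itself, I would then check that the unit $\eta$ and counit $\epsilon$ of $K\dashv\{-\}$ are themselves morphisms of the relevant algebras, so that the triangle identities of the base adjunction transfer directly to triangle identities for the lift.

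I expect the main obstacle to be the bookkeeping for the mate construction: checking that $\gamma_P$ is natural in $P$, that $\alpha\mapsto\{\alpha\}\circ\gamma_P$ respects $\hat F$-algebra morphisms so that $\{-\}\dalg_{\hat F}$ is genuinely functorial, and that $\eta$ and $\epsilon$ really do lift. Nothing here is conceptually deep, but three distinct pieces of data --- the lifting equation $U\hat F = FU$, the truth-preservation isomorphism $\sigma$, and the adjunction transpose for $K\dashv\{-\}$ --- have to be composed carefully at each step. A helpful simplification is that $K$ is full and faithful, so the unit $\eta$ is an isomorphism; this eliminates roughly half of the coherence obligations and makes the remaining diagram chases routine.
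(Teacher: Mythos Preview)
Your proposal is correct and matches the paper's approach exactly: the paper's proof sketch is simply that, under the CCU hypothesis, $K\dalg_F$ has a right adjoint (constructed from $\{-\}$ via the mate of the truth-preservation isomorphism, as you describe) and therefore preserves initial objects. The paper defers the bookkeeping to Hermida and Jacobs, but the mate construction you outline is precisely how that right adjoint is built.
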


The proof of this theorem is conceptually simple: Hermida and Jacobs
show that under the assumptions of the theorem, each functor $K\dalg_F$
has a right adjoint and therefore preserves all colimits, including
the initial object. This very elegant theorem shows that fibrations
provide just the right structure to derive sound induction rules for
inductive types whose underlying functors have truth-preserving
liftings. And it's amazing to see such structure captured so smoothly
as the existence of a pair of adjoints to the fibration
itself. However, there is still one missing ingredient, namely, a set
of conditions under which functors are guaranteed to have
truth-preserving liftings. Hermida and Jacobs~\cite{hj98} provided
truth-preserving liftings, and thus sound induction rules, only for
polynomial functors. This situation was rectified in~\cite{gjf10},
where it was shown that every functor has a truth-preserving lifting
with respect to every CCU that is also a bifibration. Such CCUs are
called {\em Lawvere categories}.

\begin{definition}
  A fibration $U : \E \rightarrow \B$ is a {\em Lawvere category} if
  it is a CCU that is also a bifibration.
\end{definition}

If $\epsilon$ is the counit of the adjunction $K \dashv \{-\}$ for a
CCU $U$, then $\pi_P = U \epsilon_P$ defines a natural transformation
$\pi : \{-\} \rightarrow U$. 
Moreover, $\pi$ extends to a functor $\pi:\E \rightarrow
\B^{\rightarrow}$ in the obvious way.

\begin{lemma}\label{lem:deflifting}
  Let $U:\E \rightarrow \B$ be a Lawvere category. Then $\pi$ has a
  left adjoint $I:\B^{\rightarrow} \rightarrow \E$ defined by $I \,(f
  : X \rightarrow Y)\, =\, \Sigma_f\, (K X)$.
\end{lemma}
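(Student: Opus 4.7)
The plan is to establish the adjunction $I \dashv \pi$ by exhibiting a natural bijection between $\E(\Sigma_f\,KX, P)$ and $\B^{\to}(f, \pi_P)$ for every $f : X \to Y$ in $\B^{\to}$ and every $P$ in $\E$, then leveraging naturality to define $I$ on morphisms. The idea is to successively peel off the three pieces of structure that compose $\pi$, namely the bifibration $U$, the truth functor $K$, and the comprehension functor $\{-\}$, by appealing in turn to the cartesian/vertical factorization, the adjunction $\Sigma_f \dashv f^*$ granted by Lemma~\ref{lem:bifdef}, and the adjunction $K \dashv \{-\}$ coming from the CCU structure.

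Concretely, given $h : \Sigma_f KX \to P$ in $\E$, I would set $b = Uh : Y \to UP$ and factor $h$ as $h = b^\S_P \circ \bar h$ for a unique vertical $\bar h : \Sigma_f KX \to b^*P$ in $\E_Y$. Transposing $\bar h$ across $\Sigma_f \dashv f^*$ yields a vertical $\hat h : KX \to f^*b^*P = (bf)^*P$ in $\E_X$. Composing $\hat h$ with the cartesian $(bf)^\S_P : (bf)^*P \to P$ gives a morphism $KX \to P$ above $bf$, which transposes across $K \dashv \{-\}$ to a morphism $a : X \to \{P\}$. The inverse construction starts from $(a,b)$ with $\pi_P \, a = b f$, transposes $a$ to $\tilde a : KX \to P$ in $\E$, factors $\tilde a$ through the cartesian lift of $bf$, and applies $\Sigma_f \dashv f^*$ followed by the cartesian lift of $b$ to land in $\E(\Sigma_f KX, P)$. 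The two constructions are mutually inverse because each uses only universal properties applied in sequence.

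The crucial compatibility that makes the square-commuting condition $\pi_P a = b f$ correspond exactly to the domain condition $U \tilde a = b f$ is the identity $U\phi = \pi_P \circ \tilde\phi$ for any $\phi : KX \to P$ with transpose $\tilde\phi : X \to \{P\}$; I expect this to be the main technical obstacle, although it reduces to the triangle identity for $K \dashv \{-\}$ together with the fact that $UK \cong \mathrm{Id}_{\B}$ (because $K$, being fully faithful and right adjoint to $U$, has invertible counit). Once this is in place, naturality of the bijection in $P$ (along morphisms of $\E$) and in $f$ (along commutative squares of $\B^{\to}$) follows mechanically from the naturality of the cartesian/vertical factorization, of $\Sigma_f \dashv f^*$, and of $K \dashv \{-\}$, so that the functor $I$ is automatically determined on morphisms and the adjunction is genuine.
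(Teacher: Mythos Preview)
Your argument is correct. The paper does not actually supply a proof of this lemma; it is stated and immediately used, with the surrounding discussion (and the remark after Lemma~\ref{lem:coindlift} that the dual adjunction $\rho \dashv J$ is ``a straightforward application of the universal property of reindexing'') indicating that the authors regard it as routine. Your proposal spells out exactly that routine: the hom-set bijection
\[
\E(\Sigma_f KX,\,P)\;\cong\;\B^{\to}(f,\,\pi_P)
\]
is obtained by chaining the vertical/cartesian factorisation in $U$, the fibrewise adjunction $\Sigma_f \dashv f^*$ from Lemma~\ref{lem:bifdef}, and the adjunction $K \dashv \{-\}$. The key compatibility you isolate, $U\phi = \pi_P \circ \tilde\phi$ for $\phi:KX\to P$ with transpose $\tilde\phi:X\to\{P\}$, is exactly right and follows from $\tilde\phi = \{\phi\}\eta_X$ together with the triangle identity and $UK = \mathit{Id}_\B$; this is what makes the square condition in $\B^{\to}$ match the condition $U\tilde a = bf$ on the $\E$-side.

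One cosmetic remark: you write $f^*b^*P = (bf)^*P$, but reindexing is only pseudofunctorial, so this is a canonical isomorphism rather than an equality. It does not affect the argument, since the isomorphism is absorbed into the cartesian lift $(bf)^\S_P$, but it is worth flagging if you write the proof out in full.
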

\noindent
For any functor $F$, the composition $\hat{F} = I F^{\to} \pi : \E
\rightarrow \E$ defines a truth-preserving lifting with respect to the
Lawvere category $U$~\cite{gjf11}. Here, $F^{\to}$ is the lifting
given after Definition~\ref{def:multi} of $F$ to the total category of
the codomain fibration. Concretely, $\hat{F}P = \Sigma_{F \pi_P} K F
\{P\}$. Thus, if $U$ is a Lawvere category and $F$ has an initial
algebra $\mu F$, then Theorem~\ref{thm:ind} guarantees that $\hat F$
defines a sound induction rule for $\mu F$. Indeed, we have:

\begin{theorem}\label{thm:inda}
If $U:\E \to \B$ is a Lawvere category and $F : \B \to \B$ is a
functor whose initial algebra has carrier $\mu F$, then there exists a
sound induction rule for $\mu F$ in $U$.
\end{theorem}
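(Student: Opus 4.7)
The plan is to combine three ingredients already assembled in this section. Since $U$ is a Lawvere category, it is simultaneously a CCU and a bifibration, so Theorem~\ref{thm:ind} reduces the problem to exhibiting a truth-preserving lifting $\hat F$ of $F$ with respect to $U$. The bifibration structure lets me invoke Lemma~\ref{lem:deflifting} to obtain the left adjoint $I$ to $\pi$, which in turn provides a canonical candidate for the lifting, namely
\[\hat F \;=\; I \circ F^\to \circ \pi : \E \to \E,\]
where $F^\to$ is the lifting of $F$ to $\B^\to$ already described after Definition~\ref{def:multi}.

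The first step is to check that this composite is indeed a lifting in the sense of Definition~\ref{def:multi}, i.e., that $U \hat F = F U$. This is a short diagram chase: $\pi$ sits above the domain functor, $F^\to$ sits above $F$, and by its definition $I(f : X \to Y) = \Sigma_f K X$ lives in the fibre above $Y$. Tracking the codomain through the composite at an object $P$ gives $U \hat F P = F U P$.

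The genuine content, and the step I would treat most carefully, is the truth-preservation isomorphism $\hat F K \cong K F$. For any $X$ in $\B$, the fact that $K$ is fully faithful forces the unit of $K \dashv \{-\}$ at $X$ to be an isomorphism, so $\pi_{KX}$ is (canonically isomorphic to) $\mathit{id}_X$ in $\B$. Applying $F^\to$ yields $F^\to(\pi_{KX}) \cong \mathit{id}_{FX}$, and then $I(\mathit{id}_{FX}) = \Sigma_{\mathit{id}_{FX}} K(F X) \cong K(FX)$ because opreindexing along an identity is isomorphic to the identity functor. This verification is the one recorded in~\cite{gjf11}, and it is where the key interaction between the bifibration structure (supplying $\Sigma$) and the CCU structure (supplying $K$ and $\{-\}$) is used.

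Once $\hat F$ is confirmed to be a truth-preserving lifting of $F$ with respect to the CCU $U$, Theorem~\ref{thm:ind} applies directly and delivers the sound induction rule for $\mu F$ in $U$, completing the argument.
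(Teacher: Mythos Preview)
Your proposal is correct and follows essentially the same route as the paper: construct the canonical lifting $\hat F = I\,F^\to\,\pi$ from Lemma~\ref{lem:deflifting}, observe that it is truth-preserving, and then invoke Theorem~\ref{thm:ind}. The only difference is that the paper delegates the verification of $\hat F K \cong K F$ to the reference~\cite{gjf11} (and later reproves it as Lemma~\ref{lem:indlift} by dualising Lemma~\ref{lem:coindlift}), whereas you sketch the argument directly; your sketch is fine, with the minor remark that the iso $\pi_{KX}\cong\mathit{id}_X$ is most cleanly obtained from the triangle identity $\pi_{KX}\circ\eta_X = \mathit{id}_X$ together with the invertibility of the unit $\eta_X$.
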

\noindent
If $\B$ has pullbacks, so that the functor $\mathit{cod}$ is actually
a fibration, then the following diagram establishes that we have
actually given a uniform modular construction of a lifting with
respect to any Lawvere category by factorisation through the lifting
for $\mathit{cod}$:

\[\xymatrix{{\E}\ar[dr]_U \ar@/^/[rr]^\pi \ar@{}[rr]|\top & &
   \ar@/^/[ll]^I \ar[ld]^{cod} {\B^\to}\\ & {\B} & }\]

\section{Coinduction}\label{sec:coind}

In~\cite{hj98}, Hermida and Jacobs augmented their sound fibrational
induction rules for carriers of initial algebras of polynomial
functors with a sound coinduction rule for carriers of final
coalgebras of polynomial functors. The goals of this section are to
recall the results of Hermida and Jacobs, and to extend them to give
sound coinduction rules for carriers of final coalgebras of functors.

Hermida and Jacobs begin by observing that coinduction is concerned
with relations. Given a fibration $U$ whose total category is thought
of as a category of predicates, they therefore construct a new
fibration $Rel(U)$ whose total category is thought of as a category of
relations.

\begin{definition}\label{def:cob}
  Let $U:\E \ra \B$ be a fibration where $\B$ has products, and let
  $\Delta:\B \ra \B$ be the diagonal functor sending an object $X$ to
  $X \times X$. Then the fibration $Rel(U):Rel(\E) \ra \B$ is obtained
  by the pullback of $U$ along $\Delta$. We call $Rel(U)$ the {\em
    relations fibration} for $U$.
\end{definition}

That the pullback of a fibration along any functor is a fibration is
well-known~\cite{jac94}, and the process of pulling back a fibration
along a functor $F$ to obtain a new fibration is called {\em change of
  base} along $F$. Since an opfibration from $\E$ to $\B$ is a
fibration from $\E^{op}$ to $\B^{op}$, change of base preserves
opfibrations as well as fibrations, and therefore preserves
bifibrations. Below we denote the pullback of {\em any} functor $F:\A
\ra \B$ along a functor $G:\B' \ra \B$ by $G^*F: G^*\A \ra \B'$. The
objects of $G^*\A$ are pairs $(X,Y)$ such that $GX = FY$, and $G^*F$
maps the pair $(X,Y)$ to the object $X$.  We write $Y$ for $(X,Y)$ in
$G^*A$ when convenient.

If $U : \E \to \B$ is a bifibration, then change of base along a
natural transformation $\alpha : F \to G$ induces an adjunction
between $F^*\E$ and $G^*\E$. We have:

\begin{lemma}\label{lem:natadj}
  For $U:\E\to\B$ a bifibration and $\alpha:F\to G$ a natural
  transformation with $F,G:\A\to\B$. There is an adjunction
  \[\xymatrix{F^*\E \ar[rd]_{F^*U} \ar@/^/[rr]^{\Sigma_\alpha}
    \ar@{}[rr]|\bot&& \ar@/^/[ll]^{\alpha^*} G^*\E \ar[ld]^{G^*U}\\&\A}\]
  with
  \begin{align*}
    &\Sigma_\alpha (X,P) = (X,\,\Sigma_{\alpha_X} P)\\
    &\alpha^*(X,Q) = (X,\,(\alpha_X)^*Q)
  \end{align*}
Furthermore, if $U$ satisfies the Beck-Chevalley condition 
and the components of $\alpha$ are monos, then $\Sigma_\alpha$ is full
and faithful.
\end{lemma}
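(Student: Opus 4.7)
The plan splits the argument into three parts: checking well-definedness of the two functors, exhibiting the adjunction $\Sigma_\alpha \dashv \alpha^*$, and then deducing the final statement from Lemma~\ref{lem:bcmono}(1).

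First I would check that the assignments defining $\Sigma_\alpha$ and $\alpha^*$ on objects land in the correct total categories. If $(X,P)$ is an object of $F^*\E$, then $P$ lives in $\E_{FX}$, so $\Sigma_{\alpha_X}P$ lives in $\E_{GX}$, making $(X,\Sigma_{\alpha_X}P)$ an object of $G^*\E$; the dual argument handles $\alpha^*$. On morphisms, given $(f,\phi):(X,P)\to(X',P')$ with $\phi$ above $Ff$, composition with the opcartesian map $(\alpha_{X'})_\S^{P'}$ yields a morphism $P\to\Sigma_{\alpha_{X'}}P'$ above $\alpha_{X'}\circ Ff = Gf\circ \alpha_X$ by naturality of $\alpha$, and then opcartesianness of $(\alpha_X)_\S^P$ above $\alpha_X$ produces a unique morphism $\Sigma_{\alpha_X}P\to \Sigma_{\alpha_{X'}}P'$ above $Gf$; this is $\Sigma_\alpha(f,\phi)$. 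The action of $\alpha^*$ on morphisms is defined dually using cartesianness. Functoriality of both functors and the equations $G^*U\,\Sigma_\alpha = F^*U$, $F^*U\,\alpha^* = G^*U$ follow from uniqueness in the universal properties.

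Next I would establish the adjunction by exhibiting a natural bijection on hom-sets. For $f:X\to X'$ in $\A$, a morphism $\Sigma_\alpha(X,P)\to (X',Q)$ over $f$ amounts to $\psi:\Sigma_{\alpha_X}P\to Q$ above $Gf$, while a morphism $(X,P)\to\alpha^*(X',Q)$ over $f$ amounts to $\phi:P\to (\alpha_{X'})^*Q$ above $Ff$. The passage from $\phi$ to $\psi$ composes $\phi$ with the cartesian map $(\alpha_{X'})^\S_Q$ to get a morphism above $Gf\circ\alpha_X$ and then uses opcartesianness of $(\alpha_X)_\S^P$ to factor through $\Sigma_{\alpha_X}P$; the passage from $\psi$ to $\phi$ is dual, composing $\psi$ with $(\alpha_X)_\S^P$ and then factoring through $(\alpha_{X'})^*Q$ via cartesianness of $(\alpha_{X'})^\S_Q$. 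Mutual inverseness follows from uniqueness of the two universal factorisations, and naturality in $f$, $P$, $Q$ is routine. Inspecting this bijection shows that the unit at $(X,P)$ has second component the unit $\eta_P^{\alpha_X}$ of the fibrewise adjunction $\Sigma_{\alpha_X}\dashv(\alpha_X)^*$ and is therefore vertical, so the resulting adjunction is automatically fibred above $\A$.

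Finally, when every $\alpha_X$ is a mono and $U$ satisfies Beck-Chevalley, Lemma~\ref{lem:bcmono}(1) makes each fibrewise $\Sigma_{\alpha_X}$ full and faithful, which is to say each $\eta_P^{\alpha_X}$ is invertible; hence the unit of $\Sigma_\alpha\dashv\alpha^*$ is componentwise an isomorphism, and a left adjoint with invertible unit is full and faithful. The main source of difficulty will be bookkeeping: keeping careful track of which morphisms lie above which maps of $\B$ and invoking the naturality square of $\alpha$ at exactly the right point to marry the opcartesian and cartesian universal properties. Everything else is forced by these universal properties.
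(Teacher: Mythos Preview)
Your proposal is correct and follows essentially the same approach as the paper. The paper's proof is a one-line appeal to the definitions together with Lemmas~\ref{lem:bifdef} and~\ref{lem:bcmono}; your argument simply unpacks what that appeal means, working directly with the (op)cartesian universal properties rather than citing the packaged fibrewise adjunctions $\Sigma_{\alpha_X}\dashv(\alpha_X)^*$ of Lemma~\ref{lem:bifdef}, and then invoking Lemma~\ref{lem:bcmono}(1) exactly as the paper intends for the full-and-faithful claim.
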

\begin{proof}
  Straightforward from the definitions of $\Sigma_\alpha$ and
  $\alpha^*$ and Lemmas~\ref{lem:bifdef} and~\ref{lem:bcmono}.
\end{proof}

Definition~\ref{def:cob} entails that the fibre of $Rel(\E)$ above $X$
is the fibre $\E_{X \times X}$. A morphism from $(X,Y)$ to $(X',Y')$
in $Rel(\E)$ consists of a pair of morphisms $\alpha : X \to X'$ and
$\beta : Y \to Y'$ such that $U \beta = \alpha \times \alpha$. Change
of base is well-known to preserve fibred terminal
objects~\cite{her93b}. It therefore preserves truth functors, so that
$Rel(U)$ has a truth functor whenever $U$ does. This is given by
$K_{Rel(U)} X = K_U (X \times X)$.

\begin{example}
Let $U$ be the families fibration. Then the fibre of $Rel(U)$ above a
set $X$ consists of functions $R:X \times X \ra \Set$. These are, as
intended, just (set-valued) relations. The truth functor for $Rel(U)$
maps a set $X$ to the relation $R : X \times X \ra \Set$ that maps
each pair $(x,x')$ to the one-element set.
\end{example}

In the inductive setting, truth-preserving liftings were needed. In
the coinductive setting, we need equality-preserving liftings, where
the equality functor is defined as follows:

\begin{definition}\label{def:eq-fun}
  Let $U:\E \ra \B$ be a bifibration where $\B$ has products, and let
  $K$ be the truth functor for $U$. Let $\delta: \mathit{Id}_\B \ra
  \Delta$ be the diagonal natural transformation for $\Delta$ with
  components $\delta_X: X \ra X \times X$, and let $\Sigma_{\delta} :
  \E \ra Rel(\E)$ be the functor mapping an object $P$ above $X$ to
  the object $\Sigma_{\delta_X} P$. Note that $\Sigma_{\delta_X} P$ is
  above $X \times X$ in $\E$ and above $X$ in $Rel(\E)$. The {\em
    equality functor} for $U$ is the functor $Eq_U : \B \ra Rel(\E)$
  defined by $Eq_U = \Sigma_{\delta} K$. The functor $Eq_U$ maps each
  morphism $f$ to the unique morphism above $f \times f$ induced by
  the naturality of $\delta$ at $f$ and the opcartesian morphism
  $(\delta_X)_{\S}^{KX}$. If $Eq_U$ has a left adjoint $Q_U$, then $Q_U$
  is called the {\em quotient functor} for $U$.
\end{definition}
\noindent
We suppress the subscripts on $Eq_U$ and $Q_U$ when convenient. The
notion of an equality-preserving lifting of a functor is then defined
as follows:

\begin{definition}
  Let $U:\E \ra \B$ be a bifibration where $\B$ has products, suppose
  $U$ has a truth functor, and let $F:\B \ra \B$ be a functor. A
  lifting $\hat{F}$ of $F$ with respect to $Rel(U)$ is said to be
  {\em equality-preserving} if $Eq\, F \cong \hat{F}\, Eq$.
\end{definition}

Just as truth-preserving liftings are the key to defining induction
rules, equality-preserving liftings are the key to defining
coinduction rules. The following definition is pleasantly dual to
Definition~\ref{def:ind}:

\begin{definition}\label{def:coind}
  Let $U:\E\to\B$ be a bifibration where $\B$ has products, suppose
  $U$ has a truth functor, and let $F:\B\to\B$ be a functor whose
  final coalgebra has carrier $\nu F$. We say that an $Eq$-preserving
  lifting $\hat F$ of $F$ \emph{defines a sound coinduction rule for
    $\nu F$ in $U$} if the functor $Eq\dcoalg_F:\coalg_F\to\coalg_{\hat
    F}$ sending each $F$-coalgebra $\alpha : X \rightarrow F X$ to the
  $\hat F$-coalgebra $Eq\,\alpha : Eq \,X \ra Eq F X \cong \hat F Eq
  X$ preserves terminal objects.
\end{definition}

\noindent
As before, we omit explicit reference to $U$ when it is clear from context.

As in~\cite{hj98}, there is a simple condition under which
$Eq$-preserving liftings define sound coinduction rules, namely, that
$U$ has a quotient functor. Note the duality: in the inductive setting
the truth functor $K$ must have a right adjoint, whereas in the
coinductive setting the equality functor $Eq$ must have a left
adjoint.

\begin{theorem}\label{thm:coind}
  Let $U:\E \ra \B$ be a bifibration where $\B$ has products, suppose
  $U$ has a truth functor and a quotient functor, and let $F :\B \to
  \B$ be a functor whose final coalgebra has carrier $\nu F$. Then
  every equality-preserving lifting $\hat{F}$ of $F$ with respect to
  $Rel(U)$ defines a sound coinduction rule for $\nu F$.
\end{theorem}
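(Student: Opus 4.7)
The proof strategy dualises Hermida and Jacobs' proof of Theorem~\ref{thm:ind}: to show that $Eq\dcoalg_F$ preserves terminal objects, it suffices to exhibit it as a right adjoint, since right adjoints preserve all limits. Concretely, the plan is to lift the adjunction $Q \dashv Eq$ to an adjunction between the respective coalgebra categories, i.e., to construct a functor $Q\dcoalg_{\hat F} : \coalg_{\hat F} \to \coalg_F$ that is left adjoint to $Eq\dcoalg_F$.

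The first step is to build $Q\dcoalg_{\hat F}$. The equality-preservation isomorphism $\phi : Eq\,F \cong \hat F\,Eq$ transposes under $Q \dashv Eq$ (using its unit $\eta$ and counit $\epsilon$) to a natural transformation $\tilde\phi : Q\,\hat F \to F\,Q$ via the standard mate calculus. Given an $\hat F$-coalgebra $\beta : R \to \hat F R$, I define
\[
Q\dcoalg_{\hat F}(\beta) \;=\; \tilde\phi_R \cdot Q\beta \;:\; QR \to F\,Q R,
\]
dually to the way $Eq\dcoalg_F$ uses $\phi$ to send $\alpha : X \to FX$ to the composite $\phi_X \cdot Eq\,\alpha : Eq\,X \to \hat F\, Eq\,X$.

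The next step is to check that, for any $F$-coalgebra $\alpha$ and $\hat F$-coalgebra $\beta$, transposition under $Q \dashv Eq$ restricts to a bijection between $F$-coalgebra morphisms $Q\dcoalg_{\hat F}(\beta) \to \alpha$ and $\hat F$-coalgebra morphisms $\beta \to Eq\dcoalg_F(\alpha)$. This amounts to a routine diagram chase using naturality of $\eta$, $\epsilon$ and $\phi$ together with the defining equations of $\tilde\phi$ as a mate; one also verifies that the components of $\eta$ and $\epsilon$ themselves lift to coalgebra morphisms, so that they furnish the unit and counit of the lifted adjunction.

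The main obstacle, and the place where care is needed, is precisely this coherence bookkeeping: ensuring the mate $\tilde\phi$ interacts correctly with the coalgebra structures so that transposition preserves the coalgebra-morphism property in both directions. Once the adjunction $Q\dcoalg_{\hat F} \dashv Eq\dcoalg_F$ is in hand, the theorem follows immediately: as a right adjoint, $Eq\dcoalg_F$ preserves the terminal object of $\coalg_F$, namely the final $F$-coalgebra $\mathit{out} : \nu F \to F\,\nu F$. Hence $Eq\dcoalg_F(\mathit{out})$ is terminal in $\coalg_{\hat F}$, which by Definition~\ref{def:coind} is exactly the statement that $\hat F$ defines a sound coinduction rule for $\nu F$.
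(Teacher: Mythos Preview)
Your proposal is correct and follows essentially the same approach as the paper: the paper's proof, attributed to Hermida and Jacobs, simply observes that the quotient functor yields a left adjoint to $Eq\dcoalg_F$, whence the latter preserves terminal objects. Your write-up supplies the standard mate-calculus details that the paper leaves implicit, but the underlying argument is identical.
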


As before, Hermida and Jacobs' proof is conceptually simple: If $U$
has a quotient functor, then each functor $Eq\dcoalg_F$ has a left
adjoint and hence preserves all limits, including the terminal
object. As a result, the carrier of the final $\hat{F}$-coalgebra is
obtained by applying $Eq$ to the final $F$-coalgebra, and the generic
fibrational coinduction rule is therefore given by
\[\mathit{coind}_F : (\forall R : Rel(\E)) (R \ra \hat{F} R) \ra R \ra
Eq (\nu F)\] 
\noindent
Soundness of the rule ensures that if $\alpha : R \ra \hat{F}R$ is
above $f$, then $\mathit{coind}_F R \alpha$ is above $\mathit{unfold}
\,f$.

As was the case for induction, Hermida and Jacobs provided
$Eq$-preserving liftings only for polynomial functors, and thus sound
coinduction rules only for carriers of their final coalgebras.  The
outstanding issue is then to establish a set of conditions under which
functors are guaranteed to have equality-preserving liftings.

\subsection{Generic Coinduction For All Coinductive Types}

The first contribution of this paper is to give a sound coinduction
rule for every coinductive type, i.e., for every data type that
is the carrier $\nu F$ of the final coalgebra for a functor $F$. This
entails determining conditions sufficient to guarantee that functors
have equality-preserving liftings.  To do this, we step back a little
and show how to construct liftings that can be instantiated to give
both the truth-preserving liftings required for deriving sound
induction rules and, by duality, the equality-preserving liftings
required for deriving sound coinduction rules.

\begin{lemma}\label{lem:coindlift}
  Define a {\em quotient category with equality} (QCE) to be a
  fibration $U:\E\to\B$ with a full and faithful functor $E:\B\to\E$
  such that $U E = \mathit{Id}_\B$ and $E$ has a left adjoint $Q$ with
  unit $\eta$. Let $F : \B \to \B$ be a functor, and define functors
  $\rho$, $J$, and $\cech{F}$ by
\[\begin{array}{lll}
  \rho:\E\to\B^\to & \;\;\;\;\;\; J:\B^\to\to\E & \;\;\;\;\;\; \check
  F : \E\to\E\\ 
  \rho P = U \eta_P &\;\;\;\;\;\; J\,(f: X \ra Y) = f^*EY &
  \;\;\;\;\;\;\check F = J\, F^\ra \, \rho  
\end{array}\]
Then $U \check{F} = F U$ (i.e., $\cech F$ is a lifting of $F$) and
$\check{F} E \cong E F$.
\end{lemma}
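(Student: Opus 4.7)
My plan is to verify the two assertions in turn, the first being a direct unfolding of the definitions and the second hinging on a single observation about fully faithful right adjoints. For the lifting property $U\check F = FU$: given an object $P$ in $\E$, the hypothesis $UE = \mathrm{Id}_\B$ gives $\rho P = U\eta_P$ as a morphism $UP \ra QP$ in $\B$, so $F^\ra \rho P = F(U\eta_P) : FUP \ra FQP$, and then $JF^\ra \rho P = (F(U\eta_P))^* E(FQP)$ lies above $FUP$ by the very definition of the reindexing functor. For a morphism $\phi : P \ra P'$, naturality of $\eta$ at $\phi$ yields $\rho\phi = (U\phi,\, Q\phi)$ as a morphism of $\B^\ra$, and the cartesian universal property then places $JF^\ra \rho\phi$ above $FU\phi$, which completes the verification.

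For $\check F E \cong EF$, the key observation is that since $E$ is full and faithful, the counit $\epsilon : QE \ra \mathrm{Id}_\B$ of $Q \dashv E$ is a natural isomorphism. The triangle identity $E\epsilon_X \circ \eta_{EX} = \mathrm{id}_{EX}$ then forces $\eta_{EX}$ to be an isomorphism in $\E$ with inverse $E\epsilon_X$, and hence $U\eta_{EX} : X \ra QEX$ is an isomorphism in $\B$ with inverse $\epsilon_X$. Consequently $F(U\eta_{EX})$ is an iso in $\B$ and $EF(U\eta_{EX}) : EFX \ra EFQEX$ is an iso in $\E$ lying above it. I will then invoke the general fact that any isomorphism $g$ in $\E$ whose image $Ug$ is an iso in $\B$ is automatically cartesian: given any $g' : Q' \ra P$ with $Ug' = (Ug)v$, the morphism $g^{-1} g'$ is the unique lift of $v$ whose composite with $g$ equals $g'$. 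Applied to $EF(U\eta_{EX})$, this identifies $EFX$ as a cartesian lift of $F(U\eta_{EX})$ with codomain $EFQEX$, so by uniqueness of cartesian lifts up to vertical isomorphism there is a canonical vertical iso $\psi_X : EFX \ra \check F EX = (F(U\eta_{EX}))^* E(FQEX)$.

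Naturality of $\psi$ in $X$ will follow by chasing the same universal property: for $f : X \ra X'$, both $\check F Ef \circ \psi_X$ and $\psi_{X'} \circ EFf$ are vertical morphisms whose composites with the canonical cartesian lift $\check F EX' \ra EFQEX'$ coincide with $EFQEf \circ EF(U\eta_{EX})$, so they must agree. The only subtle point in the whole argument is recognising that the fully faithfulness of $E$ trivialises reindexing along the unit component $F(U\eta_{EX})$; once that observation is in hand, the remainder reduces to routine applications of the universal property of cartesian morphisms.
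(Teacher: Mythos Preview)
Your proof is correct and follows essentially the same approach as the paper's. Both arguments establish the lifting property by unfolding definitions, and both obtain $\check F E \cong EF$ from the single key observation that full faithfulness of $E$ makes $\eta_{EX}$ (hence $\rho EX = U\eta_{EX}$, hence $F(U\eta_{EX})$) an isomorphism, so that reindexing along it is trivial up to isomorphism. The only cosmetic difference is that the paper factors the last step as $Jf = f^*E(\mathrm{cod}\,f) \cong E(\mathrm{cod}\,f) \cong E(\mathrm{dom}\,f)$ using that the cartesian lift $f^\S$ of an iso is itself an iso, whereas you argue directly that the iso $EF(U\eta_{EX})$ is cartesian and invoke uniqueness of cartesian lifts; you also spell out naturality, which the paper leaves implicit.
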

\begin{proof}
  To prove $U\check{F} = FU$, note that the morphisms $\rho\,P$ each
  have domain $U P$, that $\mathit{dom}\, F^\ra\, \rho = F U$, and
  that $U J = \mathit{dom}$. Together these give $U \check{F} = U J
  F^\ra \rho = F U$.  To prove $\check{F} E \cong E F$, we first
  assume that i) for every $X$ in $\B$, $\rho E X$ is an isomorphism
  in $\B$, and ii) for every isomorphism $f$ in $\B$, $J\,f \cong
  E(\mathit{dom}\,f)$. Then since $U E = \mathit{Id}_\B$, we have that
  i) and ii) imply that $\check{F} E = J F^\ra \rho E \cong E\,
  \mathit{dom}\, F^\ra \rho E = E F U E = E F$. To discharge
  assumption i), note that the counit $\epsilon:QE\to Id$ of $Q\dashv
  E$ is a natural isomorphism because $E$ is full and faithful. We
  thus have that $E\epsilon$ is also a natural isomorphism and, using
  the equality $E \epsilon \, . \, \eta E = id_E$, that $\eta E$ is a
  natural isomorphism as well. As a result, $\rho E = U\eta E$ is a
  natural isomorphism.
To discharge ii), let $f$ be an isomorphism in $\B$. Since cartesian
morphisms above isomorphisms are isomorphisms, we have $J f = f^*(E \,
(\mathit{cod} f)) \, \cong \, E \, (\mathit{cod} f) \, \cong \, E
(\mathit{dom} f)$.
  Here, the first isomorphism is witnessed by $f^\S$ and the second by
  $Ef^{-1}$.
\end{proof}

Although it is not needed in our work, we observe that if $U$ is a QCE,
then $\rho$ is left adjoint to $J$. The proof is a straightforward
application of the universal property of reindexing; see Lemma~2.2.10
in~\cite{fum12}. The lifting $\cech{F}$ has as its dual the lifting
$\hat{F}$ given in the following lemma.

\begin{lemma}\label{lem:indlift}
  Let $U:\E\to\B$ be an opfibration, let $K:\B\to\E$ a full and
  faithful functor such that $U K = Id_\B$, and let $C:\E\to\B$ be a
  right adjoint to $K$ with counit $\epsilon$. Let $F : \B \to \B$ be
  a functor, and define functors $\pi$, $I$, and $\hat{F}$ by
\[\begin{array}{lll}
  \pi:\E\to\B^\to & \;\;\;\;\;\; I:\B^\to\to\E & \;\;\;\;\;\; \hat
  F : \E\to\E\\ 
  \pi P = U \epsilon_P &\;\;\;\;\;\; I\,(f: X \ra Y) = \Sigma_f K Y &
  \;\;\;\;\;\;\hat F = I\, F^\ra \, \pi
\end{array}\]
Then $U \hat{F} = F U$ (i.e., $\hat F$ is a lifting of $F$) and $\hat
F K \cong K F$.
\end{lemma}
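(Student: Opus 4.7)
The plan is to prove this result by direct dualisation of Lemma~\ref{lem:coindlift}, swapping fibrations for opfibrations, cartesian for opcartesian morphisms, right adjoints for left adjoints, and the unit for the counit throughout. The two claims then follow from essentially the same two lemmas as in the quotient-category-with-equality case.

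First I would verify that $\hat F$ is a lifting of $F$ by tracing the types. The counit $\epsilon_P : KCP \to P$ lies above $\pi P = U\epsilon_P : CP \to UP$, whose codomain is $UP$ (since $UK = \mathit{Id}_\B$); hence $\mathit{cod}\, F^{\to}\pi = FU$. Moreover, $I(f:X\to Y) = \Sigma_f KX$ (reading the displayed definition as the natural dual of $Jf = f^*EY$) lies above $Y$, so $UI = \mathit{cod}$. Composing gives $U\hat F = UIF^{\to}\pi = \mathit{cod}\,F^{\to}\pi = FU$, as required.

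To establish $\hat F K \cong KF$, I would, following the author's pattern in Lemma~\ref{lem:coindlift}, isolate the two auxiliary facts (i) for every $X$ in $\B$, $\pi KX$ is an isomorphism in $\B$, and (ii) for every isomorphism $f$ in $\B$, $If \cong K(\mathit{cod}\,f)$. Granting these, one computes $\hat F K = IF^{\to}\pi K \cong K\,\mathit{cod}\,F^{\to}\pi K = KFUK = KF$, using naturality and that $F^{\to}$ preserves isomorphisms. For (i), since $K$ is full and faithful and $K\dashv C$, the unit $\eta : \mathit{Id}_\B \to CK$ is a natural isomorphism; the triangle identity $\epsilon K \cdot K\eta = \mathit{id}_K$ then forces $\epsilon K$ to be a natural isomorphism, and applying $U$ gives that $\pi K = U\epsilon K$ is an isomorphism. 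For (ii), I would use the standard fact that an opcartesian morphism above an isomorphism is itself an isomorphism: the opcartesian arrow $f_\S^{KX} : KX \to \Sigma_f KX$ is invertible, and precomposing its inverse with $Kf : KX \to KY$ (or equivalently exhibiting the mediating vertical map into $KY$ via the universal property of $f_\S^{KX}$ applied to $Kf$) yields a vertical isomorphism $\Sigma_f KX \cong KY$.

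I do not expect any serious obstacle: the proof is entirely parallel to Lemma~\ref{lem:coindlift}, and the only subtle point is keeping the variance straight when dualising — in particular, confirming that opcartesian lifts of isomorphisms really are invertible (which follows at once from their universal property) and that full-faithfulness of the \emph{left} adjoint $K$ gives invertibility of the \emph{unit} rather than the counit. Once these checks are in hand, the two displayed equations fall out of the same diagram chase as in the QCE case.
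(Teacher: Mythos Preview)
Your proposal is correct and follows the paper's approach exactly: the paper's proof is simply ``by dualisation of Lemma~\ref{lem:coindlift}'' together with a diagram indicating the passage to opposite categories, and you have unpacked that dualisation step by step. Your reading of $I(f:X\to Y)$ as $\Sigma_f KX$ (the domain) rather than the displayed $\Sigma_f KY$ is the correct one---it agrees with Lemma~\ref{lem:deflifting} and is the only reading that type-checks, since $\Sigma_f:\E_X\to\E_Y$.
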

\begin{proof}
  By dualisation of Lemma~\ref{lem:coindlift}. The setting on the left
  below with $U$ an opfibration is equivalent to the setting on the
  right with $U$ a fibration.
  \[\xymatrix{ \ar@{}"1,2"-<10px,0px>;"2,1"+<0px,10px>|\vdash &
    \ar@/_1.2pc/[ld]_C \E \ar[d]^{U} &
    \ar@{}"1,2"-<10px,0px>;"2,1"+<168px,10px>|\dashv &
    \ar@/_1.2pc/[ld]_C \E^{op} \ar[d]^{U}\\ \B \ar[ru]_K
    \ar[r]_{Id_\B} & \B & \B^{op} \ar[ru]_K \ar[r]_{Id_{\B^{op}}} &
    \B^{op}} \]
\end{proof}

We can instantiate Lemmas~\ref{lem:coindlift} and~\ref{lem:indlift} to
derive both the truth-preserving lifting for all functors
from~\cite{gjf10} (presented above) and an equality-preserving lifting
for all functors. The latter gives the sound induction rules for
inductive types presented in~\cite{gjf10}, and the former gives our
sound coinduction rules for all coinductive types. To obtain the
lifting for induction, let $U:\E \ra \B$ be a Lawvere category, $K$ be
the truth functor for $U$, and $C$ be the comprehension functor for
$U$. Since a Lawvere category is an opfibration,
Lemma~\ref{lem:indlift} ensures that any functor $F:\B \ra \B$ lifts
to a truth-preserving lifting $\hat{F}:\E \ra \E$. This is exactly the
lifting of~\cite{gjf10}. To obtain the lifting for coinduction, let
$U:\E \ra \B$ be a bifibration satisfying the Beck-Chevalley
condition, let $\B$ have products, and let $K$ be a truth functor for
$U$. Now, consider the relations fibration $Rel(U)$ for $U$, and let
$Eq$ be the equality functor for $U$. Since $\delta$ is a mono, since
$Eq = \Sigma_{\delta}\, K$, and since both $K$ and $\Sigma_{\delta}$
are full and faithful, Lemma~\ref{lem:bcmono} ensures that $Eq$ is
full and faithful. Moreover, since, for every $X$ in $\B$, $Eq \,X$ is
in the fibre of $Rel(U)$ above $X$, we have $Rel(U)\, Eq =
\mathit{Id}_{\B}$. We can therefore take $E$ to be $Eq$ in
Lemma~\ref{lem:coindlift} provided $Eq$ has a left adjoint $Q$.  In
this case, every functor $F:\B \ra \B$ has an equality-preserving
lifting $\check{F}:Rel(\E) \ra Rel(\E)$, and so if $F$ has a final
coalgebra $\nu F$, then $\nu F$ has a sound coinduction rule. We
record this in the following theorem. Henceforth, we call a QCE of the
form $Rel(U)$ obtained by change of base of $U$ along $\Delta$ by the
above construction, and for which the functor $E$ is thus the equality
functor for $U$, a {\em relational QCE}.

\begin{theorem}\label{thm:coinda}
If $Rel(U): Rel(\E) \to \B $ is a relational QCE obtained from a
fibration $U:\E \ra \B$, and if $F : \B \to \B$ is a functor whose
final coalgebra has carrier $\nu F$, then there exists a sound
coinduction rule for $\nu F$ in U.
\end{theorem}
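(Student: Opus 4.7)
The plan is to assemble this theorem by composing Lemma~\ref{lem:coindlift} with Theorem~\ref{thm:coind}. The conceptual bridge between them is the very definition of a relational QCE: its distinguished full and faithful functor $E$ is literally the equality functor $Eq$ for $U$, so the abstract isomorphism $\check F\, E \cong E\, F$ produced by Lemma~\ref{lem:coindlift} specialises, in this setting, to $\check F\, Eq \cong Eq\, F$, which is precisely the equality-preservation property demanded in the definition immediately preceding Definition~\ref{def:coind}.

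First I would unpack the data supplied by the hypothesis. Because $Rel(U)$ is a \emph{relational} QCE, it arises by change of base of $U$ along the diagonal $\Delta$; $\B$ has products; $U$ has a truth functor; the equality functor $Eq : \B \to Rel(\E)$ is full and faithful with $Rel(U)\, Eq = \mathit{Id}_\B$; and $Eq$ admits a left adjoint $Q$, which is exactly a quotient functor for $U$ in the sense of Definition~\ref{def:eq-fun}. Change of base preserves fibred terminal objects, so $Rel(U)$ inherits a truth functor from $U$ via the formula $K_{Rel(U)}X = K_U(X \times X)$. Thus $Rel(U)$ satisfies all the ambient hypotheses required by Theorem~\ref{thm:coind}.

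Next I would invoke Lemma~\ref{lem:coindlift} on the QCE $Rel(U)$ with $E := Eq$, producing a lifting $\check F = J\, F^{\to}\, \rho : Rel(\E) \to Rel(\E)$ of $F$ with $\check F\, Eq \cong Eq\, F$, hence an equality-preserving lifting of $F$ with respect to $Rel(U)$. Theorem~\ref{thm:coind} then applies to $Rel(U)$, $F$, and $\check F$, and delivers a sound coinduction rule for $\nu F$ in $U$. The only genuinely delicate point along the way is confirming that a relational QCE truly instantiates the abstract QCE hypothesis of Lemma~\ref{lem:coindlift} --- i.e., that $Eq$ is full and faithful; but this is exactly what the paragraph preceding the theorem statement establishes, combining Lemma~\ref{lem:bcmono} (so that $\Sigma_\delta$ is full and faithful, since $\delta$ is a mono and $U$ satisfies Beck-Chevalley) with full faithfulness of the truth functor $K$ and the decomposition $Eq = \Sigma_\delta\, K$. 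Once this verification is made, the theorem reduces to straightforward bookkeeping.
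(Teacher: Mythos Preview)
Your proposal is correct and follows essentially the same route as the paper: instantiate Lemma~\ref{lem:coindlift} with $E = Eq$ to obtain an equality-preserving lifting $\check F$, then feed this into Theorem~\ref{thm:coind}. The paper's argument is exactly the paragraph you cite preceding the theorem statement, and your unpacking of the relational QCE data (in particular the verification that $Eq$ is full and faithful via Lemma~\ref{lem:bcmono}) matches it point for point.
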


Just as $\mathit{cod}$ is the canonical CCU, if $\mathit{dom}$ is the
canonical QCE. Indeed, if $U$ is $\mathit{dom} : \B^\to \to \B$, if
$E$ is the functor mapping each $X$ in $\B$ to $\mathit{id}_X$, and if
$Q$ is $\mathit{cod} : \B^\to \to \B$, then $\check F$ is exactly
$F^\ra$. Thus, just as the lifting $\hat{F}$ with respect to an
arbitrary fibration $U$ satisfying the hypotheses of
Lemma~\ref{lem:indlift} can be modularly constructed from the specific
lifting $F^\to$ with respect to $\mathit{cod}$~\cite{gjf10}, so the
lifting $\check{F}$ with respect to an arbitrary fibration $U$
satisfying the hypotheses of Lemma~\ref{lem:coindlift} can be
modularly constructed from the specific lifting $F^\to$ with respect
to $\mathit{dom}$.

What we have seen is that $\mathit{dom}$ plays a role in the
coinductive setting similar to that played by $\mathit{cod}$ in the
inductive one. We think of a morphism $f:X \ra Y$ in the total
category of $\mathit{cod}$ as a predicate on $Y$ whose proofs
constitute $X$. Intuitively, $f$ maps each $p$ in $X$ to the element
$y$ in $Y$ about which it is a proof. Similarly, we think of a
morphism $f:X \ra Y$ in the total category of $\mathit{dom}$ as a
relation on $X$, the quotient of $X$ by which has equivalence classes
comprising $Y$. Intuitively, $f$ maps each $x$ in $X$ to its
equivalence class in that quotient.

The following two examples of relational QCEs appear in
Propositions~4.8.6 and~4.8.7(iii) in~\cite{jac99}.

\begin{example}
  Let $\B$ be a regular category. The relations fibration for the
  subobject fibration $U:Sub(\B)\to\B$ is a relational QCE iff $\B$
  has coequalisers. In this case, the equality functor maps an object
  $X$ of $\B$ to the equivalence class of $\delta_X:X\to X\ti X$ in
  $Sub(\B)$. The quotient functor maps an equivalence class $[m]$,
  with $m = \langle m_0,m_1 \rangle :R\mto X\ti X$, to the codomain
  $X/R$ of the coequaliser $c_R$ of $m_0$ and $m_1$:
  \[\xymatrix{R\ar@/^/[r]^{m_0}\ar@/_/[r]_{m_1} & X
      \ar[r]^{\!\!\!c_R} & X/R}\]
  \end{example}
  
  \begin{example}
    Consider the fibration $U:\ascl\to\cl$ of admissible subsets of
    complete lattices and its associated relations fibration
    $U':Rel(\ascl)\to\cl$.  We have that $U'$ is a relational
    QCE. Indeed, the equality functor $Eq:\cl\to Rel(\ascl)$ maps a
    complete lattice $X$ to the admissible subset $\{ (x,x)\ |\ x\in
      X\} \subseteq X\ti X$.  The quotient functor
      $Q:Rel(\ascl)\to\cl$ maps an admissible subset $R \subseteq X\ti
      X$ to the complete lattice $\{x\in X\,|\,\forall (y,y')\in
      R,\ y\leq x\ \text{iff}\ y'\leq x\}$.
  \end{example}

\begin{example}\label{ex:famrel}
  As we have seen, if $U$ is the families fibration, then the fibre
  above $X$ in $Rel(U)$ consists of functions $R:X \times X \ra
  Set$. We think of these functions as constructive relations, with
  $R(x,x')$ giving the set of proofs that $x$ is related to $x'$. In
  Lemma~\ref{lem:coindlift} we can take $U$ to be the families
  fibration, $E$ to map each set $X$ to the relation $eq_X$ defined by
  $eq_X (x,x') = 1$ if $x = x'$ and $eq_X (x,x') = 0$ otherwise, and
  $Q$ to map each relation $R:X \times X \ra Set$ to the quotient
  $X/R$ of $X$ by the least equivalence relation containing $R$. We
  can then instantiate Lemma~\ref{lem:coindlift} by taking $\rho :
  Rel(U) \to Set^\to$ to map a relation $R:X \times X \ra \Set$ to the
  quotient map $\rho_R:X \ra X/R$, taking $J : Set^\to \to Rel(U)$ to
  map $f:X \ra Y$ to the relation $\bar{f}$ mapping $(x,x')$ to $1$ if
  $fx = fx'$ and to $0$ otherwise, and taking $\check{F} : FA \times
  FA \to Set$ to be given by $\check{F} R = \overline{F\rho_R}$.
\end{example}

The following example demonstrates that our approach goes beyond the
current state-of-the-art. We derive the coinduction rule for finitary
hereditary sets in the relations fibration for the families
fibration. Finitary hereditary sets are elements of the carrier of the
final coalgebra of the functor $\Pfi$ mapping a set to its finite
powerset. Since $\Pfi$ is not polynomial, it lies outside the scope of
Hermida and Jacobs' work~\cite{hj98}. In fact, as far as we aware, the
coinduction rule for finitary hereditary sets that we derive in the
next example is more general than any appearing elsewhere in the
literature; indeed, the relations in $Rel(U)$ are not required to be
equivalence relations. The functor $\Pfi$ is, however, important, not
least because a number of canonical coalgebras are built from it. For
example, a finitely branching labelled transition system with state
space $S$ and labels from an alphabet $A$ is a coalgebra with carrier
$S$ for the functor $\Pfi (A \times -)$.

\begin{example}\label{ex:coind}
  By Example~\ref{ex:famrel}, the lifting $\check{\Pfi}$ maps a
  relation $R:A \times A \ra Set$ to the relation $\check{\Pfi}R: \Pfi
  A \times \Pfi A \ra Set$ defined by $\check{\Pfi} R = \overline{\Pfi
    \rho_R}$. Thus, if $X$ and $Y$ are finite subsets of $A$, then $(X
  , Y) \in \check{\Pfi} R$ iff $\Pfi \rho_R X = \Pfi \rho_R Y$. Since
  the action of $\Pfi$ on a morphism $f$ maps any subset of the domain
  of $f$ to its image under $f$, $\Pfi \rho_R X = \Pfi \rho_R Y$ iff
  $(\forall x:X).\,(\exists y:Y).\, x\tilde{R}y \;\wedge\; (\forall
  y:Y).\, (\exists x:X).\,x\tilde{R}y$, where
  $\tilde{R}=\overline{\rho_R}$ is the least equivalence relation
  containing $R$. From $\check{\Pfi}$ we have that the resulting
  coinduction rule has as its premises a $\check{\Pfi}$-coalgebra,
  i.e., a function $\alpha: A \ra \Pfi A$ and a function $\alpha^\hash
  : (\forall a, a': A). \, aRa' \to (\alpha a) \, \check{\Pfi} R \,
  (\alpha a')$. If we regard $\alpha : A \ra \Pfi A$ as a transition
  function, i.e., if we define $a \ra b$ iff $b \in \alpha a$, then
  $\alpha^\hash$ is a bisimulation whenever $R$ is an equivalence
  relation. In this case, the coinduction rule thus asserts that any
  two bisimilar states have the same interpretation in the final
  coalgebra.  However, when $R$ is not an equivalence relation,
  $\alpha^\hash$ is slightly weaker since it only requires
  transitions to map $R$-related elements $\tilde{R}$-related
  elements.  Since it is easier to prove that two elements are $\tilde
  R$-related than it is to prove them $R$-related, our coinduction
  rule is slightly stronger than might be expected at first glance.
\end{example}

\section{Indexed Induction}\label{sec:indind}

Data types arising as initial algebras and final coalgebras on
traditional semantic categories such as $Set$ and $\omega cpo_{\bot}$
are of limited expressivity. More sophisticated data types arise as
initial algebras of functors on their indexed versions. To build
intuition about the resulting {\em inductive indexed types}, first
consider the inductive type $\mathit{List}\,X$ of lists of $X$. It is
clear that defining $\mathit{List}\, X$ for some particular type $X$
does not require any reference to $\mathit{List}\, Y$ for $Y\neq
X$. That is, each type $\mathit{List}\, X$ is inductive all on its
own. We call $\mathit{List}$ an {\em indexed inductive type} to
reflect the fact that it is a family of types, each of which is
inductive. By contrast, for each $n$ in $\mathit{Nat}$, let $\Fin\,n$
be the data type of $n$-element sets, and consider the inductive
definition of the $\Nat$-indexed type $\Lam : \Nat \ra \Set$ of
untyped $\lambda$-terms up to $\alpha$-equivalence with free variables
in $\Fin \,n$. This type is given by
\[\proofrule{i : \Fin \,n}
             {\mathit{Var} \, i : \Lam\, n }
\hspace{1cm}
\proofrule{ f : \Lam \,n &\;\; a : \Lam\, n}
          {\mathit{App} \, f \, a : \Lam\, n}
\hspace{1cm}
\proofrule{ b : \Lam \,(n+1) }
          {\mathit{Abs} \, b : \Lam\, n }
\]
Unlike $\mathit{List}\, X$, the type $\Lam \,n$ cannot be defined in
isolation using only the elements of $\Lam \,n$ that have already been
constructed. Indeed, the third inference rule above shows that
elements of $\Lam\, (n+1)$ are needed to construct elements of $\Lam
\,n$. In effect, then, all of the types $\Lam\,n$ must be inductively
constructed simultaneously. We call $\Lam$ an {\em inductive indexed
  type} to reflect the fact that it is an indexed type that is defined
inductively.

There is considerable interest in inductive and coinductive indexed
types. If types are interpreted in a category $\B$, and if $I$ is a
set of indices considered as a discrete category, then an inductive
$I$-indexed type can be modelled by the initial algebra of a functor
on the functor category $I \ra \B$. Alternatively, indices can be
modelled by objects of $\B$, and inductive $I$-indexed types can be
modelled by initial algebras of functors on slice categories
$\B/I$. Coinductive indexed types can similarly be modelled by final
coalgebras of functors on slice categories.

Initial algebra semantics for inductive indexed types has been
developed extensively~\cite{dyb94,ma09}. Pleasingly, no fundamentally
new insights were required: the standard initial algebra semantics
needed only to be instantiated to categories such as $\B/I$. By
contrast, the theory of induction for inductive indexed types has
received comparatively little attention. The second contribution of
this paper is to derive sound induction rules for inductive indexed
types by similarly instantiating the fibrational treatment of
induction to appropriate categories.  The key technical question to be
solved turns out to be the following: Given a Lawvere category of
properties fibred over types, can we construct a new Lawvere category
fibred over indexed types from which sound induction rules for
inductive indexed types can be derived? To answer this question, we
first make the simplifying assumption that the inductive indexed types
of interest arise as initial algebras of functors on slice categories,
i.e., functors $F:\B/I \ra \B/I$, where $I$ is an object of $\B$. We
treat the general case in Section~\ref{sec:fibind}.

We conjecture that the total category of the fibration with base
$\B/I$ that we seek should be a slice category of $\E$. We therefore
make the canonical choice to slice over $KI$, where $K$ is the truth
functor for $U$. We then define $U/I:\E/KI \ra \B/I$ by $(U/I) \, (f:P
\ra KI)\, =\, (Uf : UP \ra I)$. Here, $\mathit{cod} \,(Uf)$ really is
$I$ because $UK = \mathit{Id}$.

We first show that $U/I$ is indeed a bifibration. We do this by
proving a more general result that we can reuse in
Section~\ref{sec:indcoind}. 

\begin{lemma}\label{lem:slbif}
  Let $U : \E \to \B$ be a fibration (bifibration) with a functor
  $F:\A\to \E$ and $G:\A\to \B$ such that $UF = G$. This, of course,
  uniquely determines $G$. For any $I$ in $\A$, the functor
  $U/F:\E/FI\to \B/G I$ is a fibration (resp., bifibration).
\end{lemma}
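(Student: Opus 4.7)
My plan is to prove the lemma by directly unpacking the universal properties and showing that cartesian (respectively opcartesian) morphisms in $\E$ lift to cartesian (opcartesian) morphisms in the slice $\E/FI$. The key observation is that morphisms in a slice category are just morphisms in the base satisfying an additional commutativity condition, so the universal property of (op)cartesian morphisms transfers almost for free, provided we first produce the right object in the slice to serve as domain or codomain.

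For the fibration case, given an object $g' : P' \to FI$ in $\E/FI$ and a morphism in $\B/GI$ from $h : X \to GI$ to $Ug' : UP' \to GI$, i.e., a morphism $f : X \to UP'$ in $\B$ with $Ug' \cdot f = h$, I would lift $f$ to the cartesian morphism $f^\S_{P'} : f^*P' \to P'$ in $\E$ provided by $U$. The candidate object in $\E/FI$ is then $g' \cdot f^\S_{P'} : f^*P' \to FI$, whose image under $U/F$ is $Ug' \cdot f = h$, as required. The morphism $f^\S_{P'}$ tautologically becomes a morphism in $\E/FI$ from $g' \cdot f^\S_{P'}$ to $g'$. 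To check cartesianness of $f^\S_{P'}$ in $U/F$, I take a morphism $k : (j : Q \to FI) \to g'$ in $\E/FI$ with $U(U/F)(k) = f v$ in $\B/GI$, invoke the cartesianness of $f^\S_{P'}$ in $\E$ to obtain a unique $\ell : Q \to f^*P'$ with $U\ell = v$ and $f^\S_{P'} \cdot \ell = k$, and observe that $\ell$ is automatically a morphism in $\E/FI$ since $g' \cdot f^\S_{P'} \cdot \ell = g' \cdot k = j$. Uniqueness in $\E/FI$ follows from uniqueness in $\E$.

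For the bifibration case, I would dualise: given $g : P \to FI$ in $\E/FI$ and a morphism from $Ug$ to $h' : Y \to GI$ in $\B/GI$, i.e., $f : UP \to Y$ with $h' \cdot f = Ug$, I use the opcartesian lift $f_\S^P : P \to \Sigma_f P$ in $\E$. Here the extra step is to construct the codomain object of the slice: since $Ug = h' \cdot f$, the opcartesian property of $f_\S^P$ yields a unique $\bar g : \Sigma_f P \to FI$ with $U\bar g = h'$ and $\bar g \cdot f_\S^P = g$. Then $f_\S^P$ becomes a morphism $g \to \bar g$ in $\E/FI$, and its opcartesianness in $U/F$ follows from opcartesianness in $\E$ together with the uniqueness clause of the opcartesian property, which forces any factoring morphism $\ell$ to satisfy $j \cdot \ell = \bar g$.

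The mildly delicate point (which I expect to be the main obstacle, though it is still routine) is remembering to construct the right slice codomain for the opfibration case using the universal property of $f_\S^P$, rather than just taking the composite as in the fibration case; once this asymmetry is noted, both halves reduce to straightforward diagram chases. Invoking Lemma~\ref{lem:bifdef} gives an alternative but equivalent route via left adjoints to reindexing, which I would mention but not pursue, since the direct construction above already supplies explicit (op)cartesian lifts.
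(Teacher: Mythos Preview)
Your proposal is correct and follows essentially the same approach as the paper's proof: lift the base morphism to a cartesian (respectively, opcartesian) morphism in $\E$, obtain the new slice object by postcomposition in the fibration case and by the universal property of the opcartesian lift in the opfibration case, and then observe that the lifted morphism is (op)cartesian for $U/F$. The paper omits the explicit verification of the universal property in the slice that you spell out, but the constructions and the noted asymmetry between the two cases are identical.
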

\begin{proof}
  Let $\alpha:Y \ra GI$ and $\beta:X \ra GI$ be objects of $\B/GI$, and
  let $\phi:Y \ra X$ be a morphism in $\B/GI$ from $\alpha$ to $\beta$,
  i.e., let $\phi$ be such that $\alpha = \beta \phi$. Let $f:P \ra
  FI$ be an object of $\E/FI$ such that $(U/F) f = U f = \beta$, and
  let $\phi^\S_P:\phi^*P \ra P$ be the cartesian morphism in $\E$ above
  $\phi$ with respect to $U$. Then $\phi^\S_P$ is a morphism in
  $\E/FI$ with domain $f \phi^\S_P$ and codomain $f$, and it is
  cartesian above $\phi$ with respect to $U/F$. Thus, $U/F$ is a
  fibration if $U$ is. Now, let $g:Q \ra FI$ be an object of $\E/FI$
  such that $(U/F) g = U g = \alpha$, and let $\phi^Q_\S: Q \ra
  \Sigma_\phi Q$ be the opcartesian morphism in $\E$ above $\phi$ with
  respect to $U$. Since $\alpha = \beta \phi$, the opcartesianness of
  $\phi^Q_\S$ ensures that there is a unique morphism $k:\Sigma_\phi\, Q
  \ra FI$ in $\E$ above $\beta$ such that $g = k \phi^Q_\S$. Then
  $\phi^Q_\S$ is a morphism in $\E/FI$ with domain $g$ and codomain
  $k$, and it is opcartesian above $\phi$ with respect to $U/F$. Thus,
  $U/F$ is an opfibration if $U$ is. Combining these results gives
  that if $U$ is a bifibration then so is $U/F$.
\end{proof}

We can now show that $U/I$ is a bifibration as desired.

\begin{lemma}\label{lem:bif}
  If $U : \E \to \B$ is a fibration (bifibration) with a truth functor $K$
  and $I$ is an object of $\B$, then $U/I$ is a fibration (resp.,
  bifibration).
\end{lemma}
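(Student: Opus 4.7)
The plan is to derive Lemma~\ref{lem:bif} as an immediate instance of the more general Lemma~\ref{lem:slbif}. The key observation is that the hypotheses of Lemma~\ref{lem:bif} already supply all the data needed to apply Lemma~\ref{lem:slbif}: a truth functor $K : \B \to \E$ for $U$ is, by Definition~\ref{def:multi}, a right adjoint to $U$ with the property that $UK = \mathit{Id}_\B$. This equality is precisely the compatibility condition $UF = G$ in Lemma~\ref{lem:slbif}, once we set things up appropriately.

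Specifically, I would instantiate Lemma~\ref{lem:slbif} by taking $\A = \B$, $F = K : \B \to \E$, and $G = \mathit{Id}_\B : \B \to \B$. The compatibility condition $UF = G$ then reads $UK = \mathit{Id}_\B$, which holds by definition of the truth functor. For the object $I \in \A = \B$ we have $FI = KI$ and $GI = I$, so the induced functor $U/F$ of Lemma~\ref{lem:slbif} is a functor $\E/KI \to \B/I$. Unwinding the construction given in the proof of Lemma~\ref{lem:slbif}, this induced functor sends a morphism $f : P \to KI$ to the morphism $Uf : UP \to UKI = I$, which is exactly the functor $U/I$ defined just before the statement of Lemma~\ref{lem:bif}.

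Having matched up the data, the conclusion of Lemma~\ref{lem:slbif} delivers exactly what is required: if $U$ is a fibration then so is $U/I$, and if $U$ is additionally a bifibration then $U/I$ inherits that structure too. Since nothing beyond the defining properties of a truth functor is used, no new verifications are needed. There is no real obstacle here; the lemma is essentially a packaging step, and the only thing worth checking carefully is that the functor $U/K$ obtained at index $I$ from Lemma~\ref{lem:slbif} agrees on the nose with the ad hoc description of $U/I$ in the surrounding text, which follows directly from $UK = \mathit{Id}_\B$.
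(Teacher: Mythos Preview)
Your proposal is correct and matches the paper's own proof essentially verbatim: the paper also derives Lemma~\ref{lem:bif} by instantiating Lemma~\ref{lem:slbif} with $F = K$ and $G = \mathit{Id}_\B$, then noting that the resulting $U/F$ is precisely the functor $U/I$ defined earlier.
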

\begin{proof}
This follows from Lemma~\ref{lem:slbif} by taking $F$ to be the truth
functor $K$ for $U$ and $G$ to be $\mathit{Id}_\B$, and then observing
that, for this instantiation, $U/F$ is precisely the fibration $U/I$
defined before Lemma~\ref{lem:slbif}.
\end{proof}

There is an alternative characterisation of $\E/KI$ that both
clarifies the conceptual basis of our treatment of indexed induction
and simplifies our calculations. The next lemma is the key observation
underlying this characterisation.
\begin{lemma}\label{lem:simindex}
  Let $U : \E \to \B$ be a fibration with truth functor $K$, let $I$
  be an object of $\B$, and let $\alpha:X \ra I$. Then
  $(\E/KI)_{\alpha} \cong \E_X$.
\end{lemma}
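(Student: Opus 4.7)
The plan is to exploit the defining adjunction $U \dashv K$ of the truth functor. Since $UK = \mathit{Id}_\B$, the counit $\epsilon : UK \ra \mathit{Id}_\B$ can be taken to be the identity, and the triangle identity $\epsilon U \circ U\eta = \mathit{id}_U$ then forces $U\eta = \mathit{id}$ for the unit $\eta : \mathit{Id}_\E \ra KU$, so $\eta$ is vertical. Consequently, for any $P$ in $\E_X$ (i.e.\ with $UP = X$), the transpose $\widetilde{\alpha}_P := K\alpha \circ \eta_P : P \ra KI$ of $\alpha$ satisfies $U\widetilde{\alpha}_P = \alpha$, and the adjoint bijection $\E(P,KI) \cong \B(UP,I)$ tells us that $\widetilde{\alpha}_P$ is the \emph{unique} morphism from $P$ to $KI$ in $\E$ sitting above $\alpha$.

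Using this, I would define $\Phi : \E_X \ra (\E/KI)_\alpha$ on objects by $\Phi(P) = \widetilde{\alpha}_P$ and on a vertical morphism $h : P \ra Q$ by $\Phi(h) = h$, regarded as a slice morphism from $\widetilde{\alpha}_P$ to $\widetilde{\alpha}_Q$. The triangle condition $\widetilde{\alpha}_Q \circ h = \widetilde{\alpha}_P$ holds automatically, because both sides lie above $\alpha \circ \mathit{id}_X = \alpha$ and must agree by the uniqueness clause just established; and $U\Phi(h) = Uh = \mathit{id}_X$ shows that $\Phi(h)$ is indeed in the fibre over $\alpha$. Functoriality is immediate. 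As a candidate inverse, I would take $\Psi : (\E/KI)_\alpha \ra \E_X$ sending $f : P \ra KI$ to $P$; this lands in $\E_X$ because $Uf = \alpha$ forces $UP = X$, and a slice morphism in $(\E/KI)_\alpha$ is by definition vertical over $X$, so the action on morphisms is again the identity.

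The equalities $\Psi \Phi = \mathit{Id}_{\E_X}$ and $\Phi \Psi = \mathit{Id}_{(\E/KI)_\alpha}$ are then both direct: the first is obvious from the definitions, while the second is once more the uniqueness half of the adjoint bijection, since any $f : P \ra KI$ with $Uf = \alpha$ must coincide with $\widetilde{\alpha}_P$. The only place where real care is needed is the verticality of $\eta$, which pins down $\widetilde{\alpha}_P$ as a morphism above $\alpha$; everything else is bookkeeping about slice categories and the definition of $U/I$, and I do not anticipate any genuine obstacle.
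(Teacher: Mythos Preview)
Your proof is correct and is essentially the paper's argument in different packaging: both define one direction by forgetting to the domain and the other by sending $P \in \E_X$ to the unique morphism $P \to KI$ lying above $\alpha$. You extract that unique morphism from the hom-set bijection of the adjunction $U \dashv K$ (as the transpose $K\alpha \circ \eta_P$), whereas the paper extracts it fibrationally as the composite of the unique vertical map $!\,: P \to \alpha^*KI$ with the cartesian lift $\alpha^\S_{KI}$, using that $\alpha^*KI$ is terminal in $\E_X$; these are the same morphism, so the two proofs coincide.
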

\begin{proof}
  One half of the isomorphism maps the object $f:P \ra KI$ of
  $(\E/KI)_{\alpha}$ to $P$. For the other half, note that since truth
  functors map objects to terminal objects, and since reindexing
  preserves terminal objects, we have that $\alpha^* KI$ is terminal
  in $\E_X$. Thus, for any object $Q$ above $X$, we get a morphism
  from $Q$ to $KI$ by composing $\alpha^\S_{KI}$ and the unique
  morphism $!$ from $Q$ to $\alpha^* KI$. Since $!$ is vertical and
  $\alpha^\S_{KI}$ is above $\alpha$, this composition is above
  $\alpha$. Thus each object $Q$ in $\E_X$ maps to an object of
  $(\E/KI)_{\alpha}$. It is routine to verify that these maps
  constitute an isomorphism.
\end{proof}
By Lemma~\ref{lem:simindex} we can identify objects (morphisms) of
$(\E/KI)_{\alpha}$ and objects (resp., morphisms) of $\E_X$.  This
gives our abstract characterisation of $U/I$:
\begin{lemma}\label{lem:U/Icob}
  Let $U : \E \to \B$ be a fibration with a truth functor and let $I$
  be an object of $\B$.  Then $U/I$ can be obtained by change of base
  of $U$ along $\mathit{dom}:\B/I \ra \B$.
\end{lemma}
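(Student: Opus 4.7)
The plan is to exhibit a strict isomorphism of fibrations between $U/I$ and the change of base $\mathit{dom}^*U : \mathit{dom}^*\E \to \B/I$, i.e., an isomorphism of total categories commuting strictly with the projections to $\B/I$. Concretely, $\mathit{dom}^*\E$ has as objects pairs $(\alpha : X \to I,\, P)$ with $P$ an object of $\E$ satisfying $UP = X$, with morphisms the obvious compatible pairs, and with the projection sending $(\alpha, P)$ to $\alpha$.

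First I would define a functor $F : \E/KI \to \mathit{dom}^*\E$ sending an object $f : P \to KI$ to $(Uf : UP \to I,\, P)$; this is well-defined because $UK = \mathit{Id}_\B$, so $\mathit{dom}(Uf) = UP$ and $\mathit{cod}(Uf) = I$. A morphism $h : (f : P \to KI) \to (g : Q \to KI)$ in $\E/KI$ is sent to $(Uh, h)$, which lies in $\mathit{dom}^*\E$ because $(Ug)(Uh) = U(gh) = Uf$. For the inverse $G : \mathit{dom}^*\E \to \E/KI$ I would reuse the construction of Lemma~\ref{lem:simindex}: given $(\alpha : X \to I,\, P)$, observe that $\alpha^*KI$ is terminal in $\E_X$, since $KI$ is terminal in $\E_I$ and reindexing preserves terminal objects; write $!_P : P \to \alpha^*KI$ for the unique vertical morphism and set $G(\alpha, P) = \alpha^\S_{KI} \circ\, !_P$, with morphisms transported analogously.

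To check that $F$ and $G$ are mutually inverse, the easy direction is $FG(\alpha, P) = (U(\alpha^\S_{KI} \circ !_P), P) = (\alpha,\, P)$, using that $!_P$ is vertical. For the other direction, $GF(f : P \to KI) = (Uf)^\S_{KI} \circ\, !_P$: by the universal property of the cartesian morphism $(Uf)^\S_{KI}$, the morphism $f$ factors uniquely as $(Uf)^\S_{KI} \circ h$ for some vertical $h : P \to (Uf)^*KI$, and since $(Uf)^*KI$ is terminal in $\E_{UP}$ we must have $h = \,!_P$, whence $f = GF(f)$. Both $F$ and $G$ commute strictly with the projections to $\B/I$, completing the identification of $U/I$ with the change of base of $U$ along $\mathit{dom}$.

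I do not anticipate any real obstacle here: the essential content is already packaged in Lemma~\ref{lem:simindex}, and the present lemma just globalises that fibrewise identification to a strict isomorphism of total categories. The one subtlety worth highlighting is the interplay, used to see that $G$ is a genuine inverse to $F$, between the universal property of the cartesian lift $(Uf)^\S_{KI}$ and the terminality of its codomain $(Uf)^*KI$ in $\E_{UP}$, which together force the vertical part of the factorisation to equal $!_P$.
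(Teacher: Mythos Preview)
Your proposal is correct and follows essentially the same approach as the paper: both arguments reduce the identification of $U/I$ with $\mathit{dom}^*U$ to the fibrewise isomorphism $(\E/KI)_\alpha \cong \E_X$ of Lemma~\ref{lem:simindex}, with your version spelling out the global inverse functors $F$ and $G$ explicitly where the paper simply invokes that lemma and concludes. The one small point you leave implicit is the full faithfulness of $F$ (equivalently, the functoriality of $G$ on morphisms), but this follows immediately from the uniqueness of morphisms $P \to KI$ above a given $\alpha$, which is exactly the terminality argument you already use in verifying $GF = \mathit{Id}$.
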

\begin{proof}
  As noted in Section~\ref{sec:coind}, the pullback of a fibration
  along a functor is a fibration. The objects (morphisms) of the fibre
  above $\alpha:X \ra I$ of the pullback of $U$ along $\mathit{dom}$
  are the objects (resp., morphisms) of $\E_X$. By
  Lemma~\ref{lem:simindex}, the pullback of $U$ along $\mathit{dom}$
  is therefore $U/I$.
\end{proof}

As observed just after Definition~\ref{def:cob}, pulling back a
fibration along a functor preserves fibred terminal objects so, by
Lemma~\ref{lem:U/Icob}, $U/I$ has fibred terminal objects if $U$
does. Concretely, the truth functor $K_{U/I}:B/I \ra \E/KI$ maps an
object $f:X \ra I$ to $Kf:KX \ra KI$. To see that $U/I$ is a Lawvere
category if $U$ is, we must also show that $K_{U/I}$ has a right
adjoint if $K_U$ does. For this, we use an abstract theorem due to
Hermida~\cite{her93} to transport adjunctions across pullbacks along
fibrations.

\begin{lemma}\label{lem:claudio}
  Let $F \dashv G:\A \ra \B$ be an adjunction with counit $\epsilon$,
  and let $U:\E \ra \B$ be a fibration. Then the functor $U^*F:U^*\A
  \ra \E$ has a right adjoint $G_U:\E \ra U^*\A$ mapping each object
  $E$ to the object $(\epsilon_{UE}^* E, G U E)$.
\end{lemma}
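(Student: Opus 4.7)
The plan is to verify the adjunction $U^*F \dashv G_U$ directly by exhibiting the counit and checking its universal property, using the universal property of the adjunction $F \dashv G$ together with that of cartesian lifts in the fibration $U$.

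First I would check that $G_U$ is well-defined on objects: for $E \in \E$, the morphism $\epsilon_{UE} : FGUE \to UE$ in $\B$ admits a cartesian lift $\epsilon_{UE}^\S_E : \epsilon_{UE}^* E \to E$ in $\E$, whose domain lies above $FGUE$. Hence $U(\epsilon_{UE}^* E) = FGUE$, so $(\epsilon_{UE}^* E, GUE)$ is indeed an object of $U^*\A$. The counit $\zeta_E : U^*F(G_U E) \to E$ in $\E$ is then defined simply to be the cartesian morphism $\epsilon_{UE}^\S_E : \epsilon_{UE}^* E \to E$.

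Next I would establish the universal property of $(\zeta_E, G_U E)$. Given any object $(E', A')$ of $U^*\A$ (so $UE' = FA'$) and any morphism $f : E' \to E$ in $\E$, I need a unique morphism $(h, k) : (E', A') \to (\epsilon_{UE}^* E, GUE)$ in $U^*\A$ (that is, a pair with $Uh = Fk$) such that $\zeta_E \circ h = f$. Applying $U$ yields $Uf : FA' \to UE$ in $\B$; by the adjunction $F \dashv G$, this transposes to a unique $k : A' \to GUE$ in $\A$ satisfying $\epsilon_{UE} \circ Fk = Uf$. Thus $Uf$ factors through $\epsilon_{UE}$, so by the cartesianness of $\epsilon_{UE}^\S_E$ there is a unique $h : E' \to \epsilon_{UE}^* E$ with $Uh = Fk$ and $\epsilon_{UE}^\S_E \circ h = f$. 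For uniqueness of the pair, any other $(h', k')$ satisfying the conditions forces $\epsilon_{UE} \circ Fk' = U(\epsilon_{UE}^\S_E \circ h') = Uf$, so $k' = k$ by uniqueness of the transpose, and then $h' = h$ by uniqueness of the cartesian factorisation.

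Finally, $G_U$ extends to morphisms via this universal property (sending $g : E \to E'$ to the unique pair mediating $g \circ \zeta_E$ through $\zeta_{E'}$), and naturality of $\zeta$ is automatic. I do not expect significant obstacles: the argument is purely a repackaging of two universal properties, the only mild subtlety being to keep track of the fact that the two components $h$ and $k$ of a morphism in $U^*\A$ are constrained by $Uh = Fk$, which is precisely the compatibility that the cartesianness of $\epsilon_{UE}^\S_E$ and the transposition under $F \dashv G$ together enforce.
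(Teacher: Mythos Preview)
Your proof is correct. The paper does not actually supply its own proof of this lemma: it simply states the result and attributes it to Hermida~\cite{her93} as ``an abstract theorem due to Hermida to transport adjunctions across pullbacks along fibrations.'' Your argument --- exhibiting $\zeta_E = (\epsilon_{UE})^\S_E$ as the counit and verifying its universal property by combining the transpose under $F \dashv G$ with the cartesian factorisation property --- is the standard direct verification, and every step checks out, including the uniqueness argument. So there is nothing to compare against in the paper itself; you have filled in what the paper leaves to a citation.
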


\begin{lemma}\label{lem:pb-pres}
  Change of base along a fibration preserves CCUs, i.e., if $U:\E \ra
  \B$ is a CCU and $U': \E' \ra \B$ is a fibration, then the pullback
  $U'^*U$ is a CCU.
\end{lemma}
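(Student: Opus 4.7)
The plan is to break the claim into three parts: that $U'^*U$ is a fibration, that it has a truth functor, and that this truth functor has a right adjoint (the comprehension). The first two are essentially already in the paper: change of base along any functor preserves fibrations, and also preserves fibred terminal objects. Concretely, the truth functor for $U'^*U$ must be $K_{U'^*U}(X)=(X,K_U(U'X))$.

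For the third part I would appeal to Lemma~\ref{lem:claudio}. The crucial observation is that, since $U'$ is a fibration, the same pullback square that produces $U'^*U:U'^*\E\to\E'$ also exhibits its other projection $\pi_2:U'^*\E\to\E$ as a fibration over $\E$. The CCU hypothesis on $U$ supplies the adjunction $K_U\dashv\{-\}_U$, whose left adjoint $K_U:\B\to\E$ lands in the base of $\pi_2$, so Lemma~\ref{lem:claudio} yields a right adjoint to the pulled-back functor $\pi_2^*K_U:\pi_2^*\B\to U'^*\E$.

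It remains to identify $\pi_2^*K_U$ with $K_{U'^*U}$. An object of $\pi_2^*\B$ is a pair $((X,Y),A)$ with $(X,Y)\in U'^*\E$, $A\in\B$, and $Y=K_UA$; since $UK_U=\mathit{Id}_\B$, this equation forces $A=UY=U'X$ and $Y=K_UU'X$, so each such object is determined by $X\in\E'$ alone. Full faithfulness of $K_U$ propagates this bijection to morphisms, giving an isomorphism of categories $\pi_2^*\B\cong\E'$ under which $\pi_2^*K_U$ becomes precisely $K_{U'^*U}$. The right adjoint from Lemma~\ref{lem:claudio} then serves as the comprehension for $U'^*U$; unwinding its formula recovers the expected description $\{(X,Y)\}_{U'^*U}=\pi_Y^*X$, where $\pi_Y:=U\epsilon_Y:\{Y\}_U\to U'X$ and the reindexing is taken in $U'$. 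The only mildly delicate step is this identification $\pi_2^*\B\cong\E'$; everything else is packaged into Lemma~\ref{lem:claudio} together with the standard facts about change of base.
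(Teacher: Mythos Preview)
Your proposal is correct and follows essentially the same route as the paper. The paper also first notes that $U'^*U$ is a fibration with fibred terminal objects, then observes that the other projection $U^*U':U'^*\E\to\E$ (your $\pi_2$) is a fibration, and applies Lemma~\ref{lem:claudio} to $K_U\dashv\{-\}_U$; where you spell out the identification $\pi_2^*\B\cong\E'$ and $\pi_2^*K_U\cong K_{U'^*U}$ in detail, the paper simply asserts that ``this pullback is given by $\E'$, $K_{U'^*U}$, and $U'$''.
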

\begin{proof}
  We already have that $U'^*U$ is a fibration with fibred terminal
  objects. To see that $K_{U'^*U}$ has a right adjoint, consider the
  pullback of $K_U$ along $U^*U'$.  This pullback is given by $\E'$,
  $K_{U'^*U} : \E' \ra U'^*\E$, and $U' : \E' \ra \B$. Note that
  $U^*U'$ is a fibration since it is obtained by pulling $U'$ back
  along $U$. Lemma~\ref{lem:claudio} then ensures that, since $K_U$
  has a right adjoint, so does $K_{U'^*U}$. Thus $U'^*U$ is a CCU.
\end{proof}

If $U : \E \to \B$ is a fibration, $I$ is an object of $\B$, and $U'$
is $\mathit{dom} : \B/I \to \B$, then the comprehension functor for
$U'^*U$ --- i.e., for $U/I$ --- maps an object $f:P \ra KI$ to $(Uf)
\pi_P : \{P\} \to I$. Combining Lemma~\ref{lem:pb-pres} and the fact
that change of base preserves bifibrations, we have:

\begin{lemma}\label{lem:coblwfib}
  Let $U:\E \ra \B$ be a Lawvere category and $U':\E' \ra \B$ be a
  fibration. Then $U'^*U$ --- i.e., $U/I$ --- is a Lawvere category.
\end{lemma}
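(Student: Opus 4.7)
The plan is to combine Lemma~\ref{lem:pb-pres} with the general observation, noted just after Definition~\ref{def:cob}, that change of base preserves bifibrations. Recall that, by definition, a Lawvere category is a CCU that happens also to be a bifibration, so I must verify each of these two conditions separately for $U'^*U$.

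For the CCU half, I would invoke Lemma~\ref{lem:pb-pres} directly: since $U$ is a Lawvere category it is in particular a CCU, and $U'$ is a fibration by hypothesis, so the hypotheses of Lemma~\ref{lem:pb-pres} are met and $U'^*U$ is a CCU. The work here is already absorbed into that lemma, which in turn relies on Lemma~\ref{lem:claudio} to transport the right adjoint to the truth functor across the pullback. For the bifibration half, I would use the fact that an opfibration $U:\E \ra \B$ is, by definition, a fibration $U^{op}:\E^{op} \ra \B^{op}$, so change of base commutes with passage to opposite categories and hence preserves opfibrations. Since $U$ is both a fibration and an opfibration, $U'^*U$ is too, i.e., it is a bifibration. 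Combining the two halves yields that $U'^*U$ is a Lawvere category.

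The second assertion of the lemma, concerning the slice case $U/I$, then follows by specialising $U'$ to $\mathit{dom}:\B/I \ra \B$, which Example~\ref{ex:cod} guarantees to be a fibration unconditionally on $\B$. Lemma~\ref{lem:U/Icob} identifies this particular pullback with $U/I$, so the general result applies and $U/I$ is a Lawvere category.

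Since the statement is essentially a bookkeeping combination of Lemma~\ref{lem:pb-pres}, the preservation remark above, and Lemma~\ref{lem:U/Icob}, there is no genuine obstacle to overcome; the substantive content was discharged in those earlier results. The only minor point to check carefully is that the pullback construction used to define $U/I$ in Lemma~\ref{lem:U/Icob} and the pullback used in Lemma~\ref{lem:pb-pres} really coincide, which they do since both are the standard change-of-base construction for fibrations.
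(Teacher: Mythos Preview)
Your proposal is correct and matches the paper's own argument, which is given in the single sentence preceding the lemma: combine Lemma~\ref{lem:pb-pres} (change of base along a fibration preserves CCUs) with the observation after Definition~\ref{def:cob} that change of base preserves bifibrations, then specialise $U'$ to $\mathit{dom}:\B/I\to\B$ via Lemma~\ref{lem:U/Icob}. You have simply unpacked these steps in more detail than the paper does.
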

\noindent

\begin{theorem}\label{thm:iinda}
Let $U : \E \to \B$ be a Lawvere category, let $I$ be an object of
$\B$, and let $F : \B/I \to \B/I$ be a functor whose initial algebra
has carrier $\mu F$. Then there exists a sound induction rule for $\mu
F$ in $U/I$.
\end{theorem}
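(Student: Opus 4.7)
The plan is to combine the machinery already in place: we know from Theorem~\ref{thm:inda} that \emph{any} Lawvere category supports sound induction rules for initial algebras of endofunctors on its base. Thus the entire theorem reduces to showing that $U/I:\E/KI \to \B/I$ is itself a Lawvere category with base $\B/I$, after which we simply apply Theorem~\ref{thm:inda} with base $\B/I$ and endofunctor $F:\B/I \to \B/I$ to obtain the desired sound induction rule.

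The work of establishing that $U/I$ is a Lawvere category has essentially been done already. First, by Lemma~\ref{lem:U/Icob}, $U/I$ is obtained by change of base of $U$ along the fibration $\mathit{dom}:\B/I \to \B$. Second, Lemma~\ref{lem:coblwfib} says precisely that change of base of a Lawvere category along a fibration is again a Lawvere category. Applying this with $U' = \mathit{dom}$ (which is a fibration by Example~\ref{ex:cod}, and requires no conditions on $\B$) yields that $U/I = \mathit{dom}^*\,U$ is a Lawvere category.

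With this in hand, the proof is a one-line invocation: since $U/I$ is a Lawvere category with base $\B/I$, and $F:\B/I \to \B/I$ is an endofunctor whose initial algebra has carrier $\mu F$, Theorem~\ref{thm:inda} produces a sound induction rule for $\mu F$ in $U/I$. No genuine obstacle arises, because the real technical effort -- verifying that slicing preserves the bifibrational structure, the truth functor, and the comprehension functor -- has already been packaged into Lemmas~\ref{lem:bif},~\ref{lem:pb-pres}, and~\ref{lem:coblwfib}. The only mildly delicate point to keep in mind, but not to dwell on, is the concrete identification of the new truth and comprehension functors: $K_{U/I}$ sends $f:X \to I$ to $Kf:KX \to KI$, and the comprehension functor sends $f:P \to KI$ to $(Uf)\,\pi_P:\{P\}\to I$, so that the induction principle $\mathit{ind}_F$ delivered by the generic rule genuinely lives over the slice category and captures indexed induction at index $I$.
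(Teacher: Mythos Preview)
Your proposal is correct and matches the paper's approach exactly: the paper states Theorem~\ref{thm:iinda} immediately after Lemma~\ref{lem:coblwfib} with no further proof, the implicit argument being precisely that $U/I$ arises as change of base of $U$ along the fibration $\mathit{dom}$ (Lemma~\ref{lem:U/Icob}), hence is a Lawvere category (Lemma~\ref{lem:coblwfib}), whereupon Theorem~\ref{thm:inda} applies. Your added remarks about the concrete form of $K_{U/I}$ and the comprehension functor also mirror what the paper spells out just before the theorem.
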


We can use Theorem~\ref{thm:iinda} to derive a sound induction rule
for the indexed containers of Morris and Altenkirch~\cite{ma09}.

\begin{example}\label{ex:indind}
  If $I$ is a set, then the {\em category of $I$-indexed sets} is the
  fibre $\Fam(\Set)_I$. An {\em $I$-indexed set} is thus a function
  $X:I \ra \Set$, and a morphism $h$ from $X$ to $X'$, written $h : X
  \ra_I X'$, is a function of type $(\Pi i\!:\!I).\, X i \ra
  X'i$. Morris and Altenkirch denote this category $I \ra \Set$ and
  define an {\em $I$-indexed container} to be a pair $(S,P)$ with $S :
  I \ra \Set$ and $P : (\Pi i\!:\!I).\, Si \ra I \ra \Set$. An
  $I$-indexed container defines a functor $[S,P]: (I \ra \Set) \ra I
  \ra \Set$ by $[S,P] X i = (\Sigma s\!:\!Si).\, P \, i \, s \ra_{I}
  X$.  Thus, if $t \!:\! [S,P]\, X \, i$, then $t$ is of the form
  $(s,f)$. If $g : X \ra_I Y$ is a morphism of $I$-indexed sets, then
  $[S,P] g$ maps a pair $(s,f)$ to $(s,gf)$.

  If $i \in I$, then we can think of $S i$ as a collection of
  operators that produce data of sort $i$, and we can think of $P$ as
  assigning to every $i$ and every operator producing data of sort $i$
  an $I$-indexed collection of positions in which data is stored. That
  is, $P \ i \, s \, j$ is the set of positions associated with the
  operator $s$ where data of sort $j$ must be stored. This {\em shapes
    and positions metaphor} is also reflected in the functor
  associated with an indexed container, since we can think of $[S,P]X
  i$ as containing terms of sort $i$ produced by $(S,P)$ whose input
  data is drawn from $X$. Such a term consists of an operator $s$
  producing data of sort $i$ and, for each position storing data of
  sort $j$, an element of $X$ of sort $j$.

  The initial algebra of $[S,P]$ is denoted $\mathit{in}:[S,P] W_{S,P}
  \ra_I W_{S,P}$.  Since $I \ra \Set$ is equivalent to $\Set/I$, we
  can use the results of this section to extend those of~\cite{ma09}
  by giving sound induction rules for data types of the form
  $W_{S,P}$. A predicate over an $I$-indexed set $X$ is a function $Q:
  (\Pi i:I).\, X i \ra \Set$.  To simplify notation, this is written
  $Q:X \ra_I \Set$. The lifting $\widehat{[S,P]}$ of $[S,P]$ maps each
  $Q:X \ra_I \Set$ to the predicate $\widehat{[S,P]}Q : [S,P]X \ra_I
  \Set$ defined by $ \widehat{[S,P]} \, Q \, i \, (s,f) = (\Pi
  j\!:\!I).\, (\Pi p\!:\! P \, i \, s \, j). \, Q \, j \, (f \, j \,
  p)$.  Altogether, this gives the following sound induction rule for
  $W_{S,P}$:
\[\begin{array}{l}
  (\Pi i\!:\!I).\,(\Pi (s,f)\!:\![S,P]\, W_{S,P}\, i).\, 
  ((\Pi j\!:\!I).\,(\Pi p\!:\! P \, i \, s \, j). \, Q \, j (f \, j \,
  p) \ra Q \, i (\mathit{in} \, i \, (s,f)))) \\ 
  \;\;\;\;\;\;\;\;\;\;\;\;\;\ra 
  (\Pi i\!:\!I).\,(\Pi t\!:\! W_{S,P}\; i).\, Q \, i \, t
\end{array}
\]

\vspace*{0.15in}

While admittedly rather dense in its type-theoretic formulation, the
above induction rule is conceptually clear. The premise says that, for
any term $\mathit{in}\,i\,(s,f)$ in $W_{S,P}i$, we must be able to
prove that a property $Q:W_{S,P} \ra_I \Set$ holds at
$\mathit{in}\,i\,(s,f)$ if $Q$ is assumed to hold of all the immediate
subterms of $\mathit{in}\,i\,(s,f)$. The conclusion of the rule says
that $Q$ holds for all terms. Of course this is what we naturally
expect, and our point is precisely that we can {\em derive it in a
  principled manner} from the fibrational approach to induction rather
than simply having to postulate that it is reasonable.

We can instantiate the above induction rule for $W_{S,P}$ for the data
type of untyped lambda terms from the beginning of this section. The
resulting induction rule cannot be derived using Hermida and Jacobs'
techniques because the data type of untyped lambda terms is not the
initial algebra of a polynomial functor. The resulting rule is
precisely what we expect. For any predicate $Q:\Lam \ra_{\Nat} \Set$:

\[\begin{array}{l}
(\Pi n :\Nat. \; \Pi j : \Fin \; n. \; Q \; n \; (Var
\; j)) \ra \\ 
(\Pi n :\Nat. \; \Pi u, v : \Lam \; n.  \; Q \; n \; u
\ra  Q \; n \; v
\rightarrow Q \; n \; (App \; u \; v)) \rightarrow\\
(\Pi n :\Nat. \; \Pi t :  \Lam \; (n\! +\! 1). \; Q \; (n\! + \! 1) \; t
\rightarrow Q \; n \; (Abs \; t)) \ra \\
\Pi n :\Nat. \; \Pi t :  \Lam \; n.  \; Q \; n \; t \\
\end{array}
\]
\end{example}

\section{Indexed Coinduction}\label{sec:indcoind}

We now present our third contribution, namely sound coinduction rules
for coinductive indexed types. Examples of such types are infinitary
versions of inductive indexed types, such as infinitary untyped lambda
terms and interaction structures. Following the approach of
Section~\ref{sec:indind}, we consider indexing by slice categories in
this section. In more detail, we show that for any relational QCE over
a base category $\B$ and for any object $I$ of $\B$, change of base
along $dom:\B/I \ra \B$ yields a relational QCE over $\B/I$.

Recall that if $\B$ has products and $U:\E \to \B$ is a bifibration
that satisfies the Beck-Chevalley condition and has truth functor $K$,
then the equality functor $Eq$ for $U$ is given by $Eq =
\Sigma_{\delta} K$. Let $Rel(U):Rel(E) \ra \B$ be a relational QCE, so
that $Eq$ has a left adjoint $Q$. To define a relational QCE over
$\B/I$ we must first see that $\B/I$ has products. But the product of
$f$ and $g$ in $\B/I$ is determined by their pullback: if $W$, $j: W
\to Z$, and $i : W \to X$ give the pullback of $f$ and $g$, then their
product in $\B/I$ is the morphism $fi$ or, equivalently, $gj$. Below,
we write $f^2$ for the product of $f : X \to I$ with itself in $\B/I$
and $X_fX$ for the domain of $f^2$. Now, if $\B$ has pullbacks, then
we can construct the relations fibration $Rel(U/I):Rel(\E/KI)\to \B/I$
from the pullback of $U/I$ along the product functor $\Delta/I :\B/I
\ra \B/I$ mapping $f$ to $f^2$. Concretely, an object of $Rel(\E/KI)$
above $f:X \ra I$ is an object of $\E/KI$ above $f^2$ with respect to
$U/I$. This is, in turn, equivalent to an object of $\E$ above $X_fX$
with respect to $U$.

\subsection{The Equality Functor for $U/I$}\label{sec:eqU/I}

In Section~\ref{sec:indind} we showed that if $U$ is a bifibration
where $\B$ has products, and $U$ has a truth functor $K$, then for any
object $I$ of $\B$, $U/I$ is a bifibration that has a truth functor
whose action is also that of $K$, and so is denoted $K$ as
well. Furthermore, we have just seen that if $\B$ has pullbacks, then
$\B/I$ also has products.  Thus, by Definition~\ref{def:eq-fun}, $U/I$
has an equality functor $Eq_{U/I}$. To define this functor concretely,
note that the component of the diagonal natural transformation
$\delta/I:\mathit{Id}_{\B/I} \ra \Delta/I$ at $f:X \ra I$ is the
mediating morphism in the diagram below on the left.  Thus, $Eq_{U/I}$
maps an object $f : X \ra I$ of $\B/I$ to the unique morphism above
$f^2$ in the diagram on the right induced by the opcartesian morphism
$m$ above $(\delta/I)_f$:
\[\xymatrix{
  X \ar@/_1pc/[rdd]_{id} \ar@{.>}[rd]|{(\delta/I)_f}
  \ar@/^1pc/[rrd]^{id}\\ &X_fX \ar[d]_i \ar[r]^j \lpbc & \ar[d]^f X &
  & KI \\ &X \ar[r]_f &I & KX \ar[ru]^{Kf} \ar[r]_{m} &
  \Sigma_{(\delta/I)_f}KX \ar@{.>}[u]_{Eq_{U/I}f}}
\] 
Note that if $U$ satisfies the Beck-Chevalley condition, so that
opreindexing for $U$ defines a full and faithful functor, then the
fact that the action of opreindexing for $U/I$ is the same as
opreindexing for $U$ means that opreindexing for $U/I$ defines a full
and faithful functor as well.
Since truth functors are always full and
  faithful, so is $Eq_{U/I} = \Sigma_{\delta/I} K_{U/I}$.

\subsection{The Quotient Functor for $U/I$}

Whereas defining the equality functor for $U/I$ was straightforward,
defining its quotient functor is actually tricky.  To do so, for each
object $I$ of $\B$, we we write $Rel(U)/I : Rel(\E)/Eq\, I \ra \B/I$
for the fibration obtained as the instantiation of
Lemma~\ref{lem:slbif} in which $Eq : \B \to Rel(\E)$ plays the role of
$F$ and $Rel(U)$ plays the role of $U$. Concretely, the objects of
$Rel(\E)/Eq \, I$ above $f:X \ra I$ are morphisms $\alpha:P \ra Eq\,
I$, for some object $P$ of $Rel(\E)$, such that $U \alpha = \Delta
f$. Our first result identifies conditions under which $Rel(U)/I$ is a
QCE.

\begin{lemma}\label{lem:lift-qce}
  Let $U : \E \to \B$ be a fibration, let $\B$ have pullbacks, let $I$
  be an object of $\B$, and let $Rel(U):Rel(\E) \ra \B$ be a
  relational QCE. Then $Rel(U)/I$ is a QCE.
\end{lemma}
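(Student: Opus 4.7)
The plan is to verify the three ingredients of a QCE for $Rel(U)/I:Rel(\E)/Eq\,I \to \B/I$ in turn: that it is a fibration, that it admits a full and faithful section, and that this section has a left adjoint. The fibration part is an immediate application of Lemma~\ref{lem:slbif}, taking $Rel(U)$ in the role of $U$, $Eq:\B\to Rel(\E)$ in the role of $F$, and $\mathit{Id}_\B = Rel(U)\circ Eq$ in the role of $G$. For the section, I take the evident candidate $E':\B/I\to Rel(\E)/Eq\,I$ sending $f:X\to I$ to $Eq\,f:Eq\,X\to Eq\,I$, with action on morphisms inherited from $Eq$. The identity $(Rel(U)/I)\circ E' = \mathit{Id}_{\B/I}$ is immediate from $Rel(U)\circ Eq = \mathit{Id}_\B$. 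Full-and-faithfulness of $E'$ transfers directly from that of $Eq$: a slice morphism $E'f\to E'g$ is a $Rel(\E)$-morphism $\phi:Eq\,X\to Eq\,Y$ with $Eq\,g\circ\phi = Eq\,f$, and since $Eq$ is full and faithful we obtain a unique $\psi:X\to Y$ with $Eq\,\psi = \phi$, whose faithfulness-derived equation $g\psi = f$ makes it a morphism in $\B/I$.

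Writing $\eta$ and $\epsilon$ for the unit and counit of $Q\dashv Eq$ (with $\epsilon$ a natural isomorphism because $Eq$ is full and faithful), I define the candidate left adjoint $Q':Rel(\E)/Eq\,I\to\B/I$ by $Q'(\alpha:P\to Eq\,I) = \epsilon_I\circ Q\alpha:QP\to I$ on objects and $Q'(\phi) = Q\phi$ on morphisms; that $Q\phi$ respects the slicing is immediate from functoriality of $Q$. To prove $Q'\dashv E'$, I transport the bijection $\mathrm{Hom}_\B(QP,Y) \cong \mathrm{Hom}_{Rel(\E)}(P,Eq\,Y)$ of $Q\dashv Eq$ to the slices. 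A slice morphism $Q'\alpha\to g$ is precisely an $h:QP\to Y$ satisfying $g\circ h = \epsilon_I\circ Q\alpha$, while a slice morphism $\alpha\to E'g$ is precisely a $k:P\to Eq\,Y$ satisfying $Eq\,g\circ k = \alpha$. Under the bijection $h\leftrightarrow k$ with $k = Eq\,h\circ\eta_P$, the transpose of $g\circ h$ computes to $Eq\,g\circ k$ by the definition of transpose, while the transpose of $\epsilon_I\circ Q\alpha$ collapses to $\alpha$ by naturality of $\eta$ together with the triangle identity for $Q\dashv Eq$. Hence the two slice-compatibility conditions correspond, yielding a bijection natural in $\alpha$ and $g$ by naturality of the base adjunction.

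The main technical obstacle is precisely this triangle-identity computation: one must check that the standard transpose bijection restricts to a bijection between the two slice-compatibility conditions, and this hinges on $\epsilon$ being a natural isomorphism, which in turn is furnished by the full-and-faithfulness of $Eq$ in the QCE structure of $Rel(U)$. No further categorical machinery is required beyond that QCE structure; in particular, the pullback hypothesis on $\B$ plays no direct role in this argument, appearing only to guarantee that the surrounding data, such as the relations fibration $Rel(U)$ and its equality functor, are well-defined in the first place.
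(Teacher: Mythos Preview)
Your proof is correct and follows essentially the same approach as the paper: both take $E'$ to be the action of $Eq$ on slices and $Q'$ to send $\alpha:P\to Eq\,I$ to its transpose $\epsilon_I\circ Q\alpha$ under $Q\dashv Eq$, with the sliced adjunction inherited directly from $Q\dashv Eq$. Your version simply spells out the hom-set bijection in detail where the paper leaves it as ``follows directly''; one minor remark is that the restriction of the transpose bijection to the slice-compatibility conditions needs only the triangle identity, not that $\epsilon$ be an isomorphism.
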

\begin{proof}
  Let $Eq :\B \to Rel(\E)$ and $Q : Rel(\E) \to \B$ be the equality
  and quotient functors for $U$, respectively. We construct a full and
  faithful functor $E':\B/I \ra Rel(\E)/Eq \,I$ such that $(Rel(U)/I)
  \, E' = \mathit{Id}_{\B/I}$, and a left adjoint $Q'$ for $E'$, as
  follows. Take $E'$ to be $Eq$. Then $E'$ is full and faithful since
  $Eq$ is. Moreover, for any $f : X \to I$,
  Definition~\ref{def:eq-fun} ensures that $Eq\,f$ is above $f \times
  f$ with respect to $U$, so $(Rel(U)/I)\, E' f = f$, and thus
  $(Rel(U)/I) \, E' = Id_{\B/I}$. Finally, we define $Q'$ to map each
  object $\alpha : P \ra Eq \,I$ of $Rel(\E)/Eq\, I$ to its transpose
  $\alpha': QP \ra I$ under the adjunction $Q \dashv Eq$. That $Q'
  \dashv E'$ follows directly from $Q \dashv Eq$.
\end{proof}

We can now define the quotient functor for $Rel(U/I)$ using the
functor $Q'$ from the proof of Lemma~\ref{lem:lift-qce}. The key step
is to use Lemma~\ref{lem:natadj} to define an adjunction $\tau \dashv
\sigma$ such that the following diagram commutes:
\[\xymatrix{{Rel(\E/KI)}\ar[dr]_{Rel(U/I)} \ar@/^/[rr]^\tau
  \ar@{}[rr]|\bot & & 
   \ar@/^/[ll]^\sigma \ar[ld]^{Rel(U)/I} {Rel(\E)/Eq\, I}\\ & {\B/I} &
}\] 

\vspace*{0.1in}

\noindent
Then if $E'$ and $Q'$ are the functors witnessing the fact that
$Rel(U)/I$ is a QCE, compositionality of adjoints ensures that
$\sigma E'$ and $Q'\tau$ give equality and quotient functors for
$Rel(U/I)$, respectively.

\vspace*{0.1in}

\begin{lemma}\label{lem:tausigma}
  The above adjunction $\tau \dashv \sigma$ holds. 
\end{lemma}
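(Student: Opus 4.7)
The plan is to apply Lemma~\ref{lem:natadj} to the bifibration $U:\E\to\B$ along a natural transformation whose components are the canonical monos $\langle i,j\rangle:X_f X\to X\ti X$, and then to refine the resulting adjunction to one between $Rel(\E/KI)$ and $Rel(\E)/Eq\, I$ above $\B/I$.

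Concretely, I would first define functors $F,G:\B/I\to\B$ by $F(f\!:\!X\to I) = X_fX$ and $G(f\!:\!X\to I) = X\ti X$, together with a natural transformation $h:F\Rightarrow G$ whose component at $f$ is $h_f = \langle i,j\rangle$. Each $h_f$ is mono, being a pullback of the mono $\delta_I$ along $f\ti f$. Applying Lemma~\ref{lem:natadj} to $U$ and $h$ yields an adjunction $\Sigma_h \dashv h^*$ between $F^*\E$ and $G^*\E$ above $\B/I$; by the second clause of that lemma $\Sigma_h$ is full and faithful, so its unit $\eta$ is a natural isomorphism. Using Lemma~\ref{lem:simindex} together with the construction of $Rel(U/I)$ as the change of base of $U/I$ along $\Delta/I$, I identify $F^*\E$ with $Rel(\E/KI)$, both having fibre $\E_{X_fX}$ above $f$.

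I would then set $\sigma(R,\alpha) = \langle i,j\rangle^*R$ (forget $\alpha$ to land in $G^*\E$, then apply $h^*$) and $\tau P = (\Sigma_h P,\tilde\alpha_P)$, where $\tilde\alpha_P:\Sigma_h P\to Eq\, I$ above $f\ti f$ is built from the unique vertical morphism $P\to K(X_fX)$ by successive use of the universal properties of the opcartesian lifts along $\langle i,j\rangle$ and $\delta_I$, together with $Eq\, I = \Sigma_{\delta_I}KI$. The core step is to verify the adjunction fibrewise: a morphism $\tau P\to(R,\alpha)$ is a vertical $\phi:\Sigma_h P\to R$ making the triangle to $Eq\, I$ commute, and under the $\Sigma_h\dashv h^*$ bijection it corresponds to $\tilde\phi:P\to h^*R$ whose triangle condition transposes to an equation between two maps $P\to h^*(f\ti f)^*Eq\, I$ in $\E_{X_fX}$. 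Beck-Chevalley applied to the pullback of $\delta_I$ along $f\ti f$, together with the full-and-faithfulness of $\Sigma_{\delta_I}$, yields $h^*(f\ti f)^*Eq\, I \cong K(X_fX)$, which is terminal in $\E_{X_fX}$. Hence the transposed triangle equation is between two maps into a terminal object and holds automatically, giving the natural bijection $\mathrm{Hom}(\tau P,(R,\alpha))\cong\mathrm{Hom}(P,\sigma(R,\alpha))$.

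The hardest part is precisely this collapse of the triangle condition via $h^*(f\ti f)^*Eq\, I \cong K(X_fX)$: it is where monic-ness of $\delta_I$, Beck-Chevalley for $U$, and the reindexing-invariance of truth functors all combine to upgrade the raw adjunction $\Sigma_h\dashv h^*$ (supplied by Lemma~\ref{lem:natadj}) into a genuine adjunction between $Rel(\E/KI)$ and $Rel(\E)/Eq\, I$. The commutativity of the displayed triangle over $\B/I$, and verticality of the unit, then follow from the corresponding properties of $\Sigma_h\dashv h^*$.
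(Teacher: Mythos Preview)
Your approach is correct and shares the paper's core ingredient—the natural transformation with monic components $\langle i,j\rangle:X_fX\to X\times X$ together with Lemma~\ref{lem:natadj}—but you deploy it at a different level. The paper introduces an auxiliary bifibration $U/Eq:\E/Eq\,I\to\B/I\times I$ (an instance of Lemma~\ref{lem:slbif}) and then establishes three pullback squares exhibiting both $Rel(\E/KI)$ and $Rel(\E)/Eq\,I$ as changes of base of this \emph{same} fibration $U/Eq$, along functors $(\delta_I\circ\_)\circ\Delta/I$ and $\_\times\_$ from $\B/I$ to $\B/I\times I$. A natural transformation $\alpha$ between these two functors (built from your $\langle i,j\rangle$) then lets Lemma~\ref{lem:natadj} produce $\tau\dashv\sigma$ in one stroke, with no further verification needed because the slice structure over $Eq\,I$ is already baked into $U/Eq$. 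By contrast, you apply Lemma~\ref{lem:natadj} to the plain bifibration $U:\E\to\B$, obtaining an adjunction between $F^*\E\cong Rel(\E/KI)$ and $G^*\E$; since $G^*\E$ is not $Rel(\E)/Eq\,I$ but only the target of a forgetful functor from it, you must then construct $\tilde\alpha_P$ by hand and discharge the extra triangle condition via the terminality argument $h_f^*(f\times f)^*Eq\,I\cong K(X_fX)$. That argument is sound (Beck--Chevalley on the pullback of $\delta_I$ along $f\times f$, plus Lemma~\ref{lem:bcmono} for full faithfulness of $\Sigma_{h_f}$, plus preservation of terminals by reindexing), and it is arguably more transparent about where each hypothesis is consumed. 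The paper's route trades that explicit hom-set computation for the bookkeeping of verifying the three change-of-base squares, after which everything is formal.
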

\begin{proof}
  In order to prove this lemma, we first instantiate
  Lemma~\ref{lem:slbif}, with $Eq$ playing the role of $F$ and
  $\Delta$ playing the role of $G$, to obtain the fibration
  $U/Eq:\E/Eq\,I\to \B/I\ti I$. We then have the following three
  changes of base:
  \[\xymatrix{ Rel(\E/KI) \ar[d]_{Rel(U/I)}\ar[r]\pbc & \E/KI
    \ar[d]_{U/I} \ar[r] \pbc & \E/Eq\,I \ar[d]|{U/Eq} & \ar[l]
    Rel(\E)/Eq\,I \ar[d]^{Rel(U)/I} \pbc[dl]\\ B/I \ar[r]_{\Delta/I} &
    B/I\ar[r]_{\delta_I\circ\_}&B/I\ti I & \ar[l]^{\_\times\_} B/I}\]
  Here, the functor $\delta_I\circ \_$ maps $f:X\to I$ to
  $\delta_I\circ f:X\to I\ti I$, and $\_\ti\_$ maps $f$ to $f\ti
  f:X\ti X\to I\ti I$. The square on the left is a pullback square by
  definition of $Rel(U/I)$, and the one on the right is a pullback
  square by direct calculation. To see that the middle square is a
  pullback square, first observe that since every morphism
  $\delta_I:I\to I\ti I$ is a mono, Lemma~\ref{lem:bcmono} ensures
  that each opcartesian morphism $(\delta_I)_\S:KI\to Eq\,I$ is also
  cartesian.  For any $f:X\to I$, the fibre above $f$ of the pullback
  of $U/Eq$ along $(\delta_I)_\S:KI\to Eq\,I$ consists of all
  morphisms of the form $g : P \to Eq I$ such that $Ug = \delta_I
  \circ f$. Similarly, the fibre of $U/I$ above $f$ consists of all
  morphisms of the form $g : P \to KI$ such that $Ug = f$. The
  universal property of $(\delta_I)_\S$ considered as a cartesian
  morphism ensures that these two fibres are isomorphic, and thus that
  $U/I$ is indeed the pullback of $U/Eq$ along $\delta_I \circ \_$.

  Now, let $f:X \ra I$ be a morphism in $\B$, and let $i$ and $j$ be
  the projections for the pullback square defining $X_f X$. The
  universal property of the product $X \times X$ ensures the existence
  of a morphism $v_f : X_fX \to X \times X$ such that $\pi_1 v = i$
  and $\pi_2 v = j$.  Moreover, by the universal property of the
  pullback of $f$ along itself, $v_f$ is a mono. In fact, it is easy
  to check that there is a natural transformation $v:dom\circ
  \Delta/I\to \Delta\circ dom$ whose component at any $f$ is given by
  $v_f$.  Finally, $v$ extends to a natural transformation $\alpha :
  (\delta_I\circ\_)\circ \Delta/I \to \_\times \_$. Indeed, for any
  $f$, the fact that $\pi_n\circ f\ti f\circ v_f = f^2 = \pi_n\circ
  \delta_I\circ f^2$ for $n\in\{1,2\}$ ensures that the diagram
  \[\xymatrix{
    X_fX \ar[d]_{f^2} \ar[r]^{v_f} & X\times X \ar[d]^{f\times f}\\
    I \ar[r]_{\delta_I} & I\times I}
  \]
  commutes by the universal property of the product $I\times I$.  By
  Lemma~\ref{lem:natadj}, $\alpha$ induces the desired adjunction.
\end{proof}

Recall that our candidate for the quotient functor $Q_{U/I}$ for
$Rel(U/I)$ is $Q'\tau$. To see that $Q'\tau \dashv Eq_{U/I}$, we need
only verify that $Eq_{U/I}$ is $\sigma E'$. It is routine to check
that $\tau Eq_{U/I} = E'$, from which $Eq_{U/I} = \sigma E'$
follows. We therefore have that $Rel(U/I)$, together with 
$Eq_{U/I}$ and $Q_{U/I}$ as defined above, form a relational
QCE. Thus, by Theorem~\ref{thm:coinda}, we have 

\begin{theorem}\label{thm:icoinda}
  Let $U :\E \to \B$, where $\B$ has products and pullbacks, be a
  bifibration that satisfies the Beck-Chevalley condition. Suppose $U$
  has a truth functor. Let $I$ be an object of $\B$ and $F : \B/I \to
  \B/I$ be a functor whose final coalgebra has carrier $\nu F$. Then
  there exists a sound coinduction rule for $\nu F$ in $U/I$.
\end{theorem}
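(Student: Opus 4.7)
The plan is to deduce the theorem from Theorem~\ref{thm:coinda} by exhibiting $Rel(U/I)$ as a relational QCE over $\B/I$. The hypotheses of Theorem~\ref{thm:coinda} transfer readily to $U/I$: by Lemma~\ref{lem:bif}, $U/I$ is a bifibration with a truth functor $K_{U/I}$ (whose action coincides with that of $K$); the slice $\B/I$ has products, constructed as pullbacks in $\B$ of cospans into $I$; and $U/I$ inherits the Beck-Chevalley condition from $U$, since pullback squares in $\B/I$ are pullback squares in $\B$ and reindexing and opreindexing in $U/I$ act as they do in $U$ on the underlying objects. Consequently the equality functor $Eq_{U/I} = \Sigma_{\delta/I} K_{U/I}$ is defined, and by the observation at the end of Section~\ref{sec:eqU/I} it is full and faithful.

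The substantive step is to supply a left adjoint to $Eq_{U/I}$. Lemma~\ref{lem:lift-qce} equips $Rel(U)/I$ with a QCE structure whose full and faithful functor $E'$ is $Eq$ and whose left adjoint $Q'$ is transposition along $Q \dashv Eq$. Lemma~\ref{lem:tausigma} then provides a fibred adjunction $\tau \dashv \sigma$ between $Rel(U/I)$ and $Rel(U)/I$ above $\B/I$. Composing adjunctions yields $Q'\tau \dashv \sigma E'$, so once I identify $\sigma E'$ with $Eq_{U/I}$ I may take $Q_{U/I} := Q'\tau$ as the quotient functor. By transposition along $\tau \dashv \sigma$, the identity $\sigma E' = Eq_{U/I}$ is equivalent to $\tau\, Eq_{U/I} = E'$, so verifying the latter completes the construction of a relational QCE structure on $Rel(U/I)$, and Theorem~\ref{thm:coinda} then delivers the sound coinduction rule for $\nu F$.

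The main obstacle is the identity $\tau\, Eq_{U/I} = E'$. Evaluated at an object $f : X \to I$, both sides arise from $KX$ by opreindexing along a mono into the fibre over $f \times f$: on the one hand along the composite of $(\delta/I)_f : f \to f^2$ with the component $\alpha_f$ of the natural transformation of Lemma~\ref{lem:tausigma}, and on the other along the diagonal $\delta_X$. The essential point is the base-level identity $\alpha_f \circ (\delta/I)_f = \delta_X$, which follows from the universal property of the product $X \times X$; uniqueness of opcartesian lifts, combined with Lemma~\ref{lem:bcmono} (which ensures that the lifts involved are also cartesian, since the relevant components are monic), then forces the two constructions to agree up to the canonical vertical iso.
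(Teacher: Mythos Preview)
Your proposal is correct and follows essentially the same route as the paper: reduce to Theorem~\ref{thm:coinda} by showing $Rel(U/I)$ is a relational QCE, using Lemma~\ref{lem:lift-qce} for the QCE structure on $Rel(U)/I$, Lemma~\ref{lem:tausigma} for the adjunction $\tau \dashv \sigma$, and the identification $\tau\,Eq_{U/I} = E'$ to conclude $Q_{U/I} = Q'\tau$. You supply more detail than the paper on two points the paper leaves implicit: that $U/I$ inherits Beck--Chevalley (the paper instead argues directly that opreindexing along the monic $(\delta/I)_f$ is full and faithful because it coincides with opreindexing in $U$), and the base-level identity $v_f \circ (\delta/I)_f = \delta_X$ underlying the ``routine check''. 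One small terminological slip: the passage from $\tau\,Eq_{U/I} = E'$ to $\sigma E' \cong Eq_{U/I}$ is not literally ``transposition'' but rather uses that $\tau$ is full and faithful (Lemma~\ref{lem:natadj}, since $\alpha$ has monic components), so the unit of $\tau \dashv \sigma$ is invertible; the implication you need holds, but the converse would require more.
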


We can use the results of this section to give a sound coinduction rule
for final coalgebras of indexed containers that is dual to the
sound induction rule of Example~\ref{ex:indind}.

\begin{example}\label{ex:coindcont}
  Let $(S,P)$ be an $I$-indexed container with final coalgebra
  $\mathit{out}: M_{S,P} \ra_I [S,P]M_{S,P}$. A relation above an
  $I$-indexed set $X:I \ra \Set$ is an $I$-indexed family of relations
  $R i$ on $X i$. The relational lifting of $[S,P]$ maps a relation
  $R$ above an $I$-indexed set $X$ to the relation $R'$ above the
  $I$-indexed set $[S,P]X$ that relates $(s,f) \in [S,P]X i$ and $(s',
  f') \in [S,P]X i$ iff $s = s'$ and, for all $j:I$ and $p\!:\!P \, i
  \, s \, j$, $f \, j \, p$ is related to $f' \, j \, p$ in the least
  equivalence relation containing $R j$. This gives the following
  notion of bisimulation for $[S,P]$-coalgebras $k:X \ra_I [S,P]X$.
  Let $\rho_0 (s,f) = s$ and $\rho_1 (s,f) = f$. Then if $x, x' \in X
  i$, then $x \sim_i x'$ iff $\rho_0 (k x) = \rho_0 (k x')$ and, for
  all $j:I$ and $p\!:\!P\, i \, (\pi_0 (k x)) \, j$, we have that
  $\rho_1 (k x) j p \sim_j \rho_1 (k x') j p$.  As in
  Example~\ref{ex:coind}, the coinduction rule thus asserts that any
  two bisimilar states have the same interpretation in the final
  coalgebra.
\end{example}

Stepping back, we see that the above coinduction rule is as
expected. To understand it, we think of a term $t : M_{S,P}$ as being
part of a transition system whose terms are the subterms of $t$
(including $t$ itself), and suppose there is a transition from every
subterm to each of the immediate subterms of that term. Then two terms
are bisimilar iff they share the same root operator and each of their
subterms are bisimilar. The point is, of course, that the fibrational
approach to coinduction derives the rule in a principled manner rather
than simply having to postulate it. 

\section{Fibred Induction}\label{sec:fibind}

In Section~\ref{sec:indind} we saw how the fibrational approach to
induction can be instantiated to derive sound induction rules for
inductive indexed types when the indexing is given using slice
categories. Although it provides a good example of how to exploit the
abstract power of fibrations, this instantiation suffers from two
limitations:

\vspace*{0.1in}

\begin{iteMize}{$\bullet$}
\item First, the instantiation successfully treats indexing that is
  modelled by slice categories, but neither it nor its abstract
  generalisation can successfully handle more general forms of
  indexing. Indeed, in Section~\ref{sec:indind}, the fact that the
  comprehension functor $\mathit{dom}$ is a fibration was critical to
  showing that $U/I$ is a Lawvere category, but the abstract
  generalisation of this result does not hold because, in general, a
  comprehension functor need not be a fibration. To handle general
  forms of indexing, we therefore need a genuinely new idea.

\item Secondly, in Section~\ref{sec:indind} we handle $I$-indexed
  types by deriving from a Lawvere category $U:\E \ra \B$ a Lawvere
  category $U/I$ with base category $\B/I$. But this is inelegant
  because it requires the construction of a new Lawvere category for
  every possible index $I$, and because the uniformity over $I$ that
  connects the different fibrations $U/I$ is completely ignored.
  Indeed, if we think of fibrations as modelling logics over types,
  then the results of Section~\ref{sec:indind} ostensibly choose
  entirely different logics for different indices. A better approach
  would formalise the uniformity of the fibrations $U/I$ over the
  indices $I$.
\end{iteMize}

\vspace*{0.1in}

\noindent In this section we extend the work of Section~\ref{sec:indind} to
derive sound induction rules for general indexed types. This is
accomplished by adding an abstraction layer that models the way types
are indexed, and thus allows us to treat indexing modelled by
structures other than slice categories. More specifically, we consider
indexed types to be given by a second fibration $r:\B\to\A$, so that
the objects of $\B$ are types indexed by the objects of $\A$. Of
course, the logical layer still forms a fibration $U:\E\to\B$ over
types, so we get the following basic picture, which captures the move
from a single fibration $U$ to a fibration $U$ above a fibration $r$:
\[\xymatrix{ \E\ar[dr]_{rU} \ar[rr]^{U}
    & & \ar[ld]^r \B\\ & \A & }\] 

\vspace{0.1in}

\noindent
Note that $rU$ is a fibration because the composition of two
fibrations is again a fibration. Taking $\mathcal{A}$ to be the
category with one object and one morphism, the sound induction rules
in the unindexed setting will be recoverable from the sound induction
rules for general indexed types that we develop in this section; see
Example~\ref{ex:sec2} below. The sound induction rules in the indexed
setting of Section~\ref{sec:indind} will similarly be recoverable by
taking $r$ to be the codomain fibration; see
Lemma~\ref{lem:canlwfiba}.  In addition, taking $r$ to be the families
fibration, we will be able to derive the sound induction rules for
carriers of initial algebras of indexed containers directly, rather
than deriving them indirectly, as we did in Example~\ref{ex:indind},
using the equivalence of the families fibration and the codomain
fibration.

Let $U : \E \ra \B$ and $r : \B \ra \A$ be two fibrations. Then an
{\em inductive indexed type} with index $a$, where $a$ is an object of
$\A$, is the carrier $\mu F$ of the initial algebra of an endofunctor
$F:\B_a \ra \B_a$, where $\B_a$ is the fibre of $\B$ above $a$. To
derive a sound induction rule for $\mu F$ using
Theorem~\ref{thm:inda}, we will ultimately need a Lawvere category
with base $\B_a$; indeed, by the discussion immediately following
Lemma~\ref{lem:deflifting}, this will ensure the existence of a
lifting of $F$ to the total category of that Lawvere category. So,
what might we take as that Lawvere category? Since $F$ has domain
$\B_a$ rather than all of $\B$, we cannot expect $F$ to lift to the
whole of $\E$. On the other hand, $U$ does restrict to a fibration
$U_a:\E_a \ra \B_a$, where $\E_a$ is the fibre of $rU$ above $a$.  As
we will see in Corollary~\ref{cor:iind} below, $U_a$ is precisely the
Lawvere category we seek.

We begin by establishing the properties of $U_a$ that we will
need. The following lemma uses change of base to deduce several of
them.

\begin{lemma}\label{lem:fibredfib}
  Let $U:\E\to\B$ and $r:\B\to\A$ be two fibrations. For any object
  $a$ in $\A$, the fibration $U$ restricts to a fibration
  $U_a:\E_a\to\B_a$, where $\E_a$ is the fibre above $a$ of the
  fibration $r U$. Similarly, if $U$ is an opfibration or a
  bifibration, then so is $U_a$. Finally, if $U$ has a truth functor,
  then so does $U_a$.
\end{lemma}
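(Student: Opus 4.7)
The plan is to obtain $U_a$ as a change of base of $U$ along a suitable inclusion, and then invoke the standard fact that change of base preserves fibrations, opfibrations, bifibrations, and fibred terminal objects.

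First, I would observe that the fibre $\B_a$ can be regarded as a subcategory of $\B$ in the usual way: its objects are those $X$ in $\B$ with $rX = a$, and its morphisms are those $\phi$ in $\B$ with $r\phi = \mathit{id}_a$. Let $J_a : \B_a \hookrightarrow \B$ denote the inclusion functor. I would then claim that the pullback of $U$ along $J_a$ is precisely the functor $U_a : \E_a \to \B_a$ as defined in the statement. To verify this, note that an object of the pullback is an object $P$ of $\E$ such that $UP$ lies in $\B_a$, i.e., such that $rUP = a$, which is exactly the condition for $P$ to lie in $\E_a$ (the fibre of $rU$ above $a$). Similarly, a morphism of the pullback is a morphism $\phi$ of $\E$ with $U\phi$ in $\B_a$, i.e., with $rU\phi = \mathit{id}_a$, which is exactly the condition for $\phi$ to be vertical with respect to $rU$, hence a morphism in $\E_a$.

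Once this identification is established, each of the three claims follows from a corresponding preservation property of change of base. Since the pullback of a fibration along any functor is a fibration (as recalled just after Definition~\ref{def:cob}), $U_a$ is a fibration. Since change of base preserves opfibrations (because $U^{op}$ is a fibration iff $U$ is an opfibration, and change of base commutes with $(-)^{op}$), and hence preserves bifibrations, the corresponding statements for $U_a$ follow. Finally, since change of base preserves fibred terminal objects (as also noted in the text), $U_a$ has a truth functor whenever $U$ does; concretely, the truth functor for $U_a$ is the restriction of the truth functor for $U$ to $\B_a$, which lands in $\E_a$ because truth functors preserve the relevant fibres (if $K$ is the truth functor for $U$, then $rUKX = rX = a$ when $X$ is in $\B_a$).

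I expect no real obstacle here: the only step requiring any care is the identification of $\E_a$ as the pullback, which is a straightforward unpacking of the definitions. The remaining statements are then immediate appeals to preservation properties of change of base that have already been invoked or are standard. Note that the alternative of verifying the cartesian-morphism condition directly is also easy: if $\phi : X \to UP$ is a morphism of $\B_a$ (so $r\phi = \mathit{id}_a$) and $P$ is in $\E_a$, then the cartesian morphism $\phi^\S_P$ in $\E$ satisfies $rU\phi^\S_P = r\phi = \mathit{id}_a$, so it lies in $\E_a$, and its universal property restricts to the appropriate one within $\E_a$ because any competitor morphism in $\E_a$ automatically satisfies the needed verticality condition.
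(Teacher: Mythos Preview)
Your proposal is correct and follows essentially the same approach as the paper: the paper also obtains $U_a$ by change of base of $U$ along the inclusion $i_a:\B_a\to\B$ and then invokes the standard preservation properties of change of base (for opfibrations, bifibrations, and truth functors). You have simply spelled out the identification of $\E_a$ with the pullback more explicitly than the paper does.
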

\begin{proof}
  The fibration $U_a: \E_a \ra \B_a$ arises by change of base of $U$
  along the inclusion functor $i_a:\B_a \ra \B$:
 \[\xymatrix{
   \E_a\ar[d]_{U_a}\ar[r] \lpbc & \E\ar[d]^U\\
   \B_a\ar[r]_{i_a} & \B}
 \]
Pulling back an opfibration along a functor produces an opfibration,
so $U_a$ is an opfibration if $U$ is. As a result, $U_a$ is a
bifibration if $U$ is. Moreover, change of base preserves truth
functors, so $U_a$ has a truth functor if $U$ does. Indeed, the truth
functor for $U_a$ is just the restriction of the truth functor for $U$
to $\B_a$.
\end{proof}
\noindent
We write $K_a$ for the truth functor for $U_a$. Note that while a
truth functor always restricts to a subfibration $U_a$, the existence
of a truth functor $K_a$ for every $U_a$ does not necessarily imply
that $U$ itself has a truth functor. For this to be the case,
reindexing must preserve truth functors from one subfibration to
another. Of course, if $U$ is a bifibration, then reindexing is a
right adjoint, so it preserves terminal objects, and in this case the
individual truth functors $K_a$ actually do collectively define a
truth functor for $U$.

Our interest in the above results is that they show that the basic
structure of a logic (reindexing, opreindexing, and truth functors)
over a fibration of indexed types restricts to a corresponding logic
over types with a specific index. We may therefore consider
truth-preserving (i.e., $K_a$-preserving) liftings of functors $F:\B_a
\ra \B_a$, and ask when such a lifting defines a sound induction rule
for $\mu F$.  From Theorem~\ref{thm:ind} we know the answer: this
occurs when the fibration $U_a:\E_a \ra \B_a$ is a CCU. But now we
face a choice. Is it enough to simply ask that, for every object $a$
of $\A$, $U_a$ is a CCU? Or should we require that these different
CCUs, when taken collectively, ensure that $U$ is a CCU?

While the former choice is indeed possible, we believe that the latter
choice better highlights the uniformity connecting the different
fibrations $U_a$. In fact, we have already implicitly made the latter
choice when we started with a single fibration $U$ and constructed
from it the collection of individual fibrations $U_a$. Unfortunately,
asking that each fibration $U_a$ is a CCU does not ensure that $U$
itself is a CCU.
On the other hand, we cannot simply require $U$ to be a CCU either,
since that is not enough to guarantee that each $U_a$ is a CCU. But if
we require $U$ to be a fibred CCU in the sense of Definition~4.4.5 of
\cite{jac91}, then $U$ will indeed be a CCU whose restriction to each
subfibre $U_a$ is also a CCU. We have:

\begin{definition}\label{def:ccua}
  Let $U:\E\to\B$ and $r:\B\to\A$ be two fibrations, and let
  $K:\B\to\E$ the truth functor for $U$. We say that $U$ is a
  \emph{fibred CCU above $r$} if $K$ has a fibred right adjoint
  $\{-\}:r U\to r$: \[\xymatrix{ \E\ar[dr]_{r U} \ar@/^/[rr]^{\{-\}}
    \ar@{}[rr]|\top & & \ar@/^/[ll]^K \ar[ld]^r \B\\ & \A & }\]
\end{definition}
\noindent
That $K:r \ra rU$ is a fibred functor follows from the fact that $K:
id_\B \ra U$ is also a fibred functor (see Lemma~1.8.8
of~\cite{jac99}). A first consequence of Definition~\ref{def:ccua} is
that, if a fibration $U:\E\to\B$ is a fibred CCU above $r$ then $U$ is
a CCU. Furthermore, from Lemma~\ref{lem:fibadj} we have that if $U$
is a fibred CCU above $r$, then each fibration $U_a$ is a CCU. In
fact, we have the following correspondence: 

\begin{lemma}\label{lem:ccuafibadj}
  Let $U:\E\to\B$ and $r:\B\to\A$ be fibrations. The fibration $U$ is
  a fibred CCU above $r$ with fibred adjunction $K \dashv \{-\}$ iff
  $U$ is a CCU with truth functor $K:\B\to\E$ and comprehension
  functor $\{-\}:\E\to\B$ and, for each $a$ in $\A$, the fibration
  $U_a:\E_a\to\B_a$ is a CCU with comprehension functor
  $\{-\}_a:\E_a\to\B_a$ given by restricting $\{-\}$ to $\E_a$.
\end{lemma}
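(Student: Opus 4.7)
The plan is to treat the two directions of the biconditional separately, with the key observation being that the restriction property $\{-\}_a = \{-\}|_{\E_a}$ is essentially equivalent to the verticality (with respect to $r$) of the counit of the adjunction $K \dashv \{-\}$.

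For the forward direction, I would assume $U$ is a fibred CCU above $r$, so that $K$ has a fibred right adjoint $\{-\}:rU \to r$. That $U$ is a CCU with truth functor $K$ and comprehension functor $\{-\}$ is then immediate from the definition. By Lemma~\ref{lem:fibredfib}, the truth functor $K$ restricts to the truth functor $K_a$ of each $U_a$. For the restriction of $\{-\}$, the verticality of the counit with respect to $rU$ forces, for any $P \in \E_a$, the morphism $\epsilon_P : K\{P\} \to P$ to lie above $id_a$ with respect to $rU$; hence $r\{P\} = rUK\{P\} = rUP = a$, so $\{P\} \in \B_a$. Thus $\{-\}$ restricts to a functor $\{-\}_a : \E_a \to \B_a$, and the unit and counit of $K \dashv \{-\}$ restrict componentwise to give the adjunction $K_a \dashv \{-\}_a$, equipping each $U_a$ with the claimed CCU structure.

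For the backward direction, I would assume $U$ is a CCU with adjunction $K \dashv \{-\}$ and that, for every $a$, the functor $\{-\}|_{\E_a}$ serves as the comprehension functor $\{-\}_a$ for $U_a$. I must upgrade $K \dashv \{-\}$ to a fibred adjunction above $r$, which amounts to showing (i) the counit is vertical with respect to $rU$ and (ii) $\{-\}$ is fibred, i.e.\ preserves cartesian morphisms. For (i), the restriction assumption gives $\{P\} \in \B_a$ whenever $P \in \E_a$, so $K\{P\} \in \E_a$ and $\epsilon_P$ is a morphism within $\E_a$, which is precisely to say that $\epsilon_P$ is vertical with respect to $rU$. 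For (ii), I would invoke Lemma~\ref{lem:adjandcart} with its two fibrations instantiated as $r:\B\to\A$ and $rU:\E\to\A$, and with the adjunction taken to be $K \dashv \{-\}$. The two required functorial equations $rU \cdot K = r$ and $r \cdot \{-\} = rU$ follow from $UK = \mathit{Id}_\B$ and from the restriction assumption, respectively, while verticality of the counit was just established in (i). Lemma~\ref{lem:adjandcart} then concludes that $\{-\}$ preserves cartesian morphisms, making $K \dashv \{-\}$ a fibred adjunction.

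The main obstacle is essentially bookkeeping: in particular, one must correctly identify that the ``base'' of the fibrations to which Lemma~\ref{lem:adjandcart} is applied must be $\A$ rather than $\B$, and recognise the symmetric role played by ``restriction'' in the backward direction and ``verticality'' in the forward direction. Once this correspondence is in hand, both directions reduce to a direct application of Lemma~\ref{lem:adjandcart} together with the basic theory of fibred adjunctions (Lemma~\ref{lem:fibadj}) developed earlier in the paper.
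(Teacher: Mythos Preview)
Your proposal is correct and follows essentially the same route as the paper: for the backward direction you establish verticality of the counit from the restriction hypothesis and then invoke Lemma~\ref{lem:adjandcart} (instantiated over $\A$) to get that $\{-\}$ is fibred, while the paper does the same using the unit; the forward direction the paper simply declares ``straightforward,'' and your explicit unpacking of it is exactly what that word is hiding. One small point of care: the equation $r\cdot\{-\} = rU$ on \emph{morphisms} is not literally part of the restriction hypothesis (which only controls vertical morphisms), but it follows immediately from the verticality of the counit via naturality, which you have already secured in step~(i)---so the order of your argument is right, just be precise about which fact is doing the work when you check the hypotheses of Lemma~\ref{lem:adjandcart}.
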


\begin{proof}
  Let $U$ be a CCU with truth functor $K:\B\to\E$ and comprehension
  functor $\{-\}:\E\to\B$.  Further, suppose that, for every $a$ in
  $\A$, the fibration $U_a:\E_a\to\B_a$ is a CCU whose comprehension
  functor $\{-\}_a:\E_a\to\B_a$ is given by restricting $\{-\}$ to
  $\E_a$.  Then, by Lemma~\ref{lem:adjandcart}, we have that $\{-\}$
  is fibred from $r U$ to $r$. Moreover, since the adjunction
  $\{-\}\vdash K$ restricts to the adjunctions $\{-\}_a\vdash K_a$,
  the unit of $\{-\}\vdash K$ is vertical with respect to $r$.  The
  other direction of the equivalence is straightforward.
\end{proof}

\noindent
We have thus shown that a fibred CCU $U$ above $r$ is just the right
structure for deriving sound induction rules when indexing of types is
described by $r$. We wanted a structure to guarantee that each $U_a$
is a CCU and that these individual CCUs collectively ensure that $U$
is also a CCU. Lemma~\ref{lem:ccuafibadj} shows that a fibred CCU
above $r$ guarantees exactly this --- no more, no less.

Definition~\ref{def:ccua} straightforwardly extends to Lawvere
categories as follows:

\begin{definition}\label{def:lwfiba}
  Let $U:\E\to\B$ and $r:\B\to\A$ be fibrations. We say that $U$ is a
  {\em fibred Lawvere category above $r$} if $U$ is a fibred CCU above $r$ 
  and $U$ is a bifibration.
\end{definition}

\noindent
The next two corollaries are immediate.

\begin{corollary}\label{cor:lwfaadj}
  Let $U:\E\to\B$ and $r:\B\to\A$ be fibrations. Then $U$ is a fibred
  Lawvere category above $r$ iff $U$ is a Lawvere category and, for
  every $a$ in $\A$, $U_a : \E_a \to \B_a$ is a Lawvere category whose
  unit and comprehension are given by the restrictions of the unit and
  comprehension, respectively, of $U$ to $\E_a$.
\end{corollary}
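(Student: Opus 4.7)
The plan is to deduce this corollary by combining Lemma~\ref{lem:ccuafibadj} (the analogous correspondence for CCUs) with Lemma~\ref{lem:fibredfib} (which says the bifibration structure on $U$ transfers to each $U_a$), using only the definitions of Lawvere category and fibred Lawvere category as a CCU/fibred CCU plus a bifibration condition. Since both sides of the equivalence differ from the corresponding sides of Lemma~\ref{lem:ccuafibadj} only by an extra bifibration hypothesis, the corollary should essentially follow by bookkeeping.

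For the forward direction, I would begin by assuming $U$ is a fibred Lawvere category above $r$. By Definition~\ref{def:lwfiba} this unfolds to: $U$ is a fibred CCU above $r$ and $U$ is a bifibration. From Lemma~\ref{lem:ccuafibadj}, being a fibred CCU above $r$ gives that $U$ is itself a CCU with truth functor $K$ and comprehension $\{-\}$, and each restriction $U_a:\E_a\to\B_a$ is a CCU whose comprehension is the restriction of $\{-\}$ to $\E_a$ (and dually for the unit). Combined with the bifibration hypothesis, $U$ is thus a Lawvere category. By Lemma~\ref{lem:fibredfib}, the bifibration structure on $U$ restricts so that each $U_a$ is a bifibration as well, hence each $U_a$ is a Lawvere category, with unit and comprehension obtained by restriction.

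For the backward direction, I would assume $U$ is a Lawvere category and each $U_a$ is a Lawvere category whose unit and comprehension are restrictions of those of $U$. Stripping off the bifibration parts of both hypotheses, we are left with: $U$ is a CCU and each $U_a$ is a CCU with restricted comprehension, which is exactly the right-hand side of Lemma~\ref{lem:ccuafibadj}. Applying that lemma, $U$ is a fibred CCU above $r$. Since $U$ is also a bifibration by hypothesis, Definition~\ref{def:lwfiba} gives that $U$ is a fibred Lawvere category above $r$.

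The argument is essentially bookkeeping, so there is no serious technical obstacle; the only point to verify carefully is that the bifibration hypothesis lives at the right level on each side. On the left it is part of being a \emph{fibred} Lawvere category (so a single condition on $U$), while on the right it must yield bifibration structure both on $U$ globally and on every $U_a$; Lemma~\ref{lem:fibredfib} handles precisely this passage, and no further Beck--Chevalley or coherence condition is needed because fibred CCU status already enforces the compatibility of the $\{-\}_a$ with reindexing through the verticality of the unit of $K\dashv\{-\}$.
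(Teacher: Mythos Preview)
Your proposal is correct and matches the paper's approach: the paper states this corollary is ``immediate'' from Lemma~\ref{lem:ccuafibadj} and Definition~\ref{def:lwfiba}, and you have simply spelled out that bookkeeping, invoking Lemma~\ref{lem:fibredfib} for the passage of the bifibration structure to the fibres.
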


\begin{corollary}\label{cor:iind}
  Let $U:\E\to\B$ be a fibred Lawvere category above $r:\B\to\A$. For
  any object $a$ of $\A$ and functor $F:\B_a\to\B_a$, any
  $K_a$-preserving lifting $\hat{F}:\E_a\to\E_a$ of $F$ defines a
  sound induction rule for $\mu F$. In particular, the canonical
  $K_a$-preserving lifting from Section~\ref{sec:ind} defines a sound
  induction rule for $\mu F$.
\end{corollary}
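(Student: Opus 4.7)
The plan is to reduce the corollary to the already-established Theorems~\ref{thm:ind} and~\ref{thm:inda}, using Corollary~\ref{cor:lwfaadj} as the bridge. First I would invoke Corollary~\ref{cor:lwfaadj} to conclude that, since $U$ is a fibred Lawvere category above $r$, each restricted fibration $U_a:\E_a\to\B_a$ is itself a Lawvere category, with truth functor $K_a$ and comprehension functor $\{-\}_a$ obtained by restriction. In particular $U_a$ is a CCU, so Theorem~\ref{thm:ind} applies to the fibration $U_a$ and the endofunctor $F:\B_a\to\B_a$: any truth-preserving lifting of $F$ with respect to $U_a$ -- that is, any $K_a$-preserving lifting $\hat F:\E_a\to\E_a$ -- defines a sound induction rule for $\mu F$. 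This proves the first assertion.

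For the second assertion, I would apply Theorem~\ref{thm:inda} directly to the Lawvere category $U_a$: the construction recalled just before Theorem~\ref{thm:inda}, namely $\hat F = I_a\, F^{\to}\,\pi_a$ where $I_a$ is the left adjoint from Lemma~\ref{lem:deflifting} applied to $U_a$ and $\pi_a$ is built from the counit of $K_a\dashv\{-\}_a$, produces a $K_a$-preserving lifting of $F$ on $\E_a$. Theorem~\ref{thm:inda} then guarantees that this canonical lifting defines a sound induction rule for $\mu F$ in $U_a$.

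There is essentially no obstacle here: the whole point of formulating Definitions~\ref{def:ccua} and~\ref{def:lwfiba} and proving Lemma~\ref{lem:ccuafibadj} and Corollary~\ref{cor:lwfaadj} was precisely to ensure that the fibred structure above $r$ descends, fibre by fibre, to Lawvere-categorical structure on each $U_a$, at which point the unindexed theory of Section~\ref{sec:ind} applies verbatim. The only thing worth being careful about is the identification of \emph{truth-preserving} for $U_a$ with \emph{$K_a$-preserving}, which is immediate from the fact (part of Corollary~\ref{cor:lwfaadj}) that the truth functor for $U_a$ is the restriction of $K$ to $\B_a$, and hence is exactly the functor we have been denoting $K_a$.
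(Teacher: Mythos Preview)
Your proposal is correct and matches the paper's approach: the paper states this corollary as immediate from Corollary~\ref{cor:lwfaadj}, and your argument spells out exactly that reduction---each $U_a$ is a Lawvere category by Corollary~\ref{cor:lwfaadj}, so Theorems~\ref{thm:ind} and~\ref{thm:inda} apply directly.
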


Our first example shows that fibred induction is applicable in
situations in which indexed induction is not.

\begin{example}
Consider the mutually recursive data type

\vspace*{-0.05in}

\[\proofrule{}
             {zero : evens }
\hspace{1cm}
\proofrule{ n : odds}
          {evenSucc \; n : evens}
\hspace{1cm}
\proofrule{ n : evens }
          {oddSucc \; n : odds }
\]

\vspace*{0.1in}

\noindent
If we model types in a category $\B$, then we can model the
$2$-indexed data type of $odds$ and $evens$ using the initial algebra
of the functor $F : \B^2 \ra \B^2$ defined by $F(E,O) = (O+1,E)$.
However, we may wish to index data types by sets other than $2$.
The codomain fibration $cod:Fam(Set)^\ra \ra Fam(Set)$ defines a
fibred Lawvere category over the families fibration.  We therefore
have the following induction rule in the families fibration for any
predicates $P:evens \ra \Set$ and $Q: odds \ra \Set$:

\[\begin{array}{l}
P(zero) \rightarrow\\
(\Pi n : odds. \,Q(n) \rightarrow P(evenSucc \; n)) \ra \\
(\Pi n : evens. \,P(n) \rightarrow Q(oddSucc\; n)) \ra \\
(\Pi n : evens. \,P(n)) \times (\Pi n : odds. Q(n))
\end{array}
\]
\end{example}

\vspace*{0.1in}

We can also see the induction rule of Theorem~\ref{thm:inda} as an
instance of fibred induction:

\begin{example}\label{ex:sec2}
  Let $U:\E \ra \B$ be a Lawvere category. If $1$ is the category with
  one object and one morphism, then $U$ is a fibred Lawvere category
  above the fibration $r :\B \ra 1$. Moreover, the treatment of
  induction from Section~\ref{sec:ind} is equivalent to the treatment
  of induction for this fibred Lawvere category.
\end{example}

And we can see indexed induction as an instance of fibred induction:

\begin{lemma}\label{lem:canlwfiba}
  Let $U:\E\to\B$ be a Lawvere category. The fibration
  $q:\E'\to\B^\to$ obtained by the change of base
  \[\xymatrix{
    \ar[d]_q \E' \ar[r] \lpbc & \E \ar[d]^U\\ \B^\to \ar[r]_{dom} & \B
  }\] is a fibred Lawvere category above the codomain fibration and,
  for any $I$ in $\B$, $q_I = U/I$.
\end{lemma}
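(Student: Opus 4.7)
The plan is to apply Corollary~\ref{cor:lwfaadj}, which characterises fibred Lawvere categories above $r$ as Lawvere categories whose restrictions to each fibre are themselves Lawvere categories with unit and comprehension arising by restriction. So I need to establish three things for $q$ above $cod$: (a) that $q$ is a Lawvere category; (b) that each $q_I$ coincides with $U/I$ and is itself a Lawvere category; and (c) that the truth and comprehension functors of $U/I$ are obtained from those of $q$ by restriction.

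For (a), note that $dom:\B^\to\to\B$ is a fibration (Example~\ref{ex:cod}); change of base along a fibration preserves bifibrations and, by Lemma~\ref{lem:pb-pres}, CCUs, so $q = dom^* U$ is a Lawvere category. For (b), the fibre of $cod$ above $I$ is $\B/I$, on which $dom:\B^\to\to\B$ restricts to $dom:\B/I\to\B$. The fibre of $cod \circ q$ above $I$, with base $\B/I$, is therefore the pullback of $U$ along $dom:\B/I\to\B$, which by Lemma~\ref{lem:U/Icob} is exactly $U/I$. Hence $q_I = U/I$, and this is a Lawvere category by Lemma~\ref{lem:coblwfib}.

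For (c), I would compute the truth and comprehension functors of $q$ explicitly. The truth functor $K_q$ sends $f:X\to Y$ in $\B^\to$ to the terminal object in the fibre of $q$ above $f$, which under the pullback description is $(K_U X, f)$; restricting to $Y = I$ and applying Lemma~\ref{lem:simindex}'s identification recovers the truth functor $K_U f$ of $U/I$. The comprehension $\{-\}_q$, obtained via Lemma~\ref{lem:claudio} as in the proof of Lemma~\ref{lem:pb-pres}, sends $(P, g:UP\to Y)$ to $g\pi_P:\{P\}_U\to Y$; restricting to $Y = I$ yields $(Uf)\pi_P$, which is precisely the comprehension for $U/I$ recorded after Lemma~\ref{lem:pb-pres}. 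The main bookkeeping is deriving this formula for $\{-\}_q$ via Lemma~\ref{lem:claudio} and keeping careful track of Lemma~\ref{lem:simindex}'s identification of fibres; once these are in hand, the restriction check is immediate.
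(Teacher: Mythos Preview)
Your proposal is correct and follows essentially the same route as the paper: both show $q$ is a Lawvere category via Lemma~\ref{lem:coblwfib} (change of base along the fibration $dom$), identify $q_I$ with the pullback of $U$ along $dom:\B/I\to\B$ (i.e., $U/I$), observe this is again a Lawvere category, and conclude via Corollary~\ref{cor:lwfaadj}. Your part~(c) is more explicit than the paper, which simply asserts ``$q$ is itself a fibred Lawvere category above the codomain fibration'' without spelling out that the truth and comprehension functors restrict correctly; this is implicit in the paper because both the global and fibrewise comprehensions arise from the same Lemma~\ref{lem:claudio} construction, but writing it out as you do is a genuine improvement in rigour.
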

\begin{proof}
  Consider the following setting:
  \[\xymatrix{
    \ar[d]_q \E' \ar[r] \lpbc & \E \ar[d]^U\\ \ar[d]_{cod} \B^\to
    \ar[r]_{dom} & \B\\ \B }\] By Lemma~\ref{lem:coblwfib}, $q$ is a
  Lawvere category because it arises by change of base along the
  fibration $dom$. Moreover, for any object $I$ of $\B$, the fibration
  $q_I$ can be obtained by the change of base
  \[\xymatrix{
    \ar[d]_{q_I} \E_I \ar[r] \lpbc & \E'\ar[d]^q \\
    \B_I = \B/I \ar[r]_{i_I} & \B^\to
  }\]

\vspace*{0.05in}

\noindent
  where $i_I$ is the inclusion functor. Thus $q_I$ arises as the
  pullback of $U$ along the composition of $dom:\B^\to \to \B$ and
  $i_I:\B/I \ra B^\to$. But this composition is simply $dom:\B/I \ra
  \B$, so it is clearly a fibration. Thus, $q_I$ is a Lawvere
  category, and $q$ is itself a fibred Lawvere category above the
  codomain fibration. Finally, $q_I = U/I$ by construction.
\end{proof}

\section{Fibred Coinduction}\label{sec:fibcoind}

In this section we extend the methodology of Section~\ref{sec:fibind}
to give sound coinduction rules for coinductive indexed types in the
case when the indexing is not modelled by slice categories. As in
Section~\ref{sec:fibind}, we consider a fibration $r:\B \ra \A$, where
we think of the objects of $\B$ as being indexed by the objects of
$\A$, and a fibration $U : \E \to \B$ that we think of as a logic over
$\B$. Our aim is to derive sound coinduction rules for final
coalgebras of functors $F:\B_a \ra \B_a$, where $a$ is any object of
$\A$.

Our experience from Section~\ref{sec:coind} suggests that a minimal
requirement for deriving a sound coinduction rule for a functor
$F:\B_a \ra \B_a$ is that the fibration $U_a$ is a QCE.  As in
Section~\ref{sec:fibind}, we want to highlight the uniformity
connecting the different fibrations $U_a$ but, unfortunately,
requiring that each fibration $U_a$ is a QCE does not automatically
imply that $U$ is a QCE. On the other hand, if we define a
\emph{(full) section} of a functor $F:\C\to\D$ to be a (resp., full
and faithful) functor $E:\D \ra \C$ such that $FE=id_\D$ then, for
fibrations $U:\E\to\B$ and $r:\B\to\A$, a (full) section $E:\B\to\E$
of $U$ straightforwardly restricts to a (resp., full) section
$E_a:\B_a\to\E_a$ of $U_a$ for any object $a$ of $\A$. Then, by
contrast with the situation in the inductive case, requiring that each
fibration $U_a$ is a QCE with section $E_a$ actually does ensure that
$U$ is a QCE with section $E$, provided $E$ preserves cartesian
morphisms. Indeed, observing that the notion of a cartesian morphism
and the notion of a fibre both make sense for arbitrary functors
whether or not they are fibrations, and extending our notation for
fibres of fibrations to fibres of functors, we have the following:

\begin{lemma}\label{lem:weakfibadj}
  Let $r:\B\to\A$ be a fibration, and let $q:\E\to\A$ and $e:\B\to\E$
  be functors such that $qe = r$. The functor $e$ has a left adjoint
  $Q:\E\to\B$ with vertical unit (or, equivalently, counit) iff $e$
  preserves cartesian morphisms and, for each object $a$ in $\A$, the
  restriction $e_a:\B_a\to\E_a$ of $e$ has a left adjoint $Q_a$.
\end{lemma}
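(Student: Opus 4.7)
The plan is to prove both implications. The forward direction mirrors the argument of Lemma~\ref{lem:adjandcart}, while the backward direction requires assembling a global adjoint $Q$ from the fibrewise data, and this is where preservation of cartesian morphisms by $e$ plays the essential role.

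First I would handle the forward direction. Assume $Q \dashv e$ with $\eta$ vertical. A standard triangular-identity manipulation shows that $\epsilon$ is then vertical as well: applying $q$ to $e\epsilon_X \circ \eta_{eX} = \mathit{id}_{eX}$ and using verticality of $\eta_{eX}$ together with $qe = r$ yields $r\epsilon_X = \mathit{id}$. Verticality of $\eta$ also gives $rQ = q$, so $Q$ restricts to functors $Q_a : \E_a \to \B_a$, and since both unit and counit are now vertical the adjunction restricts fibrewise to $Q_a \dashv e_a$ with unit $\eta^a = \eta|_{\E_a}$. To show that $e$ preserves cartesian morphisms I replay the proof of Lemma~\ref{lem:adjandcart}, noting that only $r$ (not $q$) needs to be a fibration there: given $u : P_0 \to P$ cartesian in $\B$ above $f$ and any test morphism $l : R \to eP$ above $fg$, the transpose $\epsilon_P \circ Ql : QR \to P$ lies above $fg$ because $rQ = q$ and $r\epsilon = \mathit{id}$, so cartesianness of $u$ produces a unique factorisation above $g$ which transposes back through $\eta$ to give the required unique lift in $\E$.

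For the backward direction, assume $e$ preserves cartesian morphisms and each $e_a$ has a left adjoint $Q_a$ with unit $\eta^a$. I define $Q$ on an object $P$ above $a$ by $QP = Q_a P$. For a morphism $l : P \to P'$ above $f : a \to a'$, I use that $r$ is a fibration to pick the cartesian morphism $f^\S_{P'} : f^*P' \to P'$ above $f$; since $e$ preserves cartesian morphisms, $e f^\S_{P'}$ is cartesian in $\E$ above $f$, so $l$ factors uniquely as $e f^\S_{P'} \circ \tilde l$ for some vertical $\tilde l : P \to ef^*P'$ in $\E_a$. Applying $Q_a \dashv e_a$ to $\tilde l$ yields a unique vertical $\widehat l : Q_a P \to f^*P'$ in $\B_a$, and I set $Ql = f^\S_{P'} \circ \widehat l$. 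The candidate unit $\eta_P = \eta^a_P$ is vertical by construction; the bijection between morphisms $P \to eR$ above $f : a \to b$ in $\E$ and morphisms $QP \to R$ above $f$ in $\B$ is obtained by composing the factorisation through the cartesian morphism $ef^\S_R$ with the fibrewise bijection $Q_a \dashv e_a$, and the triangular identities then follow from those of $Q_a \dashv e_a$ together with the uniqueness clauses of the two factorisations.

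The main obstacle will be functoriality of $Q$ on morphisms, since this is the only step that genuinely crosses fibres. For composable $l : P \to P'$ above $f$ and $l' : P' \to P''$ above $f'$, verifying $Q(l' l) = Ql' \circ Ql$ requires comparing the factorisation of $l' l$ through $(f'f)^*P''$ with the pasting of the factorisations of $l$ and $l'$; this relies on the canonical isomorphism $(f'f)^* \cong f^* (f')^*$, and on the compatibility of the fibrewise units $\eta^a$ with reindexing, which is itself a consequence of $e$ preserving cartesian morphisms. Once functoriality of $Q$ and naturality of $\eta$ are in hand, everything else reduces to routine unwinding of the two universal properties used in the construction.
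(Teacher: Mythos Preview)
Your argument is correct and follows essentially the same route as the paper: both directions use the factorisation of a general morphism $l:R\to eY$ above $h$ as $e(h^\S_Y)\circ u$ for a unique vertical $u$, followed by the fibrewise adjunction $Q_a\dashv e_a$ applied to $u$. The difference is purely in packaging. The paper does not define $Q$ on morphisms explicitly; instead it shows directly that $(\eta^a)_R$ is a universal arrow from $R$ to $e$ (not just to $e_a$), and then invokes Mac Lane's theorem (Chapter~4, Theorem~2(ii)) to obtain the adjunction $Q\dashv e$ in one stroke. This sidesteps entirely the functoriality verification you flag as ``the main obstacle'': once every object has a universal arrow to $e$, functoriality of the induced left adjoint and the triangular identities are automatic. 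Since your hom-set bijection is exactly the universal property of $\eta_P$, you have already done the work needed to invoke that theorem; the separate discussion of $(f'f)^*\cong f^*(f')^*$ and compatibility of the $\eta^a$ with reindexing is not wrong, but it is redundant. One minor point in your favour: you are right to note that the appeal to Lemma~\ref{lem:adjandcart} needs only $r$, not $q$, to be a fibration, and you handle this more carefully than the paper's bare citation.
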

\begin{proof}
  Suppose $e$ preserves cartesian morphisms, let $Q_a:\E_a\to\B_a$ be
  a collection of left adjoints to the restrictions $e_a:\B_a\to\E_a$
  of $e$, and let $\eta^a$ be the unit of $Q_a \dashv e_a$.  We will
  prove that, for each $a$ in $\A$ and $R$ in $\E_a$, the morphism
  $(\eta^a)_R:R\to eQ_a R$ is universal from $R$ to $e$ (and not just to
  $e_a$). By part (ii) of Theorem~2 of Chapter 4 of~\cite{mac71}, this
  gives an adjunction $Q \dashv e$. The unit of this adjunction is
  vertical because it comprises the various units $\eta^a$.

  To this end, consider a morphism $l:R\to e Y$ in $E$ above $h:a\to
  b$ in $\A$.  Then $Y$ is above $b$, and so there is a cartesian
  morphism $h^\S_Y:h^*Y \ra Y$ above $h$ with respect to $r$. Because
  $e$ preserves cartesian morphisms, we know that $e(h^\S_Y)$ is
  cartesian above $h$ with respect to $q$. Thus $l = e(h^\S_Y) u$ for
  a unique vertical morphism $u:R\to e(h^*Y)$ with respect to
  $q$. Now, since $u$ is in $\E_a$, we can use the universal property
  of $(\eta^a)_R$ to deduce a unique morphism $g:Q_aR\to h^*Y$ in $\B_a$
  such that $u = e(g)\eta_R$. Therefore, we have a unique morphism
  $f=h^\S_Y g$ such that $l = e(f) \eta_R$.

  Conversely, suppose $Q$ is left adjoint to $e$ with vertical unit.
  Then $e$ preserves cartesian morphisms by
  Lemma~\ref{lem:adjandcart}, and the adjunction $Q \dashv e$
  restricts to adjunctions $Q_a \dashv e_a$ because the unit of $Q
  \dashv e$ is vertical and $qe = r$.
\end{proof}

We can now give the central definitions we need to state our sound
coinduction rules for coinductive indexed types.

\begin{definition}\label{def:qcea}
  Let $U:\E\to\B$ and $r:\B\to\A$ be two fibrations, and let
  $E:\B\to\E$ a full section of $U$. We say that $U$ is a \emph{QCE
    above $r$} if $E$ has a left adjoint $Q:\E\to\B$ and the
  adjunction $Q \dashv E$ is fibred above $\A$:
  \[\xymatrix{\E\ar[dr]_{r U} \ar@/_/[rr]_Q
    \ar@{}[rr]|\top & & \ar@/_/[ll]_E \ar[ld]^r \B\\ & \A & }\] 

\vspace*{0.1in}

\noindent
Note that, in this case, both $E$ and $Q$ are necessarily fibred.  A
{\em weak QCE above $r$} is similar to a QCE above $r$, except that
the left adjoint to $E$ need not be fibred (although $E$ itself must
still be).
\end{definition}
\noindent

With this definition in place, we have the following analogue of
Corollary~\ref{cor:lwfaadj}: 

\begin{lemma}
Let $U:\E \ra \B$ and $r:\B \ra \A$ be fibrations. Then $U$ is a weak
QCE above $r$ iff $U$ is a QCE and, for any object $a$ of $\A$,
$U_a:\E_a \ra \B_a$ is a QCE whose full section and quotient functors
are given by the restrictions of the full section and quotient
functors, respectively, of $U$ to $\E_a$.
\end{lemma}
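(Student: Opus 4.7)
The plan is to reduce the equivalence to Lemma~\ref{lem:weakfibadj} applied with $q := rU$ and $e := E$; since $UE = \mathit{Id}_\B$, we have $qe = r$, so the hypotheses of that lemma are satisfied. All the real content is packaged there, and the rest of the argument is bookkeeping about how the fibre-wise adjoints arise as restrictions of $Q$.

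For the forward direction, I would observe that a weak QCE above $r$ retains from the QCE-above-$r$ structure everything except that the left adjoint $Q$ need not be fibred; in particular $E$ is still fibred and the unit of $Q \dashv E$ is still vertical. Forgetting the indexing yields a QCE $(U,E,Q)$ immediately. Verticality of the unit forces $rQ = rU$, so that $Q$ restricts to a functor $Q|_{\E_a} : \E_a \ra \B_a$, and $E|_{\B_a}$ is automatically a full section of $U_a$. The ``$\Rightarrow$'' direction of Lemma~\ref{lem:weakfibadj} then supplies a left adjoint to each $E|_{\B_a}$, which I identify with $Q|_{\E_a}$ via uniqueness of adjoints.

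For the converse, the assumption that each $Q_a$ is the restriction of $Q$ tells me directly that $rQ = rU$. To show that the unit $\eta$ of $Q \dashv E$ is vertical, I would argue by the universal property: for any $P \in \E_a$, the component $\eta_P$ is the unique morphism $P \ra EQP$ witnessing the adjunction, while the component at $P$ of the unit of $Q_a \dashv E_a$ satisfies the same universal property inside $\E_a$; the two components therefore coincide, and since the latter is a morphism of $\E_a$ it is vertical. The ``$\Leftarrow$'' direction of Lemma~\ref{lem:weakfibadj} then gives that $E$ preserves cartesian morphisms, completing the data required to exhibit $U$ as a weak QCE above $r$.

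The only slightly delicate step---identifying the restricted fibre-wise adjoints and their units with the corresponding restrictions of $Q$ and $\eta$---reduces to the uniqueness-up-to-isomorphism of adjoints, so I do not anticipate any genuine obstacle here.
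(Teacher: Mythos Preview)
Your proposal is correct and follows essentially the same route as the paper: for the converse, establish that the unit of $Q \dashv E$ is vertical (the paper simply ``observes'' this; you argue it via comparison with the fibrewise units) and then conclude that $E$ preserves cartesian morphisms. One small slip: in that last step you cite the ``$\Leftarrow$'' direction of Lemma~\ref{lem:weakfibadj}, but it is the ``$\Rightarrow$'' direction (left adjoint with vertical unit $\Rightarrow$ $e$ preserves cartesian morphisms) that you need---equivalently, a direct appeal to Lemma~\ref{lem:adjandcart}, which is exactly what the paper does.
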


\begin{proof}
If $U$ is a weak QCE above $r$, then the fact that $U$ and the
fibrations $U_a$ are QCEs is straightforward. For the other direction,
we observe that the unit (equivalently, counit) of $Q \vdash E$ is
vertical. Lemma~\ref{lem:adjandcart} therefore guarantees that $E$
preserves cartesian morphisms, i.e., is fibred.
\end{proof}

\begin{corollary}\label{cor:icoind}
  Let $U:\E\to\B$ be a weak QCE above $r:\B\to\A$. For any object $a$
  of $\A$ and functor $F:\B_a\to\B_a$, any $E_a$-preserving lifting
  $\cech{F}:\E_a\to\E_a$ of $F$ defines a sound coinduction rule for
  $\nu F$. In particular, the canonical $E_a$-preserving lifting from
  Section~\ref{sec:coind} defines a sound coinduction rule for $\nu
  F$.
\end{corollary}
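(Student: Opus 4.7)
The plan is to deduce the corollary by combining the lemma immediately preceding it --- which shows that a weak QCE above $r$ restricts on each fibre to an ordinary QCE --- with a general principle, dual to that used in Theorem~\ref{thm:ind}, stating that for any QCE $V:\C \to \D$ with full section $E'$ and left adjoint $Q'$, every $E'$-preserving lifting of a functor $F:\D \to \D$ defines a sound coinduction rule for $\nu F$. First I would apply the preceding lemma at the given object $a$ of $\A$ to extract that the restricted fibration $U_a:\E_a \to \B_a$ is a QCE whose full section $E_a$ and quotient $Q_a$ are the restrictions of $E$ and $Q$ to the fibres above $a$. This reduces the problem to the unindexed coinductive setting of Section~\ref{sec:coind} applied to $U_a$.

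Second, to establish the general QCE principle, I would dualise Hermida and Jacobs' argument from Theorem~\ref{thm:ind}: starting from the adjunction $Q' \dashv E'$ together with the witnessing isomorphism $\cech F E' \cong E' F$, I would lift it to an adjunction of the form $Q'\dcoalg \dashv E'\dcoalg_F$ between the coalgebra categories $\coalg_F$ and $\coalg_{\cech F}$, with $E'\dcoalg_F$ (as defined in Definition~\ref{def:coind}) playing the role of the right adjoint. Since right adjoints preserve all limits, $E'\dcoalg_F$ then preserves terminal objects; by Definition~\ref{def:coind} this is exactly soundness of the coinduction rule determined by $\cech F$. Applying this general fact to $U_a$, $E_a$, $Q_a$ and the given $\cech F$ yields the first assertion of the corollary.

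For the ``in particular'' clause I would apply Lemma~\ref{lem:coindlift} to the QCE $U_a$ to obtain the canonical lifting $\cech F = J F^\ra \rho$ (built from the data of $U_a$), which that lemma guarantees satisfies $\cech F E_a \cong E_a F$ and is therefore $E_a$-preserving; the first part of the corollary then supplies the sound coinduction rule. The main obstacle will be the explicit construction of the lifted adjunction $Q'\dcoalg \dashv E'\dcoalg_F$ in step two: this dualises the standard CCU argument but with reversed arrow directions, so one must take care that the isomorphism $\cech F E' \cong E' F$ (rather than its mate obtained by transposition across $Q' \dashv E'$) is the data used to lift morphisms of $F$-coalgebras to morphisms of $\cech F$-coalgebras. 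Once the lifted adjunction is in place, preservation of the terminal coalgebra is immediate.
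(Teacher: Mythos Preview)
Your proposal is correct and matches the paper's intended (implicit) argument: the corollary is not given a separate proof in the paper, being immediate from the preceding lemma together with the Section~\ref{sec:coind} machinery. The one point worth flagging is that the paper's Theorem~\ref{thm:coind} is literally stated only for relational QCEs (i.e., for $Rel(U)$ with $Eq$), not for arbitrary QCEs, so you are right that a small generalisation is needed; but as you observe, the proof sketch given there --- that the left adjoint $Q$ of the section $E$ lifts to a left adjoint of $E\dcoalg_F$, whence the latter preserves terminal objects --- uses nothing beyond the QCE data and goes through verbatim for $U_a$, $E_a$, $Q_a$.
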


We can see Theorem~\ref{thm:coinda} as a special case of 
Corollary~\ref{cor:icoind}.

\begin{example}\label{ex:sec3}
  Let $U:\E \ra \B$ be a relational QCE. If $1$ is the category with
  one object and one morphism, then $U$ is a weak QCE above the
  fibration $r :\B \ra 1$. Moreover, the treatment of coinduction from
  Section~\ref{sec:coind} is equivalent to the treatment of
  coinduction for this weak QCE above $r$.
\end{example}

\noindent
We can also see Theorem~\ref{thm:icoinda} as a special case of
Corollary~\ref{cor:icoind}. This entails constructing, from the data
assumed in Lemma~\ref{lem:lift-qce}, a weak QCE above the codomain
fibration $\mathit{cod}$.  To do this, we first define an analogue of
a relational QCE in the setting where we are working above a fibration
$r$. We have the following definition:

\begin{definition}\label{def:relqceabover}
  Let $U:\E\to\B$ be a bifibration with truth functor $K:\B\to\E$, let
  $r:\B\to\A$ be a fibration, and assume that $r$ has fibred cartesian
  products, i.e., products in the fibres that are preserved by
  reindexing. Let $\Delta_r:r\to r$ be the fibred diagonal functor
  mapping each object $X$ in $\B_a$ to the product with itself in
  $\B_a$.  Then, the {\em relations fibration above $r$} is defined to
  be the fibration $Rel_r(U):Rel_r(\E)\to \B$ above $r$ that is
  obtained by change of base of $U$ along $\Delta_r$. If
  $\delta_r:Id_\B\to \Delta_r$ is the diagonal natural transformation
  for $\Delta_r$, then the {\em equality functor for $U$ above $r$} is
  defined to be the functor $Eq_r:\B\to Rel_r(\E)$ that maps an object
  $X$ of $\B$ to $\Sigma_{\delta_r}K X$.  Furthermore, if $Eq_r$ has a
  left adjoint $Q_r$, then $Q_r$ is called the {\em quotient functor
    for $U$ above $r$}.  A {\em relational QCE above $r$} is a QCE
  above $r$ obtained via this construction. A {\em weak relational QCE
    above $r$} is similar to a relational QCE above $r$, except that
  the left adjoint to the fibred equality functor need not be fibred.
\end{definition}

The main difficulty in constructing a weak relational QCE $U$ above a
fibration $r$ is proving that the equality functor for $U$ is
fibred. If $\B$ has pullbacks and $r : \B^\to \to \B$ is the codomain
fibration, then $r$ has fibred products given by pullbacks.  In this
case, we write $\Delta^\ra:\B^\ra \ra \B^\ra$ for the functor mapping
an object $f$ in $(\B^\ra)_I$ to the product $f^2$ of $f$ with itself
in the fibre $(\B^\ra)_I$. We denote the diagonal natural
transformation for $\Delta^\to$ by $\delta^\ra:Id_{\B^\ra} \ra
\Delta^\ra$.

\begin{lemma}\label{lem:carts}
Let $U:\E \ra \B$ be a bifibration with a truth functor, and suppose
$U$ satisfies the Beck-Chevalley condition. Furthermore, assume that
$\B$ has products and pullbacks. Let $U':\E' \ra \B^\ra$ be obtained
from $U$ by change of base along the fibration $\mathit{dom}$, and let
$Rel(U'):Rel(\E') \ra \B^\ra$ be obtained from $U'$ by change of base
along $\Delta^\ra : \B^\to \to \B^\to$:
\[\xymatrix{ Rel(\E') \ar[d]_{Rel(U')} \ar[r] \lpbc &\ar[d]_{U'} \E' \ar[r] \lpbc
    & \E \ar[d]^U\\ \B^\to \ar[r]_{\Delta^\to} & \B^\to \ar[r]_{dom} &
  \B\\ }\] 
Finally, let $\Sigma_{\delta^\ra}:\E' \ra Rel(\E')$ be the functor
that maps an object $(h:UP \ra I, P)$ of $\E'$ to the object
$(h^2,\Sigma_{\delta^\ra_h} P)$ of $Rel(\E')$. Then
$\Sigma_{\delta^\ra}$ is a fibred functor from $cod\circ U'$ to
$cod\circ Rel(U')$.
\end{lemma}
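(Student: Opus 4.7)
The plan is to verify the two defining conditions of a fibred functor above $\B$: that $\Sigma_{\delta^\ra}$ commutes with the projections $cod\circ U'$ and $cod\circ Rel(U')$ down to $\B$, and that it preserves cartesian morphisms for the respective composite fibrations. Since the composition of fibrations is a fibration, both $cod\circ U'$ and $cod\circ Rel(U')$ are fibrations.

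First, since $U$ is a bifibration and bifibrations are preserved under change of base, $U'$ is a bifibration as well. Applying Lemma~\ref{lem:natadj} to $U'$ and the natural transformation $\delta^\ra:Id_{\B^\ra}\ra\Delta^\ra$ then yields $\Sigma_{\delta^\ra}$ as a functor from $\E'$ to $Rel(\E')$ lying above the identity on $\B^\ra$. Since both composite fibrations are obtained by further post-composing with $cod$, commutativity with the projections to $\B$ follows immediately.

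Next, I would unfold the cartesian morphism above $\beta:J\ra I$ in $\B$ with codomain $(h:UP\ra I, P)$ in $\E'$. Using the standard fact that cartesian morphisms in a composite fibration arise as successive cartesian lifts in each factor, this morphism has underlying pieces the pullback square for $h$ along $\beta$ in $\B^\ra$, with projections $\pi_1:X\times_I J\ra X$ (where $X=UP$) and $\pi_2:X\times_I J\ra J$, together with the cartesian morphism $\pi_1^\S_P:\pi_1^*P\ra P$ in $U$. Applying $\Sigma_{\delta^\ra}$ produces a morphism from $(\pi_2,\Sigma_d\pi_1^*P)$ to $(h,\Sigma_{d'}P)$, where $d:X\times_I J\ra (X\times_I X)\times_I J$ and $d':X\ra X\times_I X$ are the diagonals in the fibred products over $J$ and $I$, respectively. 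An analogous unfolding shows that the cartesian morphism in $cod\circ Rel(U')$ above $\beta$ with the same codomain has domain $(\pi_2,(\pi_1'')^*\Sigma_{d'}P)$, where $\pi_1'':(X\times_I X)\times_I J\ra X\times_I X$ is the top component of the cartesian lift of $\beta$ at $h^2$.

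The main obstacle is thus to establish the natural isomorphism $\Sigma_d\pi_1^*P\cong(\pi_1'')^*\Sigma_{d'}P$ identifying these two morphisms. This is precisely the Beck-Chevalley condition for $U$ applied to the square
\[\xymatrix{X\times_I J \ar[r]^{\;\;\;d} \ar[d]_{\pi_1} \lpbc & (X\times_I X)\times_I J \ar[d]^{\pi_1''} \\ X \ar[r]_{d'} & X\times_I X}\]
in $\B$, and so the residual task is to verify that this square is a pullback. For this, one identifies $(X\times_I X)\times_I J$ with $(X\times_I J)\times_J(X\times_I J)$ via the universal property of fibred products; under this identification $\pi_1''$ becomes a first projection and $d$ becomes the diagonal of $X\times_I J$ over $J$, after which the square is easily seen to be a pullback by direct inspection. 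Invoking the Beck-Chevalley hypothesis on $U$ then supplies the required isomorphism and completes the argument.
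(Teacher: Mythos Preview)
Your proposal is correct and follows essentially the same route as the paper's proof: both unfold the cartesian lift above $\beta$ as a pullback square in $\B^\to$ composed with a $U$-cartesian lift, reduce preservation of cartesianness to a Beck--Chevalley isomorphism, and then verify that the relevant square in $\B$ is a pullback. The only cosmetic differences are that the paper cites an exercise in Mac~Lane for the pullback verification where you argue directly via the identification $(X\times_I X)\times_I J\cong(X\times_I J)\times_J(X\times_I J)$, and that you explicitly check commutativity over $\B$ via Lemma~\ref{lem:natadj} whereas the paper leaves this implicit.
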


\begin{proof}
Let $(h : UP \to I,P)$ be an object of $\E'$, let $f:J \ra I$ be a
morphism of $\B$, and consider the pullback $(X, \psi:X \ra J, \phi:X
\ra UP)$ of $h$ along $f$.  Then the cartesian morphism above $f$ with
codomain $(h,P)$ is the morphism $(f, \phi^\S_P) : (\psi,\phi^*P) \ra
(h, P)$ in $\E'$. We must show that the morphism $\Sigma_{\delta^\ra}
(f, \phi^\S_P) : \Sigma_{\delta^\to_\psi}\phi^*P \to
\Sigma_{\delta^\to_h}P$ is cartesian.

We begin by considering the morphism $(f,\phi):\psi \ra h$ in
$\B^\ra$. The fact that $(X,\psi,\phi)$ is a pullback means that
$(f,\phi)$ is cartesian. Because $\Delta^\ra$ is fibred, we know that
$\Delta^\ra (f,\phi)$ is cartesian. Thus, if $\Delta^\to (f,\phi) =
(f, \alpha)$, then $(X_f X, \psi^2, \alpha)$ is the pullback of $h^2$
along $f$.
From part b of Exercise 8 on page 72 of~\cite{mac71},
together with the facts that $(X, \psi, \phi)$ and $(X_f X, \psi^2,
\alpha)$ are pullbacks, we have that $(X,\delta^\ra_\psi, \phi)$ is
the pullback of $\delta^\ra_h$ along $\alpha$. The Beck-Chevalley
condition thus ensures that $\Sigma_{\delta^\ra_\psi} \phi^*P$ is
isomorphic to $\alpha^*\Sigma_{\delta^\ra_h}P$.  Letting $Q$ stand for
$\Sigma_{\delta^\ra_h} P$, we therefore have that $\Sigma_{\delta^\ra}
(f, \phi^\S_P)$ is $(\alpha^\S_Q, (f,\alpha))$. Since $\alpha^\S_Q,$ is
cartesian with respect to $U$ by definition, and since $(f,\alpha)$ is
cartesian with respect to $\mathit{cod}$, we have that
$\Sigma_{\delta^\ra} (f,\phi^\S_P)$ is cartesian with respect to
$\mathit{cod}\circ Rel(U')$, as required.
\end{proof}

\begin{lemma}\label{lem:wqcer}
  Let $U : \E \to \B$ be a bifibration such that the Beck-Chevalley
  condition holds and $\B$ has pullbacks.  Let $Rel(U):Rel(\E)\to\B$
  be the relational QCE derived from $U$, with equality functor
  $Eq:\B\to Rel(\E)$ and quotient functor $Q:Rel(\E)\to\B$.  Then the
  bifibration $Rel(U'):Rel(\E')\to\B^\to$ obtained from the following
  change of base
  \[\xymatrix{
    \ar[d]_{Rel(U')} Rel(\E') \ar[r] \lpbc & \E' \ar[d]^{U'}\\
    \B^\to \ar[r]_{\Delta^\to} & \B^\to }\]
  is a weak QCE above $cod$. In addition, for any $I$ in $\B$,
  $Rel(U')_I \cong Rel(U/I)$.
\end{lemma}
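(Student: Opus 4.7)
The plan is to exhibit $Eq_{\mathit{cod}} := \Sigma_{\delta^\to} K_{U'} : \B^\to \to Rel(\E')$ as a fibred full section of $Rel(U')$ whose fibrewise restrictions admit left adjoints, and then to glue these fibrewise left adjoints into a single (necessarily vertical-unit) left adjoint of $Eq_{\mathit{cod}}$ via Lemma~\ref{lem:weakfibadj}. This witnesses $Rel(U')$ as a weak QCE above $\mathit{cod}$; the fibre identification $Rel(U')_I \cong Rel(U/I)$ will emerge as an intermediate step and supplies the fibrewise adjoints needed to invoke Lemma~\ref{lem:weakfibadj}.

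First I would verify the full-section property. Change of base preserves bifibrations, truth functors, and the Beck-Chevalley condition, so $U'$ is a bifibration satisfying Beck-Chevalley with truth functor $K_{U'}$; and $K_{U'}$ is fibred from $\mathit{cod}$ to $\mathit{cod}\circ U'$ because truth functors preserve cartesian morphisms (reindexing preserves terminal objects). Lemma~\ref{lem:carts} gives that $\Sigma_{\delta^\to}$ is fibred from $\mathit{cod}\circ U'$ to $\mathit{cod}\circ Rel(U')$, so the composite $Eq_{\mathit{cod}}$ is fibred. The components of $\delta^\to$ are diagonals into pullbacks, hence monos, so Lemma~\ref{lem:bcmono}(1) makes $\Sigma_{\delta^\to}$ full and faithful; combined with the fullness and faithfulness of $K_{U'}$ this makes $Eq_{\mathit{cod}}$ full and faithful. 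The section identity $Rel(U')\,Eq_{\mathit{cod}} = Id_{\B^\to}$ follows from $U' K_{U'} = Id_{\B^\to}$ together with $Rel(U')\,\Sigma_{\delta^\to} = U'$.

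Next I would establish the fibre identification $Rel(U')_I \cong Rel(U/I)$. Lemma~\ref{lem:canlwfiba} already gives $U'_I = U/I$. The functor $\Delta^\to$ preserves the codomain fibration and restricts on the fibre above $I$ to the product functor $\Delta/I$ on $\B/I$, so change of base of $U'$ along $\Delta^\to$, taken fibrewise above $I$, is precisely change of base of $U/I$ along $\Delta/I$, which is $Rel(U/I)$. Under this identification, the restriction $(Eq_{\mathit{cod}})_I$ matches the equality functor $Eq_{U/I}$ described in Section~\ref{sec:eqU/I}, since both are $\Sigma$ along the fibre diagonal applied to the same truth data. By Lemmas~\ref{lem:lift-qce} and~\ref{lem:tausigma}, $Rel(U/I)$ is a relational QCE, so $Eq_{U/I}$, and hence $(Eq_{\mathit{cod}})_I$, has a left adjoint.

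Finally, with $Eq_{\mathit{cod}}$ a fibred functor from $\mathit{cod}$ to $\mathit{cod}\circ Rel(U')$ whose every fibrewise restriction has a left adjoint, Lemma~\ref{lem:weakfibadj} (applied with $r = \mathit{cod}$, $q = \mathit{cod}\circ Rel(U')$, $e = Eq_{\mathit{cod}}$) produces a global left adjoint $Q_{\mathit{cod}}:Rel(\E')\to\B^\to$ with vertical unit, so $Rel(U')$ is a weak QCE above $\mathit{cod}$. The delicate part is the compatibility check in the previous paragraph: verifying that the two successive changes of base commute with restriction to fibres, that $\delta^\to$ restricts over $I$ to $\delta/I$, and that the equality and Beck-Chevalley structure transferred through these constructions matches the data used in Section~\ref{sec:indcoind}. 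Each of these is a formal manipulation with pullbacks and universal properties, but they must be done with care to ensure that the fibrewise quotients from Section~\ref{sec:indcoind} really are adjoints to $(Eq_{\mathit{cod}})_I$ as demanded by Lemma~\ref{lem:weakfibadj}.
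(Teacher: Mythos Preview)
Your proposal is correct and follows essentially the same route as the paper: define the equality functor as $\Sigma_{\delta^\to} K_{U'}$, use Lemma~\ref{lem:carts} together with fibredness of the truth functor to see it preserves cartesian morphisms, identify the fibres $Rel(U')_I$ with $Rel(U/I)$ via $\Delta^\to i_I = \Delta/I$ and Lemma~\ref{lem:canlwfiba}, import the fibrewise quotient adjoints from Section~\ref{sec:indcoind}, and glue them with Lemma~\ref{lem:weakfibadj}. You are somewhat more explicit than the paper in spelling out the full-and-faithful and section verifications (invoking Lemma~\ref{lem:bcmono} for the monos $\delta^\to_f$), but the argument is the same.
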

\begin{proof}
  We have the following situation:
  \[\xymatrix{
    Rel(\E') \ar[d]_{Rel(U')} \ar[r] \lpbc &\ar[d]_{U'} \E' \ar[r] \lpbc
    & \E \ar[d]^U\\  
    \B^\to \ar[rd]_{cod} \ar[r]_{\Delta^\to} & \ar[d]_{cod} \B^\to
    \ar[r]_{dom} & \B\\ 
    &\B
  }\]
  Let $I$ be an object of $\B$. First, by the same reasoning as in the
  proof of Lemma~\ref{lem:canlwfiba} we have that $U'_I = U/I$. Now,
  note that if $i_I:\B/I\to\B^\to$ is the inclusion functor, then
  $\Delta^\to i_I = \Delta/I$. Thus, $Rel(U')_I$ is obtained by change
  of base of $U/I$ along the functor $\Delta/I$, i.e., the fibrations
  $Rel(U')_I$ and $Rel(U/I)$ coincide.  Moreover, the equality functor
  $Eq^\to:\B^\to\to Rel(\E')$ is defined by
  $\Sigma_{\delta^\to}K^\to$, where $K^\to$ is the truth functor for
  $U'$. Then $Eq^\to$ restricts to the equality functor $Eq_{U/I}$
  since $\delta/I$ is the restriction of $\delta^\to$ to the
  corresponding fibres. This ensures that the equality functors for
  $Rel(U')_I$ and $Rel(U/I)$ coincide.  Finally, because adjoints are
  defined up to isomorphism, the quotient functors for $Rel(U')_I$ and
  $Rel(U/I)$ coincide. Putting this all together, we have that
  $Rel(U')_I$ and $Rel(U/I)$ are in fact the same QCE.  Now, each
  restriction $Eq^\to_a$ of $Eq^\to$ has a left adjoint $Q^\to_a$.
  Moreover, $Eq^\to$ preserves cartesian morphisms because $K^\to$
  preserves cartesian morphisms by construction, and
  Lemma~\ref{lem:carts} ensures that $\Sigma_{\delta^\to}$ preserves
  cartesian morphisms. By Lemma~\ref{lem:weakfibadj}, we have that
  $Q^\to$ is a left adjoint to $Eq^\to$, and thus that $Rel(U')$ is a
  weak QCE above $\mathit{cod}$.
\end{proof}

The coinduction rule for the mutually recursive data type of {\em
  odds} and {\em evens} in the families fibration shows that fibred
coinduction is applicable in situations where indexed coinduction is
not.

\section{Conclusions, Related Work, and Future Work}\label{sec:conc} 

In this paper, we have extended the fibrational approach to induction
and coinduction pioneered by Hermida and Jacobs, and further developed
by the current authors, in three key directions: we have given sound
coinduction rules for all (unindexed) coinductive types, and we have
extended our results from the unindexed setting to the indexed one to
derive sound induction and coinduction rules for all inductive and
coinductive indexed types. We derived our rules for indexed types
first in the case when indexing is modelled by the codomain fibration,
and then in the case when it is modelled by an arbitrary fibration.

The work of Hermida and Jacobs is most closely related to ours, but
there is, of course, a large body of work on induction and coinduction
in a broader setting. In dependent type theory, for example, data
types are usually presented along with elimination rules that are
exactly induction rules. Along these lines,~\cite{pm89} has heavily
influenced the development of induction in Coq. Another important
strand of related work concerns inductive families and their induction
rules~\cite{dyb94}. On the coinductive side, papers such
as~\cite{am89,rut00,tr98} have had immense impact in bringing
bisimulation into the mainstream of theoretical computer science.

There are several directions for future work. First, we would like to
explore more applications of the results in Sections~\ref{sec:fibind}
and~\ref{sec:fibcoind}.  More generally, we would like to exploit the
predictive power of our theory to provide induction and coinduction
rules for advanced data types --- such as inductive recursive types
--- for which these rules are not discernible by sheer intuition. In
such circumstances, our generic fibrational approach should provide
rules whose use is justified by their soundness proofs. In a different
direction, we would like to see our induction and coinduction rules
for advanced data types incorporated into implementations such as Agda
and Coq.

\vspace*{0.1in}

\noindent
{\bf Acknowledgement} We thank the reviewers for their helpful
comments and suggestions.

\bibliographystyle{plain}

\begin{thebibliography}{}

\end{thebibliography}


\begin{thebibliography}{99999}

\bibitem{am89} P.\ Aczel and P.\ Mendler. A Final Coalgebra
  Theorem. {\em Proceedings, Category Theory and
    Computer Science}, pp. 357--365, 1989.

\bibitem{dyb94}
P.\ Dybjer. Inductive Families.
{\em Formal Aspects of Computing} 6(4), pp. 440--465, 1994.

\bibitem{fum12} C.\ Fumex. Induction and Coinduction Schemes in
  Category Theory. PhD Thesis. University of Strathclyde, 2012.

\bibitem{fgj11} C.\ Fumex, N.\ Ghani, and P.\ Johann. Indexed
  Induction and Coinduction, Fibrationally.  {\em Proceedings,
    Conference on Algebra and Coalgebra on Computer Science},
    pp. 176--191, 2011.

\bibitem{gjf10} N.\ Ghani, P.\ Johann, and C.\ Fumex.
Fibrational Induction Rules for Initial Algebras. {\em Proceedings,
  Computer Science Logic}, pp. 336--350, 2010. 

\bibitem{gjf11} N.\ Ghani, P.\ Johann, and C.\ Fumex. Generic
  Fibrational Induction. {\em Logical Methods in Computer
    Science} 8(2), 2012. 

\bibitem{her93} 
C.\ Hermida.  Some properties of Fib as a Fibred 2-Category.  {\em
  Journal of Pure and Applied Algebra} 134(1), pp. 83--109, 1993. 

\bibitem{her93b}
C.\ Hermida. Fibrations, Logical Predicates and Related
Topics. Dissertation, University of Edinburgh, 1993.

\bibitem{hj98} 
C.\ Hermida and B.\ Jacobs. Structural Induction and Coinduction in a
Fibrational Setting. {\em Information and Computation} 145,
pp. 107--152, 1998.  

\bibitem{hh06} P.\ G.\ Hancock and P.\ Hyvernat. Programming
  Interfaces and Basic Topology.  {\em Annals of Pure and Applied
    Logic} 137(1-3), pp. 189--239, 2006.

\bibitem{jac91}
B. Jacobs, Categorical Type Theory. PhD Thesis, University of
Nijmegen, 1991.  

\bibitem{jac99} 
B.\ Jacobs. \emph{Categorical Logic and Type Theory}. Studies in Logic
and the Foundations of Mathematics, Volume 141, 1999. 

\bibitem{jac93} 
B.\ Jacobs. Comprehension Categories and the Semantics of Type
Dependency. {\em Theoretical Computer Science} 107, pp. 169--207,
1993. 

\bibitem{jac94} 
B.\ Jacobs. Quotients in Simple Type Theory. Mathematics
Institute, 1994.  

\bibitem{mac71} S.\ Mac Lane. {\em Categories for the Working
  Mathematician}. Springer-Verlag, 1971.

\bibitem{ma09} P.\ Morris and T.\ Altenkirch.  Indexed
  Containers. {\em Proceedings, Logic in Computer Science},
  pp. 277--285, 2009.

\bibitem{pm89}
Frank Pfenning and C.\ Paulin-Mohring. 
Inductively Defined Types in the Calculus of Constructions. {\em
  Proceedings, Mathematical Foundations of Programming
  Semantics}, pp. 209--228, 1989. 

\bibitem{pav90}
D.\ Pavlovi\v{c}. {\em Predicates and Fibrations}. Dissertation,
University of Utrecht, 1990.

\bibitem{rut00} J.\ Rutten. Universal Coalgebra: A Theory of
  Systems. {\em Theoretical Computer Science} 249(1), pp. 3--80, 2000.

\bibitem{tr98}
D.\ Turi and J.\ Rutten. On the Foundations of Final Coalgebra
Semantics. {\em Mathematical Structures in Computer Science} 8(5),
pp. 481--540, 1998.

\end{thebibliography}

\end{document}